\newif\ifarxiv
\ifarxiv \documentclass[letterpaper]{article}
\newcommand{\p}[1]{\left( #1 \right)}
\newcommand{\cd}[0]{\cdot}
\newcommand{\abs}[1]{\left \vert #1 \right \vert}
\newtheorem{theorem}{Theorem}
\newtheorem{definition}{Definition}
\newtheorem{hypothesis}{Hypothesis}
\newcommand{\nitem}[0]{\ensuremath{M}}
\newcommand{\nplayer}[0]{\ensuremath{N}}
\newcommand{\weight}[0]{\ensuremath{w}}
\newcommand{\costun}[0]{\ensuremath{c}}
\newcommand{\ntype}[0]{\ensuremath{T}}
\newcommand{\type}[0]{\ensuremath{t}} 
\newcommand{\bias}[0]{\ensuremath{\beta}}
\newcommand{\omt}[1]{}
\newif\ifinformal
\begin{document} 

\title{Private Blotto: Viewpoint Competition with Polarized Agents
}

\author{Kate Donahue\thanks{Department of Computer Science, Cornell University} \and Jon Kleinberg\thanks{Departments of Information and Computer Science, Cornell University}}
\date{} 
\else 
\title{Private Blotto: Viewpoint Competition with Polarized Agents}
\author{
    Written by AAAI Press Staff\textsuperscript{\rm 1}\thanks{With help from the AAAI Publications Committee.}\\
    AAAI Style Contributions by Pater Patel Schneider,
    Sunil Issar,\\
    J. Scott Penberthy,
    George Ferguson,
    Hans Guesgen,
    Francisco Cruz\equalcontrib,
    Marc Pujol-Gonzalez\equalcontrib
}
\affiliations{
    \textsuperscript{\rm 1}Association for the Advancement of Artificial Intelligence\\


    1101 Pennsylvania Ave, NW Suite 300\\
    Washington, DC 20004 USA\\
    proceedings-questions@aaai.org
%
}

\begin{document}
\fi

\maketitle

\begin{abstract} 
Social media platforms are responsible for collecting and disseminating vast quantities of content. Recently, however, they have also begun enlisting users in helping annotate this content - for example, to provide context or label disinformation. However, users may act strategically, sometimes reflecting biases (e.g. political) about the \enquote{right} label.  How can social media platforms design their systems to use human time most efficiently? Historically, competition over multiple items has been explored in the Colonel Blotto game setting\cite{borel1921theorie}. However, they were originally designed to model two centrally-controlled armies competing over zero-sum \enquote{items}, a specific scenario with limited modern-day application. In this work, we propose and study the Private Blotto game, a variant with the key difference that individual agents act independently, without being coordinated by a central \enquote{Colonel}. We completely characterize the Nash stability of this game and how this impacts the amount of \enquote{misallocated effort} of users on unimportant items. We show that the outcome function (aggregating multiple labels on a single item) has a critical impact, and specifically contrast a majority rule outcome (the median) as compared to a smoother outcome function (mean). In general, for median outcomes we show that instances without stable arrangements only occur for relatively few numbers of agents, but stable arrangements may have very high levels of misallocated effort. For mean outcome functions, we show that unstable arrangements can occur even for arbitrarily large numbers of agents, but when stable arrangements exist, they always have low misallocated effort. We conclude by discussing implications our results have for motivating examples in social media platforms and political competition. 

\end{abstract}

%

\section{Introduction}
Over the last several decades, social media platforms have become hubs of information, responsible for collecting and disseminating vast quantities of content. The sheer scale of content means that traditional sources of annotation and curation (e.g. traditional fact-checking) has become borderline infeasible, even as these platforms have become primary sources of information for many people 
 (e.g. see \cite{doi:10.1080/21670811.2013.872420, 10.1111/j.1083-6101.2012.01574.x}). Instead, some of these platforms have turned to other solutions, including using the platforms and users themselves to curate and annotate content. For example, the Community Notes tool on X.com (formerly known as the Birdwatch tool on Twitter \cite{wojcik2022birdwatch}) has the goal of identifying misinformation by allowing X users to vote on \enquote{notes} with added context that are used to annotate posts. 

In this work, we will draw on game-theoretic tools to help analyze scenarios like this, where multiple strategic, biased agents compete over several different items\footnote{Other settings where this occurs include political competition over multiple issues: see Section \ref{sec:examples} for a discussion.}. In particular, over a century ago, before modern game theory was fully established, 
\'Emile Borel proposed a family of related zero-sum games that study a similar problem to ours, in the \emph{centralized} setting. Specifically, this models centrally-coordinated competition using military conflict as a metaphor \cite{borel1921theorie}:

\begin{definition}[Colonel Blotto]
Two players, $A$ and $B$, are competing over $\nitem$ different
\emph{fronts}, with $\nplayer_a, \nplayer_b$ units of
effort at their disposal respectively. A player wins a front if
they allocate more effort to the front than their opponent does,
and each player wishes to win as many
fronts as possible. Are there Nash stable arrangements of effort over
fronts, and if so, which are they?  
\end{definition}

The name ``Colonel Blotto''
comes from the fact that a colonel controls multiple
individual soldiers, which they allocate across the
battlefields in order to serve their overall objective. The Colonel
Blotto game has been the focus of extensive exploration, including
variants that allow for battlefields to have different values, for
effort to be allocated probabalistically, and for smoother utility
functions \cite{golman2009general, hart2008discrete, OSORIO2013164} (see Appendix \ref{app:blottorelated} for more works). One common thread has been the relative scarcity of pure Nash equilibria, which has centered the literature around exploration of mixed Nash equilibria. The game has also found many applications in areas far removed from warfare
\cite{merolla2005play}, such as national politics, but always with centralized entities competing over multiple items. 

\paragraph{\bf Modeling decentralized conflicts.}
However, the Colonel Blotto framing is at odds with the modern type of disaggregated competition, such as our motivating example of social media users labeling items where multiple agents may share similar goals, but are not controlled by a central organizing \enquote{Colonel}. How might we model this type of political {\em viewpoint competition}?

We could imagine that there is a large collection of agents (e.g.  users), each of
whom is interested in taking part in a conflict with $\nitem$ ``items'' (e.g. online posts)
Each agent controls only one unit of effort, and can choose to devote
that effort to one of the items (i.e. labeling a post as misinformation or not).
There is no centralized ``colonel'' to direct the agents, but instead each agent $i$ has one of two {\em types} (represented by real numbers $\beta_a, \beta_b$), which we can think of as
a viewpoint, bias, or political position. 
After each agent chooses an item to engage with, the outcome of
the conflict on each given item is determined by an {\em outcome function}
that takes the multiset of types at that item and determines a
real-valued outcome. We will also include a positive penalty $\costun$ for leaving an item unlabeled (empty).

An \enquote{outcome function} determines how inputs from multiple users results in a single label for the item. By selecting one outcome function over another, the designer could influence how individual agents choose to exert their effort. Thus, the choice of outcome function will be one of the central focus points of this paper -- it represents one of the few aspects that the designer of the system may have control over (where the designer could be a social media company or political entity running the election, for example).  

We will be particularly interested in two natural
outcome functions for aggregating viewpoints in this setting: 
the {\em median} (in which 
the outcome on an item is the median of the types there) and the
{\em mean} (in which the outcome is the mean of the types).
Agents want the outcomes on each item (even the ones
where they don't participate) to match their types;
thus, each agent experiences a cost equal to the average of 
the distances between the outcome on each item and the agent's type.

We will refer to this type of game as {\em Private Blotto};
like Colonel Blotto, it involves conflict over multiple items,
but it is fundamentally different because it is designed to
model decentralized conflict where each individual agent makes
their own choice about which item 
to participate in\footnote{In the military,
a \emph{private} is an enlisted soldier at the base of the hierarchy.
This reflects our setting, which views the individual
soldiers as the strategic actors, rather than the coordinating colonel
who commands the army.}.
We summarize the discussion above in the following definition.

\begin{definition}[Private Blotto]
Two types of agents, type $A$ and $B$, are competing over $\nitem$ different items, with $\nplayer_a$ agents of type $A$ and $\nplayer_b$ agents of type $B$. Each agent chooses exactly one item to compete in (label), and an outcome function (for example, the median or mean) determines the outcome value on each item. An agent's cost is equal to the average distance between the outcome on each item and the agent's type. Are there Nash stable arrangements of agents over items, and if so, what do they look like?
\end{definition}

For this class of games, we can explore a number of questions. One of the most fundamental contrasts we will show is that the decentralized \emph{Private} Blotto game admits pure Nash equilibria more frequently than the centralized \emph{Colonel} Blotto game: part of our work's contribution will be to characterize when these equilibria occur, and what they look like. Additionally, one vital question is how the choice of outcome function affects the existence of stable arrangements, and how those stable arrangements distribute agents across items. Finally, we will explore how stable arrangements compare according to normative goals of utilizing user time well (minimizing \emph{misallocated effort} of users onto unimportant items). 
We now provide some more detail on settings that can be
modeled by the Private Blotto game, and then we give an
overview of our results.

\subsection{Motivating examples and further related work}\label{sec:examples}

Our Private Blotto formulation finds applicability in numerous modern-day settings. Here, we will describe a few key application areas in more detail. 

Crowdsourcing on social media has become a growing area of societal and academic interest in recent years \cite{yasseri2021can, wojcik2022birdwatch, birdsdontfactcheck, prollochs2022community}. Focusing on Community Notes, users can provide (discrete) labels on tweets, labels which have been shown \cite{birdsdontfactcheck} to have partisan bias.  Empirical studies \cite{birdsdontfactcheck, saeed2022crowdsourced} show most tweets only have 1-2 labels and the modal user submits only one note (median user submits 5), which matches the Private Blotto setting of  discrete labels with bandwidth-limited agents. Other aspects of Community Notes match Private Blotto well: users are given pseudonyms when voting, making coordination between users implausible, and typically see the labels on each tweet before deciding to label it, mirroring how users in Private Blotto determine how to allocate effort based on the existing set of labels \cite{wojcik2022birdwatch}. Because Community Notes users are also X users, they view tweets that aren't already labeled, and may incur some disutility for leaving misinformation unlabeled, motivating Private Blotto's positive cost $\costun$ for leaving an item empty. We describe other related papers in crowdsourcing in Appendix \ref{app:crowdrelated}.

 Separately, our Private Blotto setting also finds applications in political contests or issue-based activism, as well as military engagements, which have both historically been application areas for Colonel Blotto \cite{merolla2005play}. For political contests, we argue that Private Blotto might even be a more natural fit. Here, the $\nitem$ items might represent issues or political campaigns, while the agents might be activist groups or donors, which might share similar goals, but are unable (for logistical or legal reasons) to coordinate their actions. Differing types would reflect differing political leanings, which could be closer or further apart (reflecting the magnitude of the gap in biases). For military engagements, in Private Blotto each agent might be an individual soldier, guerrilla member, or other actor that is acting without coordination from some central organizer. In this way, Private Blotto might naturally model more modern types of asymmetric warfare conflicts, where agents on the same \enquote{side} militarily are of the same type. 

\subsection{Overview of results}

We are primarily interested in which instances of Private Blotto games admit \emph{stable arrangements} where no agents has an incentive to unilaterally change items. We are also curious about the properties of stable arrangements, when they exist -- do they result in agents being spread out across items, or could they involve many agents clustered on a single item?

We will find it useful to divide our analysis of the model into two
main cases, depending on whether there are more agents than items (Section \ref{sec:moreagents}) or more items than agents (Section \ref{sec:feweragent}).

When there are more agents (Section \ref{sec:moreagents}), we will find that the choice of outcome function (mean or median) can produce starkly different results. We can view these results as considering the plane of ($\nplayer_a, \nplayer_b$) pairs and asking whether a stable arrangement must exist for $\nplayer_a$ type $A$ players and $\nplayer_b$ type $B$ players. In the case of the median outcome function, we show that there is a \emph{median-critical region} of bounded size where stable arrangements fail to exist: in particular, this means that given sufficiently many agents, a stable arrangement is always guaranteed to exist. However, these stable arrangements could have almost all agents clustered on a single item. If we take the normative principle that items (representing posts or political issues) should all receive approximately equal levels of attention, such disproportionate levels of agents could be viewed as a high level of \emph{misallocated effort} on the part of agents. 

By contrast, the mean outcome function results in very different results for stable arrangements. In particular, we show that there are arbitrarily many (and arbitrarily large) $(\nplayer_a, \nplayer_b)$ pairs where no stable arrangement exists, showing that stability may be much harder to guarantee. However, we also show that when a stable arrangement exists, it must have all agents split almost exactly evenly across items (up to integer rounding), resulting in a very low level of misallocated effort. 

In Section \ref{sec:feweragent} we turn to the scenario where there are more items than agents: intuitively, these are settings where some items will inevitably be left empty. Here, we show that median and mean outcome functions produce very similar outcomes, which is natural given that the mean and median are identical for small numbers of items. At a high level, while we show that while settings without any stable arrangements can frequently exist, so long as the cost for leaving an item empty is sufficiently high, there is always a stable arrangement where players spread out with exactly one player per item. 

Finally, Section \ref{sec:discussion} concludes and discusses implications that our results may have for our motivating examples in post annotation in social media and political competition over issues. In particular, our results show that the choice of outcome function can dramatically influence how biased agents may choose to expend their effort.

\nocite{borel1921theorie, ahmadinejad2019duels, golman2009general, hart2008discrete, OSORIO2013164, OSORIO2013164, skaperdas1996contest, attackdefense, schwartz2014heterogeneous, kovenock2012coalitional, boix2020multiplayer, anbarci2020proportional, mazur2017partial, hettiachchi2022survey, zhang2017consensus, 6195597, yasseri2021can, wojcik2022birdwatch, birdsdontfactcheck, prollochs2022community, saeed2022crowdsourced, Behnezhad_Dehghani_Derakhshan_Hajiaghayi_Seddighin_2017, adam2021double}

\section{Model}\label{sec:model}

In this section, we make our theoretical model more precise. We assume there are $\nitem$ \emph{items} that $\nplayer$ total \emph{agents} are competing over. Each agent controls exactly 1 unit of effort: they may choose which item to compete in, but may not coordinate with other players. However, agents come in two \emph{types} ($A$ and $B$). Two agents of the same type have perfectly aligned incentives: when present on the same item, they work towards the same outcome, and when on different items, two agents of the same type are interchangeable. Each type has a real-valued \emph{bias} $\beta_a, \beta_b \in \mathbb{R}$ that describes how similar or dissimilar the types are to each other (how polarized the two groups are). For example, $\beta_a = 1, \beta_b = -1$ agents are closer to each other than $\beta_a = 5, \beta_b = -3$.

\subsection{Outcome functions}
Once agents are arrayed on an item, the outcome of the battle is governed by an \emph{outcome function} $f(\cd)$. In this paper, we will focus on two types of outcome functions: \emph{median} outcome and \emph{mean} (proportional) outcome. Given a set of agents $S_i$ on item $i$, the median outcome function returns the median of the biases $\text{med}(\{\bias_{\type}\} \  \vert \  \type \in S_i) $. 
If there are an even number of players on a particular item, then the median function averages together the middle two biases. Given two types of players, the median outcome function is equivalent to \enquote{winner-take-all}, where whichever type dominates the item wins. 

On the other hand, the mean outcome function returns the mean of the biases: 
$\frac{1}{\abs{S_i}}\sum_{\type \in S_i}\bias_{\type}$. This models a scenario where the final outcome of the item depends on the distribution of agent biases, not solely the median agent.

For both mean and median outcome function, we assume agents have cost given by the distance of their bias to the outcome: that is, given outcome function $f(\cd)$ on an item with set of labels $S_i$, then an agent with bias $\bias_t$ has cost for that item of $\abs{f(S_i) - \bias_t}$. 

\subsection{Agent cost}
Even though each agent only participates in a single item, we model the agents as still having preferences over all of the items. This could reflect settings where the agents are social media users who observe multiple posts but only have the bandwidth to provide annotations on a smaller subset, or political actors who have opinions about many topics but only focus their energy on a single issue. Additionally, we assume each agent experiences a cost $\costun\geq 0$ for leaving the item empty, which is independent of agent bias. We include this feature to model settings where agents may choose to leave an item empty (to join a more contested item), but suffer some non-zero cost in doing so.  We can write the total cost as: 
$$\sum_{i \in [\nitem], \abs{S_i}>0} \abs{f(S_i) - \bias_{\type}} + \sum_{i \in [\nitem], \abs{S_i} =0} \costun$$

We will say that an arrangement of agents across items is \emph{stable} if it satisfies Nash stability (no agent can unilaterally decrease its cost). In the online content labeling setting, if an arrangement fails to be Nash stable, this means that some online users would prefer to move which posts they label, or potentially generate more labels --- which could cause unstable cycles where users compete in an arms race to generate more posts.

\begin{definition}[Nash stability (pure)]\label{def:nash}
An arrangement of players on items is (Nash) stable in the Private Blotto game if no agent can reduce its cost by switching from competing in one item to begin competing in another.

\end{definition}

Inherent in this definition of stability is the requirement that agents are \emph{decentralized}: in particular, each agent is deciding which action to take by themselves, without coordinating among other agents of the same (or similar) biases. This definition is a departure from the prior Colonel Blotto literature which allowed multiple agents to be coordinated by a central \enquote{Colonel}. However, Definition \ref{def:nash} is the natural definition of stability to study for more decentralized settings (e.g. crowdsourcing, disaggregated political contests) with self-interested actors. Note that we focus on the most natural question of \emph{pure} Nash equilibria, although exploring mixed Nash equilibria would be an interesting extension for future work.

\section{More agents: $\nplayer \geq \nitem$}\label{sec:moreagents}

In analyzing the Private Blotto game, we will find it helpful to divide our analysis into two main regimes: when there are more agents than items (this section), and when there are fewer agents than items (Section \ref{sec:feweragent}). As related to the examples in Section \ref{sec:examples}, in online crowdsourcing this models scenarios where there are enough online users that they could provide each post with at least one label, and in political issues it could represent the case where there is a relatively small subset of major divisive issues that multiple political actors are debating. Without loss of generality, we will always name the two types so that $\nplayer_a \geq \nplayer_b$ (there are more type $A$ than type $B$ players).

Lemma \ref{lem:nounlabeled} shows the setting with more agents than items is especially clean: so long as the cost for leaving an item empty is sufficiently high, then no item will be left empty, and all results will be completely independent of agent biases $\bias_i$.

\begin{restatable}{lemma}{nounlabeled}
\label{lem:nounlabeled}
If there are more agents than items ($\nplayer \geq \nitem$) and the cost for leaving a item empty is sufficiently high, then no item will be left empty, regardless of if median or mean outcome function is used. Specifically, this occurs when:  
 $$\costun\geq 0.5 \cd \abs{\bias_a - \bias_{b}}$$ 
Moreover, agent strategy becomes independent of biases $\bias_a, \bias_b$ and relies solely on the number of agents of each type on each item, $\{a_i, b_i\}, \ i \in [\nitem]$. 
\end{restatable}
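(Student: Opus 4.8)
The plan is to argue by contradiction. Suppose some arrangement is Nash stable yet leaves an item $j$ empty (this forces $\nitem \geq 2$, since with $\nitem=1$ and $\nplayer\geq 1$ no item can be empty). Since $\nplayer \geq \nitem$ and item $j$ carries no agent, all $\nplayer$ agents sit on at most $\nitem - 1$ items, so by pigeonhole some item $k$ carries at least two agents; call its agent set $S_k$. I would then show that any one agent $i \in S_k$ can move to the empty item $j$ without increasing its cost, contradicting stability. Only items $j$ and $k$ are affected: item $j$ goes from empty (charging $i$ the penalty $\costun$) to the singleton $\cb{i}$ (charging $i$ exactly $\abs{f(\cb{i}) - \bias_{\type_i}} = 0$), while item $k$ stays nonempty because $\abs{S_k} \geq 2$. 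Hence the change in agent $i$'s cost is
\[
\Delta \;=\; \Big(\abs{f(S_k \setminus \cb{i}) - \bias_{\type_i}} - \abs{f(S_k) - \bias_{\type_i}}\Big) \;-\; \costun .
\]

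\textbf{Core inequality.} The crux is to show, for both the median and the mean, that
\[
\abs{f(S_k \setminus \cb{i}) - \bias_{\type_i}} - \abs{f(S_k) - \bias_{\type_i}} \;\le\; 0.5\cd\abs{\bias_a - \bias_b}.
\]
I would prove this from two facts. First, since only the two bias values $\bias_a, \bias_b$ occur, the outcome $f(S)$ of any nonempty $S$ lies in the interval spanned by $\bias_a$ and $\bias_b$ (and for the median it equals $\bias_a$, $\bias_b$, or their midpoint). Second, deleting one element from a set of size at least $2$ shifts the outcome by at most $0.5\cd\abs{\bias_a - \bias_b}$: for the mean this follows from the identity $f(S\setminus\cb{i}) - f(S) = \frac{f(S) - \bias_{\type_i}}{\abs{S}-1}$ together with the fact that $f(S)$ is at most a $\frac{\abs{S}-1}{\abs{S}}$ fraction of the way from $\bias_{\type_i}$ toward the other bias, so the shift is at most $\frac{\abs{\bias_a - \bias_b}}{\abs{S}} \le 0.5\cd\abs{\bias_a-\bias_b}$; for the median, removing a single element cannot carry it all the way from one bias to the other (a short parity check on the counts of each type on the item), so it moves by at most the half-step to the midpoint. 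Applying the triangle inequality to the bracket in $\Delta$ and then using $\costun \ge 0.5\cd\abs{\bias_a - \bias_b}$ gives $\Delta \le 0$, so the deviation is never harmful — contradicting stability.

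\textbf{The delicate point, and the bias-independence claim.} The one subtlety is that $\Delta$ can equal $0$ rather than being strictly negative in knife's-edge configurations (e.g. $\abs{S_k}=2$ with one agent of each type and $\costun$ exactly at the threshold), so to obtain a strict Nash improvement one either works with $\costun > 0.5\cd\abs{\bias_a-\bias_b}$ or notes that in such borderline cases a spread-out, no-empty arrangement is also stable; I expect the part needing the most care to be verifying the outcome-shift bound uniformly across all set sizes and both aggregation rules, including the even/odd parity cases for the median. For the final assertion, once no item is empty I would substitute the closed-form outcomes: under the median a type-$A$ agent pays $0$, $\abs{\bias_a - \bias_b}$, or $0.5\cd\abs{\bias_a - \bias_b}$ on item $i$ according to whether $a_i > b_i$, $a_i < b_i$, or $a_i = b_i$; under the mean it pays $\frac{b_i}{a_i+b_i}\abs{\bias_a - \bias_b}$ (symmetrically with $a_i$ for type-$B$ agents). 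In every case an agent's total cost equals $\abs{\bias_a - \bias_b}$ times a function of the count profile $\cb{a_i, b_i}_{i \in [\nitem]}$ alone, and since best-response comparisons are invariant under this global positive rescaling (and trivial when $\bias_a = \bias_b$), Nash stability and the identity of the stable arrangements depend only on these counts.
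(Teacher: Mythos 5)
Your proposal is correct and follows essentially the same route as the paper's proof: both reduce the claim to the single deviation of an agent vacating a shared item for the empty one, bound the resulting cost change by $0.5 \cd \abs{\bias_a - \bias_b}$ with the binding case being a two-agent item holding one agent of each type, and derive bias-independence from the same closed-form costs (your mean-removal identity and median parity check are just a tidier packaging of the paper's explicit case enumeration). The knife's-edge indifference at $\costun = 0.5 \cd \abs{\bias_a - \bias_b}$ that you flag is equally present in the paper's argument, whose inequality is likewise non-strict, so your treatment of that point is if anything slightly more careful.
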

Proofs for Lemma \ref{lem:nounlabeled}, as well as for other proofs in this paper, are found in Appendix \ref{app:proofs}. For the rest of this section, unless stated otherwise, we will assume that the preconditions of Lemma \ref{lem:nounlabeled} hold, which  ensures that no item will be left empty. This assumption is mainly made for cleanness of analysis: if it is relaxed, then the value of $\costun$ causes minor changes in the stable arrangements, primarily for small numbers of agents $\nplayer$.

\subsection{Median outcome}\label{sec:moreagentsmedian}

First, in this section we explore Private Blotto with $\nplayer\geq \nitem$ with median outcome function, modeling the case where contests are decided by a winner-take-all outcome. We can view this setting as exploring the $\nplayer_a, \nplayer_b$ plane, studying for which values of $\nplayer_a, \nplayer_b$ a Nash equilibrium exists, as well as constructively producing an example of a stable arrangement. Our results will be a function of the total number of items ($\nitem$), as well as $\nplayer_a, \nplayer_b$, the number of agents of types $A$ and $B$ respectively. (Recall that given Lemma \ref{lem:nounlabeled} all of our results will be independent of the biases $\bias_a, \bias_b$. )

Our main result for this section is Theorem \ref{thrm:medianmoreagents}, which exactly characterizes when a stable arrangement exists for median outcome. Specifically, this occurs whenever the number of agents $\nplayer_a, \nplayer_b$ for each of the types is \emph{not} in the median-critical region (Definition \ref{def:mediancritical}). 

\begin{definition}[Median-critical region]\label{def:mediancritical}
A set of parameters $(\nplayer_a, \nplayer_b)$ is in the median-critical region if they satisfy: 
$$\nplayer_a + \nplayer_b \leq 2 \cd \nitem \text{ and }  \nitem < \nplayer_a \text{ and  } 1 \leq \nplayer_b < \nplayer_a - \nitem$$
and symmetrically if the roles of $\nplayer_a, \nplayer_b$ are reversed. 
\end{definition}

\begin{theorem}\label{thrm:medianmoreagents}
Given more agents than items ($\nplayer_a + \nplayer_b\geq \nitem$), there exists a stable arrangement if and only if $(\nplayer_a, \nplayer_b)$ is \emph{not} in the median-critical region. 
\end{theorem}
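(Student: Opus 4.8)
The plan is to first strip the game down to a purely combinatorial object, then prove the two directions separately; the constructive ``if'' direction is where the real work lies.

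\textbf{Reduction.} By Lemma~\ref{lem:nounlabeled} we may assume every item is nonempty and that play depends only on the counts $\{a_i,b_i\}$. Writing $d=|\bias_a-\bias_b|$, a type-$A$ agent pays $0$ on an item it strictly wins, $d/2$ on a tied item ($a_i=b_i$), and $d$ on a lost item, so type $A$ simply wants to maximize its \emph{control} $\Phi_A:=\#\{i:a_i>b_i\}+\tfrac12\#\{i:a_i=b_i\}$ (symmetrically for $B$). I would then record the single deviation rule this yields: moving a type-$A$ agent from item $i$ to item $j\neq i$ strictly helps iff (a) removing an $A$ from $i$ neither lowers $\Phi_A$ nor empties $i$ --- i.e. $a_i-b_i\geq 2$ or $1\leq a_i<b_i$ (call such $i$ \emph{$A$-loose}) --- and (b) adding an $A$ to $j$ raises $\Phi_A$ --- i.e. $a_j-b_j\in\{0,-1\}$ (call such $j$ \emph{$A$-contestable}); the ``does not empty'' clause is exactly where $\costun\geq\tfrac12 d$ is used, since emptying costs $\costun\geq\tfrac12 d$, which already meets the largest attainable gain. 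Symmetrically for $B$. Hence an arrangement is stable iff neither type has an $A$-loose (resp.\ $B$-loose) item together with a \emph{different} contestable item of the same type; in particular, having \emph{no} loose item, or \emph{no} contestable item, for each type suffices.

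\textbf{The ``if'' direction.} Assuming $\nplayer_a\geq\nplayer_b$, ``not median-critical'' splits into three regimes, and I would give an explicit stable arrangement in each. If $\nplayer_b=0$, spread the $A$-agents with at least one per item: every item has $a_i-b_i=a_i\geq 1$, so no item is $A$-contestable and there are no $B$-agents. If the surplus $\nplayer_a-\nplayer_b$ is at most $\nitem-1$ (which also subsumes $\nplayer_a\leq\nitem$), use a \emph{rigid} arrangement in which every item is $(1,0)$, $(0,1)$, or balanced $(t,t)$ with $t\geq 1$: from such an item nobody can leave without emptying it or turning a tie into a loss, so no agent is loose. A short counting check shows a rigid arrangement exists exactly in this surplus range --- put $\nplayer_a-\nplayer_b$ items as $(1,0)$ to absorb the surplus and distribute the remaining $\nplayer_b$ matched pairs over $\nitem-(\nplayer_a-\nplayer_b)$ balanced items (and when $\nplayer_a\leq \nitem$, use $\nplayer_a+\nplayer_b-\nitem$ pairs $(1,1)$ plus singletons). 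Finally, if $\nplayer_a+\nplayer_b$ is large, use instead an arrangement in which \emph{every} item has $|a_i-b_i|\geq 2$: then no item is contestable for either type, so loose agents have nowhere to go. This is realizable when $\nplayer_a+\nplayer_b$ is large enough by giving each item either an $A$-dominant count ($a_i\geq b_i+2$) or a $B$-dominant one and balancing the two sides, with a single ``$a_i=b_i+1$''-flavored buffer item when the minority side is small and odd --- chosen so that its only loose agent is of the majority type and still stuck.

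\textbf{The ``only if'' direction and the main obstacle.} Suppose $(\nplayer_a,\nplayer_b)$ is median-critical and fix any arrangement. Since $\sum_i(a_i-b_i)=\nplayer_a-\nplayer_b$ is large relative to $\nitem$ (the content of the third critical condition), an averaging argument forces either some item with $a_i-b_i\geq 2$ (a loose $A$-agent) or, in the threshold case where every item has $a_i-b_i=1$, a loose $B$-agent inside an $A$-leaning item --- nonempty of $B$ because $\nplayer_b\geq 1$ --- with every item then $B$-contestable. Meanwhile $\nplayer_a+\nplayer_b\leq 2\nitem$ together with $\nplayer_b\geq 1$ prevents the loose agents from being safely isolated: if every item had $|a_i-b_i|\geq 2$ one gets $\nplayer_a+\nplayer_b\geq 2\nitem$, and equality would force $\nplayer_b=0$, a contradiction, so some item is contestable; a brief case analysis (on whether the guaranteed contestable item matches the type of a loose agent and whether it is that agent's own item) produces a strictly improving move, taking care in the boundary cases (surplus exactly $\nitem$, or $\nplayer_a+\nplayer_b$ exactly $2\nitem$) that the minority-type loose agents in majority-leaning items are the ones doing the work. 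The main obstacle is the ``if'' direction: one must hit \emph{exactly} $\nplayer_a$ and $\nplayer_b$ agents on \emph{exactly} $\nitem$ nonempty items, and feasibility of the rigid and the $|a_i-b_i|\geq2$ constructions is tight precisely at the boundary of the median-critical region, so the regimes do not glue together automatically and each needs its own construction plus a stability check against both $A$- and $B$-deviations; the ``only if'' direction is comparatively routine once the deviation rule and the two counting inequalities are in hand.
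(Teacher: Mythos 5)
Your architecture is essentially the paper's: the reduction via Lemma \ref{lem:nounlabeled} to bias-independent counting, a margin-$\geq 2$ construction for large $\nplayer$ (the paper's Lemma \ref{lem:median2mstable}), a ties-plus-singletons construction for $\nplayer_a+\nplayer_b\leq 2\cd\nitem$ (your ``rigid'' arrangements are Lemma \ref{lem:medianstableequal} in different clothing), and a summing argument inside the critical region (Lemma \ref{lem:medianneverstable}). Your explicit ``loose/contestable'' deviation rule is a correct and cleaner packaging of the case analysis the paper does implicitly.

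The genuine gap sits on the line $\nplayer_b=\nplayer_a-\nitem$ (surplus exactly $\nitem$). Definition \ref{def:mediancritical} uses the strict inequality $\nplayer_b<\nplayer_a-\nitem$, so these points are \emph{not} median-critical and your ``if'' direction must supply a stable arrangement for them; but none of your three regimes reaches them --- your own counting shows rigid arrangements force surplus $\leq\nitem-1$ once $\nplayer_b\geq 1$, and the margin-$\geq 2$ packing needs $\nplayer_a+\nplayer_b\geq 2\cd\nitem$ plus parity. The hole cannot be patched as stated: writing $d=\abs{\bias_a-\bias_b}$, for $\nitem=2$ and $(\nplayer_a,\nplayer_b)=(3,1)$ every nonempty arrangement admits an improving move (in $(2,1),(1,0)$ the $B$ agent moves to the singleton item and gains $d/2$; in $(2,0),(1,1)$ an $A$ agent moves to the tied item and wins it), so no stable arrangement exists even though $(3,1)$ lies outside the region as written. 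You half-sense this by routing ``surplus exactly $\nitem$'' into your only-if analysis, but then you are proving instability at points the theorem declares stable. Nor does relaxing the definition to $\nplayer_b\leq\nplayer_a-\nitem$ (which the paper's own $\nitem=3$ example $\{(4,1),(5,1)\}$ presupposes) close matters: at $(\nitem,\nplayer_a,\nplayer_b)=(4,6,2)$, also on this line, the arrangement $(2,0),(2,0),(2,0),(0,2)$ \emph{is} stable, since no item is contestable for either type. This same example falsifies your only-if step ``if every item had $\abs{a_i-b_i}\geq 2$ one gets $\nplayer_a+\nplayer_b\geq 2\cd\nitem$, and equality would force $\nplayer_b=0$'': equality only forces every item to be $(2,0)$ or $(0,2)$, which accommodates $\nplayer_b>0$. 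So before your two directions can meet you must determine exactly which surplus-$=\nitem$ points are stable (it reduces to feasibility of the all-margins-$\geq 2$ packing, i.e.\ $\nplayer_a+\nplayer_b=2\cd\nitem$ with both counts even) and reconcile that with the statement. A minor slip elsewhere: in your buffer item $(k+1,k)$ the loose-but-stuck agent is of the \emph{minority} type, not the majority type.
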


We will prove Theorem \ref{thrm:medianmoreagents} through several sub-lemmas which collectively handle different values of $(\nplayer_a, \nplayer_b)$. First, Lemma \ref{lem:medianneverstable} proves that any set of parameters within the median-critical region must always result in an unstable arrangement. 

\begin{restatable}{lemma}{medianneverstable}
\label{lem:medianneverstable}
Given median outcome and cost satisfying Lemma \ref{lem:nounlabeled}, for any set of biases $\beta_a, \beta_b$, for all instances within the median-critical region, there is never a stable arrangement of agents onto items.  
\end{restatable}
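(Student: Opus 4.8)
The plan is to use Lemma~\ref{lem:nounlabeled} to pass to a purely combinatorial picture: no item is empty, and with the median outcome the value on item $i$ is $\bias_a$ if $a_i>b_i$, $\bias_b$ if $b_i>a_i$, and $\tfrac12(\bias_a+\bias_b)$ if $a_i=b_i$ --- call these items \emph{$A$-won}, \emph{$B$-won}, and \emph{tied}, with counts $k_A,k_T,k_B$. The first thing to record is that every type-$A$ agent incurs exactly the same total cost, $\abs{\bias_a-\bias_b}\p{k_B+\tfrac12 k_T}$, and every type-$B$ agent incurs $\abs{\bias_a-\bias_b}\p{k_A+\tfrac12 k_T}$; both depend only on the partition of the items into the three classes, not on which agent sits where. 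Hence a stable arrangement is one in which no unilateral move changes this partition favourably: a type-$A$ agent would move only to strictly increase $\Phi:=k_A+\tfrac12 k_T$, and a type-$B$ agent only to strictly decrease it.

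Next I would tabulate which single-agent moves change $\Phi$. Removing one type-$A$ agent changes its item's class only if that item was $A$-won by a margin of exactly one (it becomes tied) or was tied (it becomes $B$-won), in both cases lowering $\Phi$ by $\tfrac12$; adding one type-$A$ agent changes an item's class only if it was tied (it becomes $A$-won) or $B$-won by a margin of exactly one (it becomes tied), in both cases raising $\Phi$ by $\tfrac12$; the symmetric statements hold for type $B$. Call an agent \emph{spare} if its departure leaves its item's class unchanged (a type-$A$ agent on an $A$-won item of margin $\ge 2$ or on any $B$-won item, and symmetrically). From the table, a profitable move for a type-$A$ agent exists exactly when there is both a spare type-$A$ agent and some \emph{other} item that is tied or $B$-won by margin exactly one; and symmetrically for type $B$.

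Now suppose for contradiction that a stable arrangement exists with $(\nplayer_a,\nplayer_b)$ in the median-critical region, so $\nplayer_a+\nplayer_b\le 2\nitem$, $\nitem<\nplayer_a$, and $1\le\nplayer_b<\nplayer_a-\nitem$; in particular $\nplayer_a-\nplayer_b>\nitem$ and there is at least one type-$B$ agent. I would then argue in three steps. (i) \emph{No tied item}: a tied item is a target for every spare type-$A$ agent, so stability would force every $A$-won item to have margin exactly one and every $B$-won item to contain no type-$A$ agent, giving $\nplayer_a-\nplayer_b=k_A-\sum_{B\text{-won}}b_i\le k_A\le\nitem$, a contradiction. (ii) So all items are $A$-won or $B$-won; from $\nplayer_a-\nplayer_b>\nitem=k_A+k_B$ together with the fact that each $B$-won item contributes at least $1$ to $\nplayer_b-\nplayer_a$ on its side, an averaging argument produces an $A$-won item of margin $\ge 2$, whose spare type-$A$ agents then force (no target allowed) that every $B$-won item has margin $\ge 2$. (iii) Since $\nplayer_b\ge 1$ there is always a spare type-$B$ agent, and, ruling out the one way an $A$-won item of margin one could survive --- it would have to be a single type-$A$ agent, in which case a type-$B$ agent could move in and tie it, and otherwise it carries $\ge 3$ agents and a short count already exceeds $2\nitem$ --- stability forces every $A$-won item to have margin $\ge 2$ as well. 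Hence every item carries at least two agents, so $\nplayer_a+\nplayer_b\ge 2\nitem$; with $\nplayer_a+\nplayer_b\le 2\nitem$ this forces every item to consist of exactly two agents of a single type.

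This ``pure-pair'' configuration on the boundary $\nplayer_a+\nplayer_b=2\nitem$ is the step I expect to be the main obstacle: the counting contradiction that disposes of every $\nplayer_a+\nplayer_b<2\nitem$ instance no longer bites, so one must argue directly --- tracking where the (now paired) type-$B$ agents sit and using the move table together with the region inequalities $\nitem<\nplayer_a$ and $\nplayer_b<\nplayer_a-\nitem$ --- to exhibit a profitable relocation (or to show that such parameters fall outside the region). Since $k_B=0$ there would force $\nplayer_b=0$, the only nontrivial residue is $k_B\ge 1$, and that is where I would concentrate the argument; every other parameter range is closed off by the counting bound above.
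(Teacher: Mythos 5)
Your argument tracks the paper's proof closely in substance: your move table is exactly the paper's three unstable patterns (a spare agent moving onto a tied item, a spare agent moving onto an item its type loses by margin one, and a losing agent tying an item won by margin one), and your steps (i)--(iii) are the same counting arguments the paper runs by summing $a_i - b_i$ over items and contradicting $\nplayer_b < \nplayer_a - \nitem$. The bookkeeping via the potential $\Phi$ and the notion of a ``spare'' agent is cleaner and more systematic than the paper's case-by-case presentation, but it is not a different proof.

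The residual case you flag is a genuine gap, and you are right that it is the crux --- but it cannot be closed, because the statement fails there. Your steps (i)--(iii) force every item to carry exactly two agents of a single type, with $\nplayer_a + \nplayer_b = 2 \cd \nitem$ and $\nplayer_a, \nplayer_b$ both even, and such parameters \emph{do} lie in the median-critical region as defined: take $\nitem = 5$, $\nplayer_a = 8$, $\nplayer_b = 2$, which satisfies $\nplayer_a + \nplayer_b \leq 2\cd\nitem$, $\nitem < \nplayer_a$, and $1 \leq \nplayer_b < \nplayer_a - \nitem = 3$. There the pure-pair arrangement (four items holding two type-$A$ agents each, one holding two type-$B$ agents) is a Nash equilibrium: every item is won by margin two, so no single agent's arrival or departure changes any median, and no unilateral move strictly reduces cost. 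This is consistent with the parenthetical of Lemma~\ref{lem:median2mstable}, which already promises a stable arrangement whenever $\nplayer_a + \nplayer_b = 2\cd\nitem$ with both counts even. The paper's own proof has the same hole: its case analysis only treats arrangements containing some item of margin at most one (a tie, a loss by one, or a win by one) and never addresses the configuration in which every margin is at least two. So what you have isolated is not a step you failed to supply but an error in the statement; the median-critical region needs to exclude (or the proof needs to separately dispose of) the even--even boundary $\nplayer_a + \nplayer_b = 2\cd\nitem$.
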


Next, we will prove that any number of agents $(\nplayer_a, \nplayer_b)$ outside of the median-critical region must have a stable arrangement. Lemma \ref{lem:median2mstable} handles the case where there are more than twice as many agents as there are items, showing that this implies there must always exist a stable arrangement.

\begin{restatable}{lemma}{medianmstable}\label{lem:median2mstable}
Given median outcome and cost satisfying Lemma \ref{lem:nounlabeled}, if $\nplayer_a + \nplayer_b \geq 2 \cd \nitem+1$ (or $\nplayer_a + \nplayer_b =2 \cd \nitem$ with $\nplayer_a, \nplayer_b$ even), then there always exists a stable arrangement. 
\end{restatable}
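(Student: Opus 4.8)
The plan is to reduce everything to the profile of counts and then exhibit a stable profile explicitly. By Lemma~\ref{lem:nounlabeled}, the cost hypothesis forces every item to be non-empty and makes all incentives depend only on the counts $(a_i,b_i)_{i\in[\nitem]}$, with $\sum_i a_i=\nplayer_a$, $\sum_i b_i=\nplayer_b$ and $a_i+b_i\ge 1$. Writing $d:=\abs{\bias_a-\bias_b}$, under the median outcome a type-$A$ agent's cost contribution from an item is $0$, $d/2$, or $d$ according as that item is $A$-won, tied, or $B$-won (symmetrically for type $B$), and Lemma~\ref{lem:nounlabeled} also supplies $\costun\ge d/2$.

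First I would set up a deviation test. When a type-$A$ agent moves from item $i$ to item $j\neq i$, its cost changes by $\delta_i+\delta_j$, where $\delta_i\ge 0$ is the effect of deleting an $A$ from item $i$ and $\delta_j\le 0$ that of adding an $A$ to item $j$. A short check, according to the value of $a_i-b_i$ and of $a_j-b_j$, shows that $\delta_i=0$ unless item $i$ is tied or $A$-won by margin exactly $1$, in which case $\delta_i\ge d/2$ (this is where $\costun\ge d/2$ is used, to absorb the sub-case in which leaving empties the item), and that $\delta_j<0$ only when item $j$ is tied or $B$-won by margin exactly $1$, in which case $\delta_j=-d/2$. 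Hence a type-$A$ agent has a strictly improving move iff its own item has $\delta_i=0$ and some \emph{other} item is tied or $B$-won by margin $1$; symmetrically for type $B$. In particular, if every item has $\abs{a_i-b_i}\ge 2$ then no agent of either type can improve, so the arrangement is stable.

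It then remains to realize, under the hypothesis, a profile with $\abs{a_i-b_i}\ge 2$ on every item (if $\nitem=1$ no agent can move and we are done, so take $\nitem\ge 2$). I would partition the items into $k_B$ items carrying only type-$B$ agents, at least $2$ apiece with any surplus piled on one of them, and $\nitem-k_B$ items carrying only type-$A$ agents, at least $2$ apiece; every item then has margin $\ge 2$. Such a $k_B$ exists iff $2k_B\le\nplayer_b$ and $2(\nitem-k_B)\le\nplayer_a$ for some integer $k_B$, i.e.\ iff $\lfloor\nplayer_a/2\rfloor+\lfloor\nplayer_b/2\rfloor\ge\nitem$, and a short parity case-check shows this is equivalent to the hypothesis of the lemma --- \emph{except} when $\nplayer_b=1$, since a single type-$B$ agent cannot by itself fill an all-$B$ item. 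When $\nplayer_b=1$ the hypothesis forces $\nplayer_a\ge 2\nitem$: if $\nplayer_a\ge 2\nitem+1$ I would instead place the lone $B$ agent on an otherwise all-$A$ item together with $3$ type-$A$ agents (margin $2$) and keep every other margin $\ge 2$, again invoking the sufficient condition; and in the single residual instance $\nplayer_b=1,\ \nplayer_a=2\nitem$ I would use the explicit profile in which one item holds $(2,1)$ and every other holds $(2,0)$, checking with the deviation test that no agent of either type can improve --- its only item of margin below $2$ is $A$-won by exactly $1$, the unique type-$B$ agent already sits there, and all other items are $A$-won by margin $2$, so no improving move is available. This exhausts all cases.

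The main obstacle is precisely this construction step: the sufficient condition coming out of the deviation analysis is clean but not always attainable, so the real work is in pinning down exactly where it fails --- the family $\nplayer_b=1$, together with the tightness of the parity arithmetic at the boundary $\nplayer_a+\nplayer_b\in\{2\nitem,\,2\nitem+1\}$ --- and then patching the leftover instances by a short direct check. The deviation analysis itself is routine; its one subtlety is the role of the bound $\costun\ge d/2$ from Lemma~\ref{lem:nounlabeled} in ruling out the deviation that vacates an item.
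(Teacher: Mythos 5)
Your proposal is correct and follows essentially the same route as the paper: Algorithm~\ref{alg:allocatemore} in the appendix likewise segregates the two types onto disjoint items with at least two agents apiece, and the stability argument is the same observation that when every item is won by a margin of at least $2$, no unilateral move can change any outcome (your explicit $\delta_i+\delta_j$ deviation test just makes this precise). The one substantive difference is that you identify and patch the $\nplayer_b=1$ family via a mixed $(3,1)$ or $(2,1)$ item, whereas the paper's algorithm mishandles this case --- it either tries to allocate nonexistent type-$B$ agents or leaves the lone type-$B$ agent alone on an item, producing a margin-$1$ item that a type-$A$ agent would profitably invade to force a tie --- so your version is in fact the more complete argument for the same construction.
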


Finally, we address the question of $(\nplayer_a, \nplayer_b)$ pairs with $\nplayer_a + \nplayer_b \leq 2 \cd \nitem$ (so that they are not addressed by Lemma \ref{lem:median2mstable}), but which also do not fall in the median-critical region (so they are not addressed by Lemma \ref{lem:medianneverstable}). Lemma \ref{lem:medianstableequal} examines this case and constructively shows that it is always possible to find a stable arrangement of agents onto items.

\begin{restatable}{lemma}{medianstableequal}
\label{lem:medianstableequal}
Given median outcome and cost satisfying Lemma \ref{lem:nounlabeled}, any number of agents ($\nplayer_a, \nplayer_b$) with $\nplayer_a + \nplayer_b \leq 2 \cd \nitem$ (besides those in the median-critical region) always has a stable arrangement. 
\end{restatable}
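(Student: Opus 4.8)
The goal is to exhibit, for every pair $(\nplayer_a,\nplayer_b)$ with $\nplayer_a \geq \nplayer_b$, $\nplayer_a+\nplayer_b \geq \nitem$, $\nplayer_a + \nplayer_b \leq 2\nitem$, and $(\nplayer_a,\nplayer_b)$ not in the median-critical region, an explicit stable arrangement. The constraint $\nplayer_a + \nplayer_b \le 2\nitem$ together with ``not median-critical'' means we are \emph{not} in the case $\{\nitem < \nplayer_a \text{ and } 1 \le \nplayer_b < \nplayer_a - \nitem\}$; so either $\nplayer_a \le \nitem$, or $\nplayer_b = 0$, or $\nplayer_b \ge \nplayer_a - \nitem$ (i.e. $\nplayer_b$ is ``large enough'' relative to the excess of $\nplayer_a$ over $\nitem$). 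I would split into these cases and in each case build an arrangement and verify Nash stability. Recall that by Lemma~\ref{lem:nounlabeled} no item is empty and only the counts $(a_i,b_i)$ matter; under the median (winner-take-all) outcome, an item's outcome is $\bias_a$ if $a_i > b_i$, $\bias_b$ if $b_i > a_i$, and $\tfrac{\bias_a+\bias_b}{2}$ if $a_i=b_i$. An $A$-agent strictly prefers an item it ``wins'' or ``ties'' over one it ``loses,'' and is indifferent among won items (and among tied items); the incentive to deviate is therefore entirely about whether a move flips an item's winner (or creates/destroys a tie) in the deviator's favor, net of what happens to the item it leaves.

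The construction I would use: distribute the $A$-agents as evenly as possible across a set of items so that $A$ ``controls'' (strict majority) as many items as possible, and pack the $B$-agents to control the remaining items when possible, otherwise to tie items. Concretely, when $\nplayer_a \le \nitem$ I would put one $A$-agent on $\nplayer_a$ distinct items and then place the $\nplayer_b$ $B$-agents so as not to overturn any of those — e.g. pair each $B$ with an $A$ (creating ties) or, when $\nplayer_b \le \nitem - \nplayer_a$, give $B$ its own items; one checks no agent gains by moving because a lone agent moving onto a singleton opposite-type item only creates a tie (no improvement for the mover once you account for the item it vacated becoming worse). When $\nplayer_a > \nitem$ but $\nplayer_b \ge \nplayer_a - \nitem$, the excess $\nplayer_a - \nitem$ of $A$-agents must double up; I would place them so that every item has $a_i \ge 1$, the ``doubled'' $A$ items safely stay $A$-controlled, and the $B$-agents (being numerous enough) can be used to tie or win the remaining items, again leaving no profitable single-agent deviation. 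The $\nplayer_b = 0$ case is immediate: spread the $A$-agents arbitrarily (every item outcome is $\bias_a$, every $A$-agent has cost $0$, nothing to gain).

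The main obstacle will be the case analysis bookkeeping around \emph{ties} and around the boundary $\nplayer_b \approx \nplayer_a - \nitem$: I need the arrangement's parity/counts to line up so that a $B$-agent cannot profitably consolidate (moving from a tied item to break a tie elsewhere into a $B$-win) and, simultaneously, an $A$-agent cannot profitably move off a doubled item onto a $B$-controlled or tied item to flip it. The key structural fact that makes this work — and the lemma statement essentially asserts it is exactly the complement of the median-critical region — is that the median-critical conditions are precisely the ``mismatch'' regime where $A$ is forced to waste agents doubling up while $B$ is too sparse to be pinned down, creating a perpetual incentive for some $B$-agent to chase the item that $A$ has thinned out; outside that region one of the above three slack conditions holds and can be leveraged. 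I would verify stability by a short uniform argument: in the constructed arrangement every item is either $A$-controlled with $a_i \ge 2$ (robust), a $1$–$1$ tie, or $B$-controlled/tied, and for each agent type I check the at most two relevant deviation moves (onto a tied item, onto an opposite-controlled item) and show the vacated-item loss always weakly dominates any gain.
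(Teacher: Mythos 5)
There is a genuine gap: your proposed construction for the case $\nplayer_a > \nitem$ is not Nash stable. You place the excess $\nplayer_a - \nitem$ type-$A$ agents by ``doubling up'' on some items (giving those items $a_i = 2, b_i = 0$) and then use the $B$ agents to tie or win the remaining items. But an item with $a_i = 2, b_i = 0$ does not pin its agents: if one of its two $A$ agents leaves, the item is still an $A$-win, so that agent pays nothing for leaving and can gain $\tfrac{1}{2}\abs{\bias_a - \bias_b}$ by joining any $1$--$1$ tied item and flipping it to an $A$-win. Concretely, for $\nitem = 4$, $\nplayer_a = 5$, $\nplayer_b = 2$ (which is outside the median-critical region), your arrangement is $(2,0), (1,1), (1,1), (1,0)$, and the surplus $A$ agent on the first item strictly profits by moving to the second. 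Your own stability checklist labels ``$A$-controlled with $a_i \ge 2$'' items as robust, but that robustness only means no \emph{opponent} can flip them; it is exactly what frees their own agents to defect. A winner-take-all item only holds its occupants when leaving would degrade its outcome, i.e.\ when the margin is $0$ (a tie) or the item would be left empty.

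This is precisely the structural point the paper's proof is built around: for $\nplayer_a + \nplayer_b \le 2\nitem$ it concentrates \emph{all} of the excess $x = \lceil(\nplayer_a+\nplayer_b-\nitem)/2\rceil$ agents of each type into an exact $x$-vs-$x$ tie on a single item (plus possibly one extra $1$--$1$ tie for parity), with every other item a singleton. Then every non-singleton agent who moves converts its tie into a loss (cost $+\tfrac{1}{2}\abs{\bias_a-\bias_b}$), which exactly cancels the best possible gain elsewhere ($+\tfrac{1}{2}$ from creating or breaking a tie), and singletons cannot move at all given the empty-item cost from Lemma~\ref{lem:nounlabeled}. The non-critical-region hypothesis $\nplayer_b \ge \nplayer_a - \nitem$ is used only to guarantee $x \le \nplayer_b$, i.e.\ that there are enough $B$ agents to form the big tie. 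Margin-$\ge 2$ items are only usable when there are enough agents to make \emph{every} item a margin-$\ge 2$ item with no ties anywhere (the regime of Lemma~\ref{lem:median2mstable}); mixing margin-$2$ items with ties, as your construction does, is exactly the unstable combination. To repair your argument you would need to replace the doubling-up step with the concentrated-tie construction (or something equivalent), at which point you essentially recover the paper's proof.
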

Taken together, these lemmas prove Theorem \ref{thrm:medianmoreagents}, exactly characterizing when a stable arrangement exists in Private Blotto games. In particular, they show that unstable arrangements are relatively \emph{rare}, constrained only to the median-critical region (Definition \ref{def:mediancritical}). For small $\nitem$, this region can be very small. For example, it is empty for $\nitem=2$, implying that there is always a stable arrangement for median outcome function with two items. For $\nitem=3$, the median-critical region contains only two points ($\nplayer_a = 4, \nplayer_b =1$), ($\nplayer_a = 5, \nplayer_b =1$), showing that for almost all ($\nplayer_a, \nplayer_b$) pairs, a stable arrangement must exist.

\subsection{Mean outcome}\label{sec:moreagentsmean}

Next, in this section we will explore the setting where again there are more agents than items ($\nplayer\geq\nitem$), but where instead the mean outcome function is used. At a high level, in Section \ref{sec:moreagentsmedian} we proved that there were only a finite number of pairs ($\nplayer_a, \nplayer_b$) such that no stable arrangement of players onto items existed. By contrast, in this section we will show that for mean outcome function, there are arbitrarily many pairs ($\nplayer_a, \nplayer_b$) with no stable arrangement. 

For illustrative purposes, Figure \ref{fig:stable} numerically explores when a stable arrangement exists for $\nitem =2$ items\ifarxiv \footnote{Code to reproduce figures and numerical examples is available at \url{https://github.com/kpdonahue/private_blotto}.} \else \footnote{Code to reproduce figures and numerical examples will be provided in the final version.} \fi. The axes represent the total number of type $A$ and type $B$ agents that are present, with a red dot appearing at point $(\nplayer_a,\nplayer_b)$ if no possible stable arrangement involving that number of agents exists. Note that the red dots extend for high values of $\nplayer_a, \nplayer_b$, indicating that stable arrangements can fail to exist even for large numbers of players. This is in stark contrast to the median function in Section \ref{sec:moreagentsmedian}, where we showed that a stable arrangement only fails to exist within the (bounded) median-critical region. For example, the corresponding plot to Figure \ref{fig:stable} for median would not have any red dots at all (given that the median-critical region is empty for $\nitem=2$ items).

\begin{figure}
    \centering
    \includegraphics[width=3in]{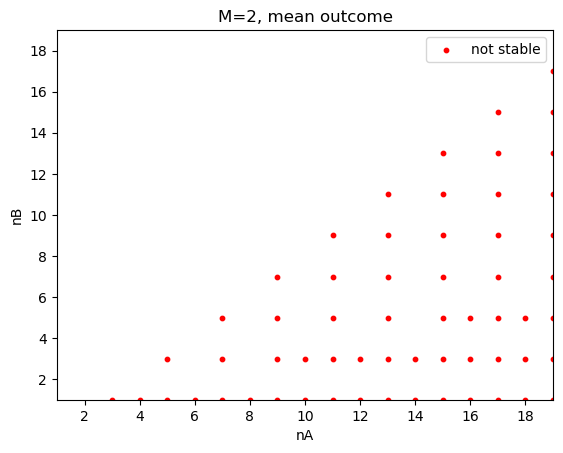}
    \caption{Figure illustrating when stable arrangements of agents onto items fail to exist for mean outcome ($\nitem=2$ items). For clarity, only displayed for $\nplayer_a \geq \nplayer_b$. }
    \label{fig:stable}
\end{figure}
Theorem \ref{thrm:meanbigunstable} formalizes the intuition from Figure \ref{fig:stable}, proving that for any arbitrarily large set of players $\nplayer_a, \nplayer_b$, it is possible to find $\nplayer_a'\geq \nplayer_a, \nplayer_b' \geq \nplayer_b$ such that no stable arrangement of players onto items exists when mean outcome is used. In particular, the construction within Theorem \ref{thrm:meanbigunstable} involves $\nplayer_a = \nplayer_b+2$, with both $\nplayer_a, \nplayer_b$ odd: note that in Figure \ref{fig:stable}, all such points (odd numbers exactly 2 apart) have red dots, indicating no stable arrangement exists. 

\begin{restatable}{theorem}{meanbigunstable}\label{thrm:meanbigunstable}
For every $\nplayer_a\geq \nplayer_b$, there exists an $\nplayer_a' \geq \nplayer_a, \nplayer_b' \geq \nplayer_b$ and $\nitem$ such that there is no stable arrangement of $\nplayer_a', \nplayer_b'$ players onto $\nitem$ items.
\end{restatable}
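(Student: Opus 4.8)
The plan is to find, for any target $(\nplayer_a, \nplayer_b)$, a pair $(\nplayer_a', \nplayer_b')$ with $\nplayer_a' \geq \nplayer_a$, $\nplayer_b' \geq \nplayer_b$, and a number of items $\nitem$ so that \emph{no} assignment of agents to items is Nash stable under the mean outcome function. Guided by the hint in the excerpt, I would take $\nplayer_b'$ to be a large odd number with $\nplayer_b' \geq \nplayer_b$ and $\nplayer_b' \geq \nplayer_a$, set $\nplayer_a' = \nplayer_b' + 2$ (also odd), and work with $\nitem = 2$ items. Since $\nplayer' = \nplayer_a' + \nplayer_b'$ is even and exceeds $2\nitem = 4$ for large parameters, and the cost precondition of Lemma~\ref{lem:nounlabeled} can be assumed, every candidate stable arrangement fills both items, and an arrangement is fully described by the pair $(a_1,b_1)$ of type-$A$ and type-$B$ counts on item $1$, with $a_2 = \nplayer_a' - a_1$, $b_2 = \nplayer_b' - b_1$.

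The core of the argument is a case analysis over all $(a_1,b_1)$ showing that in each case some agent has a profitable deviation. The key computation is the mean on an item with $a$ agents of type $A$ and $b$ of type $B$: it equals $\frac{a\bias_a + b\bias_b}{a+b}$, and a type-$A$ agent's contribution to its total cost is the sum over the two items of the distance from that mean to $\bias_a$. Moving one type-$A$ agent from item $1$ to item $2$ changes the outcome on item $1$ from $\frac{a_1\bias_a + b_1\bias_b}{a_1+b_1}$ to $\frac{(a_1-1)\bias_a + b_1\bias_b}{a_1 - 1 + b_1}$ (shifting it toward $\bias_b$, i.e.\ away from $\bias_a$ — bad for that agent on item $1$) and the outcome on item $2$ toward $\bias_a$ (good). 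So the deviation is profitable exactly when the gain on item $2$ outweighs the loss on item $1$; this reduces to a clean inequality comparing the ratios $b_1/(a_1+b_1)$ and $b_2/(a_2+b_2)$ — roughly, an agent wants to move from the item where its own type is in the majority (proportionally) toward the item where its type is scarcer, to even out the two means. The symmetric statement holds for type-$B$ agents. From this "balancing" tendency, a stable arrangement would need the two proportions $b_1/(a_1+b_1)$ and $b_2/(a_2+b_2)$ to be as equal as possible; but because $\nplayer_a' = \nplayer_b'+2$ with both odd, one checks that no integer split makes them equal or close enough in the required sense, so in every configuration either a type-$A$ or a type-$B$ agent strictly benefits from switching. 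I would organize this as: (i) the case where the split is "too unbalanced" (one proportion much larger), where the minority-type agent on the lopsided item deviates; (ii) the near-balanced cases forced by the parity constraint, where a careful comparison of the discrete jumps shows one side still strictly gains; handling ties in the mean (when a move leaves the proportion comparison an equality) needs the strictness to come from the \emph{other} item or the other type.

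The main obstacle I anticipate is the boundary/tie bookkeeping in step (ii): the mean is a smooth-ish function, so deviations change the cost by small amounts, and I must rule out the possibility that for some clever split every single-agent deviation is exactly cost-neutral or cost-increasing. This is where the precise choice $\nplayer_a' - \nplayer_b' = 2$ together with oddness does the real work — it guarantees the "best" proportion split still leaves a strict imbalance that some agent can exploit — and I expect the proof to hinge on a short but delicate inequality (essentially that $\frac{b_1}{a_1+b_1}$ and $\frac{\nplayer_b'-b_1}{\nplayer_a'+\nplayer_b'-a_1-b_1}$ cannot be made to straddle the relevant threshold by an integer choice). A secondary point to be careful about is verifying that items left empty are never preferable, which follows from Lemma~\ref{lem:nounlabeled} once $\costun \geq 0.5\,|\bias_a - \bias_b|$, so I would invoke that lemma at the outset and then restrict entirely to arrangements with both items nonempty.
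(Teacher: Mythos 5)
Your proposal matches the paper's proof in all essentials: the paper uses exactly the construction $\nitem=2$, $\nplayer_a'=\nplayer_b'+2$ with both odd, reduces to the two near-balanced splits (correlated and anti-correlated), and verifies by direct computation that each admits a profitable single-agent deviation. The only difference is that where you plan to dispose of the unbalanced splits from scratch, the paper simply invokes Theorem~\ref{thrm:meanclose} to force $\abs{a_1-a_2}=\abs{b_1-b_2}=1$ immediately, so the delicate step~(ii) you anticipate collapses to checking just those two configurations.
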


What is driving this persistent pattern of instability? Consider a hypothetical variant of Private Blotto where agents can allocate effort fractionally across items\footnote{In this setting player payoff is similarly given by the mean outcome, but with $a_i, b_i \in \mathbb{R}$: see Appendix \ref{app:proofs} for details.}. Theorem \ref{thrm:meanbowl} shows that this would always lead to a stable arrangement exactly equal to even allocation over items. Thus, persistent recurrence of instability in Figure \ref{fig:stable} are driven solely by the requirement that effort be allocated in integer units.

\begin{restatable}{theorem}{meanbowl}
\label{thrm:meanbowl}
For $\nitem$ items with two types of agents, $A$ and $B$ with mean outcome and $\costun$ satisfying the conditions of Lemma \ref{lem:nounlabeled}, if players are allowed to be allocated fractionally over items, then the stable arrangement is always given by $a_i = \nplayer_a/\nitem, b_i = \nplayer_b/\nitem$. 
\end{restatable}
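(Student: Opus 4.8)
The plan is to show that with fractional allocations the even split $a_i = \nplayer_a/\nitem$, $b_i = \nplayer_b/\nitem$ is the unique stable arrangement, by exhibiting a strictly convex potential-like quantity that every agent's deviation incentive tracks. First I would write down the cost of a single type-$A$ agent under arrangement $(a_i,b_i)_{i\in[\nitem]}$ with mean outcome: by Lemma~\ref{lem:nounlabeled} no item is empty, and the outcome on item $i$ is $\frac{a_i \bias_a + b_i \bias_b}{a_i + b_i}$, so the agent's cost is $\frac{1}{\nitem}\sum_i \abs{\frac{a_i \bias_a + b_i \bias_b}{a_i+b_i} - \bias_a} = \frac{\abs{\bias_a - \bias_b}}{\nitem}\sum_i \frac{b_i}{a_i + b_i}$. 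Symmetrically a type-$B$ agent pays $\frac{\abs{\bias_a-\bias_b}}{\nitem}\sum_i \frac{a_i}{a_i+b_i}$. So up to the positive constant $\abs{\bias_a-\bias_b}/\nitem$, every type-$A$ agent wants to minimize $\sum_i \frac{b_i}{a_i+b_i}$ and every type-$B$ agent wants to minimize $\sum_i \frac{a_i}{a_i+b_i}$; note these two sums add to $\nitem$, so the game is exactly zero-sum between the two coalitions in this fractional relaxation.

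Next I would characterize what a deviation does. A type-$A$ agent currently contributing its unit of effort to item $j$ can move an infinitesimal (or full unit of) mass from $j$ to $k$; the relevant partial derivatives are $\frac{\partial}{\partial a_j}\frac{b_j}{a_j+b_j} = -\frac{b_j}{(a_j+b_j)^2}$. Stability for type-$A$ agents is therefore equivalent to the first-order condition that $\frac{b_i}{(a_i+b_i)^2}$ is equal across all items $i$ that type-$A$ agents occupy (and no larger on any other item), and similarly $\frac{a_i}{(a_i+b_i)^2}$ equalized for type-$B$. I would then argue that the only way to satisfy both families of equalities simultaneously — given $\sum_i a_i = \nplayer_a$, $\sum_i b_i = \nplayer_b$ — is $a_i/b_i$ constant across $i$ together with $a_i + b_i$ constant across $i$, which forces $a_i = \nplayer_a/\nitem$ and $b_i = \nplayer_b/\nitem$. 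The cleanest route is probably to note the type-$A$ condition $b_i = \lambda (a_i+b_i)^2$ and type-$B$ condition $a_i = \mu(a_i+b_i)^2$; adding them gives $(a_i+b_i) = (\lambda+\mu)(a_i+b_i)^2$, so $a_i + b_i = 1/(\lambda+\mu)$ is the same for all $i$, and then $b_i = \lambda/(\lambda+\mu)^2$ is also constant, giving the claim. For existence and to rule out boundary issues, observe that $\frac{b_i}{a_i+b_i}$ is strictly convex in $a_i$ for $b_i>0$ and that the even split is the symmetric point; a short exchange argument shows any non-even split lets some agent strictly improve.

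The main obstacle I expect is handling corner/boundary cases carefully: what if some item has $b_i = 0$ (all type-$A$)? Then type-$A$ agents there have zero cost contribution from that item and no incentive to leave, but a type-$B$ agent might want to move there — I need to check the deviation inequalities rather than just the interior stationarity conditions, and confirm that an all-one-type item cannot be part of a stable profile unless $\nplayer_b = 0$ (in which case the statement is vacuous or trivially the even split of $A$'s). I also need to make precise the ``fractional deviation'' model from the footnote: presumably an agent controls one indivisible unit but may split *that unit* across items, or the relaxation lets $a_i,b_i$ be arbitrary nonnegative reals summing to $\nplayer_a,\nplayer_b$ with each ``agent'' a divisible mass; either way the best-response condition reduces to the equalization of the marginal quantities above, so once the model is pinned down the argument is the convexity/Lagrange-multiplier computation sketched here. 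I would close by remarking that strict convexity gives uniqueness, not merely existence, of the stable arrangement.
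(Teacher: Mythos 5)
Your proposal matches the paper's proof: both reduce stability in the fractional relaxation to equalizing the marginal costs $\frac{b_i}{(a_i+b_i)^2}$ (resp.\ $\frac{a_i}{(a_i+b_i)^2}$) across items subject to the mass constraints $\sum_i a_i = \nplayer_a$, $\sum_i b_i = \nplayer_b$, and conclude that the proportional split satisfies these conditions. Your extra Lagrange-multiplier step for uniqueness (adding $b_i = \lambda(a_i+b_i)^2$ and $a_i = \mu(a_i+b_i)^2$ to force $a_i+b_i$ constant) and your attention to boundary items with $b_i=0$ go beyond the paper's proof of this particular theorem, which only verifies that the even split meets the first-order conditions, but they closely mirror the argument the paper gives for the weighted generalization (Theorem \ref{thrm:meanbowlw}) in the appendix.
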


Note that, in general, Theorem \ref{thrm:meanbowl} does \emph{not} imply that stable arrangement for the integer-valued Private Blotto games will be close to proportional. While Theorem \ref{thrm:meanbowl} can be extended to show that the fractional Private Blotto game is convex, it is known that in general, the minimum of a convex function, when restricted to integer values may be arbitrarily far from the minimum of the same convex function over real numbers\footnote{For an illustrative example, see \cite{stackoverflow} cited in the references.}. However, it turns out that for the Private Blotto game it \emph{is} true that integer-valued stable arrangements are \enquote{close} to proportional. This idea is formalized in Theorem \ref{thrm:meanclose}: 

\begin{restatable}{theorem}{meanclose}
\label{thrm:meanclose}
Given mean outcome function, any arrangement that is stable must be \enquote{close} to proportional: $\abs{a_i - \nplayer_a/\nitem} \leq 1, \abs{b_i - \nplayer_b/\nitem} \leq 1$ for $i \in [\nitem]$, given $\costun$ satisfying the conditions of Lemma \ref{lem:nounlabeled}.
\end{restatable}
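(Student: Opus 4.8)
The plan is to turn Nash stability into a family of marginal inequalities between pairs of items and then show those inequalities force each item's type–counts to lie within $1$ of the proportional values $\nplayer_a/\nitem$ and $\nplayer_b/\nitem$. First I would record the reduction supplied by Lemma~\ref{lem:nounlabeled}: writing $D=\abs{\bias_a-\bias_b}$ and $n_i=a_i+b_i$, in any stable arrangement no item is empty, the mean outcome on item $i$ equals $(a_i\bias_a+b_i\bias_b)/n_i$, which is at distance $D\,b_i/n_i$ from $\bias_a$, so a type–$A$ agent's total cost is $D\sum_i b_i/n_i$ (symmetrically $D\sum_i a_i/n_i$ for type $B$). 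The degenerate situations $\bias_a=\bias_b$ or one type absent I would dispatch at the outset. Moving a type–$A$ agent off a \emph{non‑singleton} item $i$ onto an item $k$ changes its cost by $D\bigl(\tfrac{b_i}{(n_i-1)n_i}-\tfrac{b_k}{n_k(n_k+1)}\bigr)$, so stability gives, for all such $i$ and all $k$,
$$\frac{b_i}{(n_i-1)n_i}\ \ge\ \frac{b_k}{n_k(n_k+1)},\qquad\text{and symmetrically}\qquad \frac{a_i}{(n_i-1)n_i}\ \ge\ \frac{a_k}{n_k(n_k+1)}.$$
Unlike the fractional version (Theorem~\ref{thrm:meanbowl}), I would not route this through convexity — as the remark before the theorem notes, integer minima of convex functions can be arbitrarily far from the real minimum — so the near‑proportionality has to be extracted from this discrete exchange structure directly.

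The conceptual core is the case where every item is \emph{mixed} ($a_i,b_i\ge 1$, hence $n_i\ge 2$ and both inequalities above hold at every item). Adding the two inequalities collapses them to $\tfrac{1}{n_i-1}\ge\tfrac{1}{n_k+1}$, i.e.\ $n_i\le n_k+2$ for all $i,k$, so all totals $n_i$ lie in a window of width $2$. I then claim no two items have $a$‑counts differing by $\ge 2$ (and symmetrically for $b$): assuming $a_I\ge a_J+2$, the type–$B$ inequality first forces $n_I\ge n_J$, so $n_I-n_J\in\{0,1,2\}$, and a short case split yields a contradiction in each case. If $n_I=n_J$, the two directions of the type–$B$ inequality give $2a_J\ge (a_I-a_J)(n_J-1)\ge 2(n_J-1)$, so $a_J\ge n_J-1$, forcing $b_J=1$ and then $a_I\ge n_J+1>n_I$. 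If $n_I=n_J+1$, then $(n_I-1)n_I=n_J(n_J+1)$, so the type–$A$ and type–$B$ inequalities (with $i=I,k=J$) read simply $b_I\ge b_J$ and $a_I\ge a_J$, whence $n_I=a_I+b_I\ge n_J+2>n_I$. If $n_I=n_J+2$, the sum inequality is tight, forcing the type–$B$ inequality tight, so $a_I-a_J=2a_J/n_J\in(0,2)$ must be the integer $1$. Hence all $a_i$ take one of two consecutive integer values; their average is $\nplayer_a/\nitem$, which therefore lies between them, giving $\abs{a_i-\nplayer_a/\nitem}\le 1$, and likewise $\abs{b_i-\nplayer_b/\nitem}\le 1$.

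It remains to handle arrangements with an item holding only one type. I would first show such an item must be a \emph{singleton}: if an all‑$A$ item held $\ge 2$ agents (with $\nplayer_b\ge 1$), a type–$A$ agent there strictly gains by joining any item containing a type–$B$ agent, since the cost from its own item stays $0$ while the target term strictly falls. Given that, two further exchange arguments pin down the structure: (i) if an all‑$A$ singleton $(1,0)$ exists, no item can hold $\ge 2$ type–$B$ agents — one such agent strictly gains by joining the singleton (turning it into $(1,1)$) — so every $b_i\le 1$, and symmetrically an all‑$B$ singleton forces every $a_i\le 1$; (ii) if an all‑$A$ singleton exists and no all‑$B$ item exists, the lone type–$B$ agent on any mixed item $(a_i,1)$ gains by joining the singleton unless $a_i\le 1$, so in fact $a_i=1$ on every item. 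In each such configuration one coordinate is pinned ($a_i\le 1$ everywhere, or $a_i=1$ everywhere, etc.) so $\abs{a_i-\nplayer_a/\nitem}\le 1$ is immediate, and the other coordinate is then controlled by applying the mixed‑item analysis of the previous paragraph to the mixed items; the bound $\abs{b_i-\nplayer_b/\nitem}\le 1$ follows in every case.

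The step I expect to be the main obstacle is exactly this boundary bookkeeping. Single‑type items are precisely where the empty‑item penalty $\costun$ re‑enters — removing a singleton's agent costs $\costun$, so the clean marginal inequalities no longer apply at such an item — and where the ``add the type–$A$ and type–$B$ conditions'' trick is unavailable, so each such configuration must be ruled out or pinned down by an individually tailored deviation (and care is needed that the move used keeps its source item nonempty). The all‑mixed case, by contrast, is clean: once the window bound $n_i\le n_k+2$ is in hand, integrality forces everything.
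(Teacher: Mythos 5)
Your proof is correct, and it runs on the same engine as the paper's — the pairwise deviation inequality obtained by moving a single agent from item $i$ to item $k$ (your displayed stability condition is exactly the negation of the paper's Equation~(\ref{eq:stableeq})) — but the way you extract near-proportionality from that inequality is genuinely different. The paper organizes the argument around the \enquote{gap in type prevalence} between a larger and a smaller item and works through three sufficient conditions keyed to whether the totals differ by $0$, $1$, or at least $2$, closing the width-$2$ case with a proportionality-plus-integrality argument ($c=1$). You instead sum the type-$A$ and type-$B$ inequalities to obtain the window $n_i \le n_k+2$ in one line, and then run the case split on $n_I-n_J\in\{0,1,2\}$ under the hypothesis $a_I\ge a_J+2$; I checked the arithmetic in all three cases and each closes correctly (in particular the tightness argument forcing $a_I-a_J=2a_J/n_J\in(0,2)$ to equal $1$). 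What your route buys is a cleaner decomposition and, more importantly, an explicit treatment of the boundary that the paper's proof glosses over: the paper's deviation inequalities tacitly assume the source item is mixed and non-singleton, whereas you rule out multi-agent single-type items and pin down the configurations containing single-type singletons, which is precisely where $\costun$ re-enters. One shared caveat: if $\nplayer_b=0$ or $\bias_a=\bias_b$, every agent's cost is identically zero on any arrangement with no empty items, so highly unbalanced arrangements such as $(\nplayer_a-1,1)$ on two items are stable and the theorem as literally stated fails; these degenerate cases cannot be \enquote{dispatched} but must be excluded by hypothesis, as the paper's proof implicitly does.
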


\subsection{Misallocated effort}\label{sec:misallocate}
Finally, in this section we will compare the stable arrangements given either median or mean outcome functions. In particular, we will consider the question of how \enquote{bad} stable arrangements might be, which may influence which type of outcome function might be most desirable for a given contest. This question has been formalized in a variety of ways in previous papers, including Price of Anarchy or Price of Stability \cite{koutsoupias2009worst, anshelevich2008price}. For example, in a congestion routing game, Price of Anarchy would measure the total congestion for all players in the worst-case stable arrangement, as compared to the arrangement that minimizes total congestion. 

However, Private Blotto is modeling a fundamentally different game. In the plaform annotation scenario, Private Blotto is modeling different online users competing over posts for which they have truly different viewpoints, influenced by their personal biases and knowledge. In the absence of impartial, \enquote{gold standard} misinformation investigation (which may be impossible to do at scale), the social media company (and society at large) may not have a clear sense of which misinformation label is \enquote{right}. However, we may have a normative preference that all posts should obtain roughly equal numbers of labels. For example, if one post has dozens of battling annotations while other posts go unlabeled, we might view that allocation of human effort as wasteful\footnote{There are multiple natural extensions to the Private Blotto game, including cases where some items are more important than others, or where players come in arbitrarily many types: see Appendix \ref{app:unevenweight} for a discussion and extension of our results.}.  This intuition of \enquote{misallocated effort} is formalized below:  

\begin{definition}\label{def:misallocate}
Given an arrangement of agents onto items, we say it has \enquote{misallocated effort} given by the amount of agents that is above or below equal allocation. That is, misallocated effort is given by: 
$$\sum_{i \in [\nitem]}\abs{\frac{\nplayer_a}{\nitem} - a_i} + \abs{\frac{\nplayer_b}{\nitem} - b_i}$$ 
\end{definition}

One question we will explore is the maximum possible misallocated effort, among any stable arrangements. Here, we will show a qualitative difference in the bound depending on whether mean or median outcome functions are used.

First, Lemma \ref{lem:meanmisallocate} shows that misallocated effort is upper bounded by a constant times the number of items $\nitem$, driven by the results of Theorem \ref{thrm:meanclose}.

\begin{restatable}{lemma}{meanmisallocate}\label{lem:meanmisallocate}
For mean outcome function,  misallocated effort is upper bounded by $2 \cd \nitem$. 
\end{restatable}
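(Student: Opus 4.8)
The plan is to derive this directly as a corollary of Theorem~\ref{thrm:meanclose}, which already controls the per-item deviation of any stable arrangement from the proportional allocation. Fix an arbitrary stable arrangement $\{a_i, b_i\}_{i \in [\nitem]}$ under the mean outcome function (if no stable arrangement exists, the claim holds vacuously), and recall that the preconditions of Lemma~\ref{lem:nounlabeled} are assumed throughout this section so that no item is empty and the analysis is purely combinatorial in $\{a_i, b_i\}$.

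First I would invoke Theorem~\ref{thrm:meanclose} to obtain, for every item $i \in [\nitem]$, the two bounds $\abs{a_i - \nplayer_a/\nitem} \leq 1$ and $\abs{b_i - \nplayer_b/\nitem} \leq 1$. Adding these gives $\abs{\nplayer_a/\nitem - a_i} + \abs{\nplayer_b/\nitem - b_i} \leq 2$ for each $i$. Then summing this inequality over all $\nitem$ items and comparing with Definition~\ref{def:misallocate} yields
$$\sum_{i \in [\nitem]} \p{\abs{\tfrac{\nplayer_a}{\nitem} - a_i} + \abs{\tfrac{\nplayer_b}{\nitem} - b_i}} \leq \sum_{i \in [\nitem]} 2 = 2 \cd \nitem,$$
which is exactly the stated bound.

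There is essentially no technical obstacle here: the content is entirely carried by Theorem~\ref{thrm:meanclose}, and the remaining argument is a one-line sum of per-item inequalities. The only point worth a remark is that the bound is stated for all stable arrangements simultaneously (it is a worst-case bound over the set of stable arrangements), so no optimization over arrangements is needed --- the uniform per-item guarantee of Theorem~\ref{thrm:meanclose} already applies to each of them. I would also note in passing that this bound need not be tight, since the constraints $\sum_i a_i = \nplayer_a$ and $\sum_i b_i = \nplayer_b$ force the signed deviations to cancel; but since the lemma only asserts the upper bound $2\cd\nitem$, no further refinement is required.
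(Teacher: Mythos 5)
Your proof is correct and follows exactly the paper's own argument: both invoke Theorem~\ref{thrm:meanclose} to bound each item's deviation by $1$ for each type and then sum the resulting per-item bound of $2$ over all $\nitem$ items. Your additional remarks on vacuity when no stable arrangement exists and on non-tightness are accurate but not needed.
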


Next, Lemma \ref{lem:medianmisallocate} shows that worst-case misallocated effort can be much higher for the median outcome function, especially in the case where there are many more agents than items. Specifically, Lemma \ref{lem:medianmisallocate} shows that misallocated effort could be as high as $\nplayer_a \cd \p{1 - \frac{1}{\nitem}}$, which can be much greater than the $2 \cd \nitem$ bound given in Lemma \ref{lem:meanmisallocate}.

\begin{restatable}{lemma}{medianmisallocate}\label{lem:medianmisallocate}
For median outcomes, worst-case misallocated effort is \emph{lower}-bounded by $0.25 \cd \nplayer$, given $\nplayer = \nplayer_a + \nplayer_b \geq 2 \cd \nitem$.  
\end{restatable}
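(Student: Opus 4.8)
The plan is to prove the bound constructively: exhibit, for each number of items $\nitem \ge 2$ and for arbitrarily large $\nplayer$ with $\nplayer \ge 2\nitem$, an instance together with a stable arrangement whose misallocated effort is at least $0.25\,\nplayer$. Recall from the model section that under the median outcome with two types, each item is won by whichever type is in the strict majority (outcome $\bias_a$ or $\bias_b$), a tie (equal numbers of the two types) giving outcome $(\bias_a+\bias_b)/2$; hence an agent pays $0$ on items its type wins, $\tfrac12\abs{\bias_a-\bias_b}$ on tied items, and $\abs{\bias_a-\bias_b}$ on items its type loses, and the $1/\nitem$ averaging in the cost is a harmless rescaling that does not affect which deviations are improving. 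So to certify stability I only need to track, item by item, which type wins.

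The instance I would use takes $\nplayer_b=1$ and $\nplayer_a=\nplayer-1$ large (say $\nplayer_a \ge 6(\nitem-1)$, which is compatible with $\nplayer \ge 2\nitem$ when $\nitem\ge 2$), and the arrangement that places exactly two type-$A$ agents on each of items $2,\dots,\nitem$ and puts all remaining agents---the other $\nplayer_a-2(\nitem-1)$ type-$A$ agents together with the single type-$B$ agent---on item $1$. This is well-defined and leaves no item empty. First I would verify Nash stability. Every item is won by type $A$: items $2,\dots,\nitem$ contain only $A$-agents, and item $1$ contains $\nplayer_a-2(\nitem-1)\ge 2$ type-$A$ agents against one type-$B$ agent, so (whatever the parity of the count, since only one agent there is of type $B$ and there are at least three agents) its median is $\bias_a$. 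Thus every type-$A$ agent already pays $0$ and has no profitable deviation. The lone type-$B$ agent pays the maximum on every item; if it moves to some item $j\ge 2$, that item becomes the multiset $\{\bias_a,\bias_a,\bias_b\}$, whose median is still $\bias_a$, while item $1$ stays $A$-won, so the type-$B$ agent still loses everywhere and gains nothing, and no deviation empties an item. Hence the arrangement is stable (and consistent with Lemma \ref{lem:nounlabeled}).

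Then I would compute the misallocated effort from Definition \ref{def:misallocate}, using $a_1=\nplayer_a-2(\nitem-1)$, $a_i=2$ for $i\ge 2$, $b_1=1$, $b_i=0$. Since $\nplayer_a\ge 2\nitem$ gives $a_1\ge \nplayer_a/\nitem\ge a_i$, the type-$A$ contribution is $2\bigl(\nplayer_a(1-1/\nitem)-2(\nitem-1)\bigr)$ and the type-$B$ contribution is $2(\nitem-1)/\nitem\ge 0$. Because $\nitem\ge 2$ gives $1-1/\nitem\ge\tfrac12$ and $\nplayer_a=\nplayer-1$, the total is at least $\nplayer_a-4(\nitem-1)$, and a short calculation shows this exceeds $0.25\,\nplayer$ once $\nplayer_a\ge 6(\nitem-1)$. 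As $\nitem$ is fixed while $\nplayer$ ranges over arbitrarily large values satisfying $\nplayer\ge 2\nitem$, this shows the worst-case misallocated effort among stable arrangements is at least $0.25\,\nplayer$; in fact the construction yields roughly $2\nplayer(1-1/\nitem)$, matching the informal $\nplayer_a(1-1/\nitem)$ estimate.

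The main obstacle is the stability check, and specifically justifying why each of items $2,\dots,\nitem$ must carry \emph{two} type-$A$ agents rather than one: with a single type-$A$ agent there, the stranded type-$B$ agent could move in and turn that item into a tie, strictly lowering its own cost on that item from $\abs{\bias_a-\bias_b}$ to $\tfrac12\abs{\bias_a-\bias_b}$ while item $1$ is unaffected, destroying stability. A secondary point requiring care is confirming that item $1$'s median is $\bias_a$ irrespective of the parity of the number of agents on it, which holds because only one of those agents has type $B$ and there are at least three of them, so both "middle" positions in the sorted list of biases are occupied by $\bias_a$.
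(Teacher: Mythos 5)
Your proof is correct, and it follows the same high-level strategy as the paper's: constructively exhibit a stable arrangement in which a single item hoards essentially all of the type-$A$ agents, then lower-bound the misallocated effort via $\nplayer_a\p{1-\frac{1}{\nitem}} \geq 0.25\,\nplayer$. The witness construction differs, though. The paper reuses the arrangement from Algorithm \ref{alg:allocatemore} in the regime $\nplayer_b \geq 2(\nitem-1)$: two (or more) type-$B$ agents on each of items $1,\dots,\nitem-1$ and all $\nplayer_a$ type-$A$ agents on item $\nitem$, with stability following immediately because every item is won by a margin of at least $2$, so no single agent can change any outcome. You instead take $\nplayer_b=1$, put two type-$A$ agents on each of items $2,\dots,\nitem$ and everything else on item $1$; stability then rests on the observation that type $A$ wins everywhere (so $A$-agents pay zero) and the lone $B$-agent cannot flip any item, and you correctly identify the one place this could break --- a singleton $A$-item would let the $B$-agent create a tie --- which is exactly why the two-per-item guard is needed. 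The trade-offs are minor: the paper's construction applies whenever $\nplayer_b \geq 2(\nitem-1)$ (so already at $\nplayer \approx 4\nitem$), whereas yours needs $\nplayer_a \gtrsim 6(\nitem-1)$ for the arithmetic to close; on the other hand your instance shows the phenomenon persists even with a single minority agent, and your accounting of the full misallocated effort (roughly $2\nplayer\p{1-\frac{1}{\nitem}}$) is tighter than the paper's, which only counts the excess on the hoarding item. Both proofs establish the lemma only for a sub-family of instances satisfying $\nplayer \geq 2\nitem$, which is all a worst-case lower bound requires.
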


This proof (deferred to Appendix \ref{app:proofs}) is constructive and creates an arrangement with high misallocated effort, based on the proof of Lemma \ref{lem:medianstableequal}. This lower bound involves an arrangement that is a (pure) Nash equilibrium under Definition \ref{def:nash} because no \emph{single} player can decrease its cost by changing which item it is occupying. However, if agents were allowed to coordinate, as in \emph{Colonel} Blotto, then such an arrangement may fail to be stable (e.g.  multiple agents of type one type could move together to another item and decrease their total cost). We view this result as illustrating how the decentralized nature of the Private Blotto game can cause arrangements with high misallocated cost to be stable, where in the centralized Colonel Blotto they would be unstable.

Overall, the results of this analysis imply that mean outcome functions, rather than median ones, give sharper guarantees that any stable arrangement that exists will result in agents roughly arranging themselves across items in proportion to their overall value.
 Specifically, our results have implications for the design of social media labeling tools. While these tools have outcome functions that are significantly more sophisticated than median or mean \cite{wojcik2022birdwatch}, in general our results suggest that more smooth outcome functions (similar to mean) as compared to sharper outcome functions (similar to median) would reduce the worst-case misallocated effort, at the expense of greater unpredictability as to when a stable allocation exists.

\section{More items: $\nitem>\nplayer$}\label{sec:feweragent}

In Section \ref{sec:moreagents}, we examined the case with more agents than items: $\nplayer\geq \nitem$. In this section, we will explore the other possibility, with fewer agents than items. This could model content annotation in settings where the number of prospective users is far fewer than the number of posts (a common occurrence). Our goal here is to model the set of stable arrangements, again comparing median and mean outcome functions. 
 
Because $\nplayer< \nitem$, some items will inevitably need to be left empty. Because of this, we will drop the lower bound in Lemma \ref{lem:nounlabeled} and allow the cost for leaving a item empty $\costun$ to be set arbitrarily. Since Lemma \ref{lem:nounlabeled} no longer applies, in this section we will see that agent biases $\{\bias_i\}$ are relevant for agents' strategies -- a departure from Section \ref{sec:moreagents}, where all of our results held independent of agent bias.

At a high level, Section \ref{sec:moreagents} showed a wide divergence in behavior between median and mean outcome functions. In this section, we will show that the setting of $\nplayer < \nitem$ gives much more similar results between the two outcome functions, though with some differences. The intuition is that both median and mean outcome functions behave identically for items with only 1 or 2 agents present. Because there are few agents compared to the number of items, most arrangements will have 1 or 2 agents per item, unless $\costun$ is very small or differences in biases is very large. Our main result in this section is Theorem \ref{thrm:whenstablefeweragents}, which exactly characterizes when a stable arrangement is guaranteed to exist and when it is possible that none exist for the $\nplayer < \nitem$ regime. In particular, note that mean and median outcome function have almost identical patterns of when stable arrangements exist, except for at $\nplayer=3$, and that cases where no stable arrangement exists are relatively common.  
\begin{theorem}\label{thrm:whenstablefeweragents}
Given $\nitem > \nplayer$, 
\begin{itemize}
    \item For $\nplayer = 2$, there is always a stable arrangement (for both median and mean outcome function). 
    \item For $\nplayer \in [4, \nitem)$, for both median and mean outcome function there are always parameters (player biases $\{\bias_a, \bias_b\}$ and unlabeled cost $\costun$) such that no stable arrangement exists. 
    \item For $\nplayer =3$, for mean outcome function, a stable arrangement always exists, but for median outcome function, there exists parameters such that no stable arrangement is possible.
\end{itemize}
\end{theorem}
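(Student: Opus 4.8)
Throughout write $d:=\abs{\bias_a-\bias_b}$, normalize so that $\nplayer_a\geq\nplayer_b$ and $\bias_a>\bias_b$, and recall that with $\nitem>\nplayer$ every arrangement leaves at least $\nitem-\nplayer\geq 1$ items empty, and that a type-$\type$ agent's cost equals $\sum_{\text{nonempty }i}\abs{f(S_i)-\bias_\type}+(\#\,\text{empty items})\cd\costun$. I would prove the three claims in turn, the last two sharing a common device. For $\nplayer=2$, an item holds at most two agents and the mean and median of at most two numbers coincide, so the two outcome functions behave identically and one argument suffices: if both agents have the same type, placing them on two distinct items is stable (every occupied item has outcome $\bias_a$, so no move changes a distance cost while any move onto an occupied item raises the empty count); if the agents have opposite types, the only two structurally distinct arrangements are ``apart'' (per-agent cost $d+(\nitem-2)\costun$) and ``together'' (outcome $(\bias_a+\bias_b)/2$, per-agent cost $\tfrac d2+(\nitem-1)\costun$), and the only arrangement-changing deviation available to anyone is the swap between these two, which fails to improve when $\costun\geq\tfrac d2$ (so ``apart'' is stable) and when $\costun\leq\tfrac d2$ (so ``together'' is stable). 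So a stable arrangement always exists.

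For $\nplayer=3$ and $\nplayer\in[4,\nitem)$, the ``bad parameters exist'' statements would use the single-minority profile $\nplayer_b=1$, $\nplayer_a=\nplayer-1$, tuning $\costun$ and $d$ (the ``always stable'' statement for $\nplayer=3$ also needs the all-one-type profile, handled as in the $\nplayer=2$ case). The key structural fact is that for any $\costun>0$ no stable arrangement can contain an item with $\geq 2$ type-$A$ agents and no $B$ agent: such an item has outcome $\bias_a$, and one of its $A$'s can ``bleed off'' to an empty item (always available here) --- the vacated item and the new singleton both still have outcome $\bias_a$, so no distance cost changes, but the empty count drops, saving $\costun$. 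Hence every stable arrangement is described by one integer $k\in\{0,1,\dots,\nplayer_a\}$, the number of $A$'s sharing the lone $B$'s item, so there are only $\nplayer_a+1\leq\nplayer$ candidates and each agent's cost in each candidate is an explicit affine function of $\costun$. Under the median the $B$-item has outcome $(\bias_a+\bias_b)/2$ when $k=1$ but outcome exactly $\bias_a$ once $k\geq 2$ --- the lone $B$ is invisible once it shares with two or more $A$'s --- so for $k\geq 2$ the $B$ agent can bleed off to an empty item with distance cost unchanged (it already sits at distance $d$ from every occupied item), saving $\costun$, and for $k\in\{0,1\}$ a solo $A$ (present since $\nplayer_a\geq 2$) can move onto the $B$-item, forming a mixed pair ($k\colon 0\to1$) or killing $B$'s influence ($k\colon 1\to2$), cutting its distance to that item by $\tfrac d2$ at cost $\costun$ --- profitable whenever $\costun<\tfrac d2$. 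Thus for any $0<\costun<\tfrac d2$ no candidate is stable, giving the median halves of both claims.

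Under the mean, $B$ always shifts the $B$-item's outcome by $d/(k+1)$, so it is never invisible, and the per-$k$ bookkeeping gives the stable ranges $k=0$ on $[\tfrac d2,\infty)$, $k=1$ on $[\tfrac d6,\tfrac d2]$, $k=\nplayer_a$ on $[0,\tfrac{d}{\nplayer_a(\nplayer_a+1)}]$, with every ``middle'' configuration $2\leq k\leq\nplayer_a-1$ unstable for all $\costun$ (there $B$ sits at distance $\tfrac{k}{k+1}d>\tfrac d2$ from its own item and strictly gains by peeling off to pair with one of the remaining solo $A$'s, a move leaving the empty count fixed). Middle configurations exist iff $\nplayer_a\geq 3$, i.e.\ $\nplayer\geq 4$. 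When $\nplayer\leq 3$ we have $\nplayer_a\leq 2$, hence $\tfrac{d}{\nplayer_a(\nplayer_a+1)}\geq\tfrac d6$, so the three ranges already cover $[0,\infty)$ and a stable arrangement always exists --- the mean half of the $\nplayer=3$ claim. When $\nplayer\geq 4$ we have $\nplayer_a(\nplayer_a+1)>6$, the union of stable ranges is only $[0,\tfrac{d}{\nplayer_a(\nplayer_a+1)}]\cup[\tfrac d6,\infty)$, and any $\costun$ in the nonempty gap $\p{\tfrac{d}{\nplayer_a(\nplayer_a+1)},\ \tfrac d6}$ leaves no stable arrangement --- the mean half of the $\nplayer\in[4,\nitem)$ claim.

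The main obstacle is this last (mean, $\nplayer\in[4,\nitem)$) case: it requires the \emph{exact} stable $\costun$-range of each of the $O(\nplayer)$ candidate configurations, not merely a bound, so that the $\nplayer\leq 3$ versus $\nplayer\geq 4$ split genuinely falls out of the identity $\nplayer_a(\nplayer_a+1)\leq 6\iff\nplayer_a\leq 2$. The structural reduction is what makes this tractable, since without it the set of arrangements of $\nplayer$ agents on $\nitem$ items is exponential. Each individual check is an elementary comparison of affine functions of $\costun$, but each configuration admits several deviation types, and I would have to be careful that the moves invoked are actually available --- that a solo $A$ exists ($\nplayer_a\geq 2$, resp.\ $\nplayer_a\geq 3$) and that a spare empty item exists ($\nitem>\nplayer$).
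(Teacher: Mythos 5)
Your proof is correct, and it follows the paper's overall decomposition: the same $\costun=\tfrac d2$ threshold argument for $\nplayer=2$ (Lemma \ref{lem:twostable}), the same single-minority instance $\nplayer_b=1,\ \nplayer_a=\nplayer-1$ for the instability constructions (Lemmas \ref{lem:medianunstable} and \ref{lem:meanunstable}), and the same thresholds $\tfrac d6$ and $\tfrac d2$ for the three-player mean case (Lemma \ref{lem:meanstablen3m4}). Where you differ is in organization, and the difference buys something real. The paper hardcodes biases ($1$ and $-0.5$) and a cost window, rules out stacked same-type items, and then chases an explicit four-step deviation cycle through the remaining arrangements, proving the $\nplayer=3$ mean positive result by a separate three-case analysis. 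You instead prove one structural reduction --- no stable arrangement puts two or more same-type agents on an item without the minority agent, since one can bleed off to a guaranteed empty item (which exists because $\nitem>\nplayer$) and save $\costun$ --- collapsing the search to the $\nplayer_a+1$ configurations indexed by $k$, the number of majority agents sharing the minority agent's item, and then compute each configuration's exact stable $\costun$-interval. This unifies the mean-outcome dichotomy: $\nplayer=3$ always stable versus $\nplayer\geq 4$ admitting the unstable gap $\p{\tfrac{d}{\nplayer_a(\nplayer_a+1)},\tfrac d6}$ both follow from whether $\nplayer_a(\nplayer_a+1)\leq 6$, i.e.\ $\nplayer_a\leq 2$, whereas the paper establishes the two sides by unrelated arguments; your intervals at $\nplayer_a=2$ reproduce exactly the paper's Cases 1--3. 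I checked the claimed intervals and they are right --- in particular, for $k=\nplayer_a$ the binding deviation is the majority agent's ($\costun\leq\tfrac{d}{\nplayer_a(\nplayer_a+1)}$), not the minority agent's weaker constraint $\costun\leq\tfrac{d}{\nplayer_a+1}$ --- and the availability conditions you flag (a spare empty item, a solo majority agent) all hold under $\nitem>\nplayer$ and $\nplayer_a\geq 2$. The only remaining labor is the routine exhaustive enumeration of deviation types per $k$, which you correctly identify.
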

We prove this theorem through a series of smaller lemmas. 

First, Lemma \ref{lem:twostable} considers the $\nplayer=2$ setting, showing that a stable arrangement always exists for both median and mean outcome.

\begin{restatable}{lemma}{twostable}\label{lem:twostable}
For $\nplayer = 2$, $\nitem \geq 2$ with either median or mean outcome functions, a stable arrangement always exists, regardless of the player biases $\{\bias_a, \bias_b\}$ and unlabeled cost $\costun$. 
\end{restatable}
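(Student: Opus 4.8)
The plan is to exploit the fact that with only $\nplayer = 2$ agents, every item ever holds at most two agents, and on a one- or two-element multiset the median and the mean coincide (they return the single bias, resp.\ the average of the two biases), while an empty item simply costs $\costun$ under either rule. Hence the two outcome functions behave identically in this regime and a single argument handles both. I would split on whether the two agents share a type; write $\delta$ for the distance between the two agents' biases, so $\delta = 0$ when they are the same type and $\delta = \abs{\bias_a - \bias_b}$ otherwise.

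If $\delta = 0$ (in particular the same-type case), I claim that placing the two agents on two distinct items --- possible since $\nitem \geq 2$ --- is stable. Each occupied item then has outcome equal to the common bias, so both label costs vanish, and an agent's only unilateral deviations are to join the other agent (which strictly raises the number of empty items, increasing cost by $\costun \geq 0$ with label cost staying $0$) or to relocate to a different empty item (which changes nothing). Neither strictly improves the deviator's cost, so the arrangement is a pure Nash equilibrium.

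If $\delta > 0$, I would compare two candidate arrangements: \enquote{together} (both agents on one item) and \enquote{separate} (the agents on two distinct items). A one-line substitution into the cost formula gives each agent total cost $\delta/2 + (\nitem-1)\costun$ under \enquote{together} and $\delta + (\nitem-2)\costun$ under \enquote{separate}. Because there are only two agents, the available deviations are few: from \enquote{together} an agent can only move to an empty item; from \enquote{separate} it can only join the other agent or move to a fresh empty item (the latter only when $\nitem \geq 3$). Evaluating the resulting costs shows that a deviation from \enquote{together} strictly helps exactly when $\costun > \delta/2$, and a deviation from \enquote{separate} strictly helps exactly when $\costun < \delta/2$. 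Consequently \enquote{separate} is stable whenever $\costun \geq \delta/2$ and \enquote{together} is stable whenever $\costun \leq \delta/2$; since one of these inequalities always holds, a stable arrangement exists for every $\costun$ and every pair of biases. Together with the $\delta = 0$ case this proves the lemma.

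The one place that needs care --- the main (mild) obstacle --- is the bookkeeping of the empty-item penalty: each relevant deviation changes the count of empty items by exactly $-1$, $0$, or $+1$, and all the inequalities above turn on tracking that change correctly, together with the degenerate case $\nitem = 2$, where \enquote{move to a fresh empty item} is unavailable and is simply dropped from the case analysis. Everything else is direct substitution, and the observation that median equals mean on multisets of size at most two is what lets the single argument cover both outcome functions.
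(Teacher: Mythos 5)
Your proposal is correct and follows essentially the same route as the paper's proof: both arguments reduce to comparing the \enquote{together} and \enquote{separate} arrangements and observing that the two agents' preferences flip at the same threshold $\costun = \tfrac{1}{2}\abs{\bias_a - \bias_b}$, so whichever arrangement is weakly preferred is stable. Your additional bookkeeping (the same-type case, the move-to-a-fresh-empty-item deviation, and the explicit remark that median and mean coincide on multisets of size at most two) is sound but only makes explicit what the paper leaves implicit.
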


Next, Lemmas \ref{lem:medianunstable} and \ref{lem:meanunstable} describe the complementary condition, showing when (for median and mean outcome functions respectively), stable arrangements fail to exist. The proofs (deferred to Appendix \ref{app:proofs}) are constructive and involve creating a set of agents with specific biases and cost for leaving an item unlabeled such that any possible arrangement has at least one agent wishing to label a different item. 

\begin{restatable}{lemma}{medianunstable}
\label{lem:medianunstable}
For any $\nplayer, \nitem$ with such that $2 < \nplayer<\nitem$, with median outcome, there exists biases $\{\beta_i\}$ and costs $\costun$ such that no NE exists. 
\end{restatable}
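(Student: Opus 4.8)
The plan is to exhibit, for each $\nplayer,\nitem$ with $2<\nplayer<\nitem$, a single instance in which every arrangement admits a profitable unilateral deviation. I take $\nplayer_a=\nplayer-1$ agents of type $A$ with bias $\bias_a=1$, one agent of type $B$ with bias $\bias_b=0$, and leave-empty cost $\costun$ chosen so that $0<\costun<1/2$; write $d=\bias_a-\bias_b=1$, so that $d>2\costun$. Two preliminary facts drive everything. First, since $\nplayer<\nitem$, every arrangement leaves at least one item empty, and each agent's cost is the sum of its bias-to-outcome distances over the occupied items plus $\costun$ times the number of empty items; hence the only deviations that change the empty-item count are a lone agent joining a nonempty item ($+1$ empty) or an agent leaving a $\geq 2$-agent item for an empty item ($-1$ empty). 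Second, under the median an item with a strict majority of one type has that type's bias as outcome, the item $\{A,B\}$ has outcome $1/2$, and an item with $\geq 2$ type-$A$ agents together with the lone $B$ still has outcome $1$.

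I would then case on the location of the unique type-$B$ agent. (i) If $B$ shares its item with $\geq 2$ type-$A$ agents, then every occupied item has outcome $1$, so $B$ pays $d$ on each of them; moving $B$ to an empty item (one exists) leaves its old item's outcome unchanged and replaces an empty-item cost $\costun$ by $0$, a strict gain. (ii) If $B$ is alone, every other occupied item is pure $A$; if one such item holds $\geq 2$ agents, an $A$ there strictly gains by relocating to an empty item (its old item keeps outcome $1$, so the net change is $-\costun$); otherwise the arrangement is fully spread out, and then some $A$ strictly gains by joining $B$'s item, trading the distance $d$ for $d/2$ on that item at the price of one extra empty item, a net change of $\costun-d/2<0$. (iii) If $B$ shares its item with exactly one type-$A$ agent, the same $\geq 2$-agent pure-$A$-item argument applies when available; otherwise the only candidate is one $\{A,B\}$ item, $\nplayer-2$ singleton-$A$ items (which exist since $\nplayer\geq 3$), and empty items, and there an $A$ on a singleton item strictly gains by moving onto the $\{A,B\}$ item, whose outcome jumps from $1/2$ to $1$ (so the mover pays $0$ rather than $d/2$ there) at the price of one extra empty item, again a net change of $\costun-d/2<0$. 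In each branch one checks that an empty item is indeed available where invoked, which follows from $\nplayer<\nitem$ together with the counting above. Since the three cases are exhaustive, no arrangement is Nash stable.

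I expect the main obstacle to be choosing the parameters so that \emph{all} arrangements are destabilized simultaneously. There is a genuine tension: the deviation ``move a lone agent off the $\{A,B\}$ item to an empty item'' is profitable exactly when $d<2\costun$, the opposite of what destabilizes the two ``spread-out'' candidates. The resolution I would most carefully verify is that the only arrangements surviving the generic ``relocate to an empty item to eliminate wasted empty-item cost'' argument are the fully-spread arrangement and the ``$\{A,B\}$ plus singletons'' arrangement, and that both of these are destabilized by the \emph{same} inequality $d>2\costun$ through an $A$-agent who prefers a shared item with a better outcome; so the single choice $d=1>2\costun$ handles everything. The remaining work --- confirming the median values for the relevant multisets and ruling out every other candidate deviation from these two arrangements --- is routine and again rests on $d>2\costun$.
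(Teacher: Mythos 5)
Your proposal is correct and follows essentially the same route as the paper's proof: the paper also uses one agent of a minority bias against $\nplayer-1$ agents of a common bias with the unlabeled cost set strictly between $0$ and half the bias gap (biases $1$ and $-0.5$ with $\costun=0.3$), and destabilizes every arrangement via the same deviations (majority agents vacating multi-agent pure-majority items, a majority agent joining the minority's item, a second one flipping the median, and the outvoted minority agent fleeing to an empty item). The only difference is presentational --- the paper exhibits the surviving arrangements as an explicit best-response cycle, while you organize them as an exhaustive case analysis on the minority agent's location.
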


\begin{restatable}{lemma}{meanunstable}
\label{lem:meanunstable}
For any $\nplayer, \nitem$ such that $4 \leq \nplayer<\nitem$, with mean outcome, there exists parameters such that no NE exists. 
\end{restatable}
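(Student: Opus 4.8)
The plan is to exhibit, for each $\nplayer\geq 4$, one concrete instance with no pure Nash equilibrium. I would take the type split to be maximally skewed --- $\nplayer_a=\nplayer-1$ agents of type $A$ and $\nplayer_b=1$ agent of type $B$ --- set $\bias_a=0$ and $\bias_b=1$ (only the ratio $\costun/\abs{\bias_a-\bias_b}$ matters, since rescaling all biases together with $\costun$, or translating all biases, preserves Nash stability), and choose $\costun$ in the window $\p{\tfrac{1}{\nplayer(\nplayer-1)},\ \tfrac16}$. Because $\tfrac{1}{\nplayer(\nplayer-1)}\leq\tfrac1{12}$ for every $\nplayer\geq4$, the single value $\costun=\tfrac1{10}$ works uniformly. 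Since $\nitem>\nplayer$, at most $\nplayer$ items are ever occupied, so there is always a spare empty item to defect to; adding extra empty items beyond that only shifts every agent's cost by the same constant while supplying another identical defection target, so it changes no incentive.

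The first step is a structural reduction: in any stable arrangement, no item may hold two or more agents that are all of the same type (``a pure cluster of size $\geq2$''). Indeed, one such agent could move to an empty item --- the vacated item stays pure of that type, so the mean there, hence that agent's distance from it, is unchanged, while its empty-item penalty on the destination falls from $\costun$ to $0$: a strict improvement. Since there is only a single type-$B$ agent, this forces every type-$A$ agent to be alone on its own item or else on the unique item holding $B$. Consequently every arrangement has the form $[\,A^{k}B\mid(\nplayer-1-k)\text{ lone }A\text{'s}\mid\text{empties}\,]$ for some $k\in\{0,1,\dots,\nplayer-1\}$, and it remains only to produce a profitable unilateral deviation in each of these $\nplayer$ arrangement types.

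The deviations are chosen case by case, each cost change being a one-line computation using that a group of $j$ agents exactly one of which has bias $1$ (and the rest bias $0$) has mean $\tfrac1j$. For $k=\nplayer-1$ (everyone on one item, mean $\tfrac1\nplayer$): a type-$A$ agent defects to an empty item, its distance on the crowded item rising from $\tfrac1\nplayer$ to $\tfrac1{\nplayer-1}$, i.e.\ by $\tfrac{1}{\nplayer(\nplayer-1)}<\costun$, while it saves $\costun$ --- a strict improvement. For $2\leq k\leq\nplayer-2$ (so at least one lone $A$ exists): the type-$B$ agent moves onto a lone $A$'s item; on its old item, which becomes pure $A$ with mean $0$, its distance rises from $\tfrac{k}{k+1}$ to $1$, but on the joined item the mean moves from $0$ to $\tfrac12$ and its distance there falls from $1$ to $\tfrac12$, for net change $\tfrac1{k+1}-\tfrac12<0$. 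For $k\in\{0,1\}$ the $B$-move is not strictly improving, so instead a lone $A$ joins the item holding $B$: the mean there moves from $\tfrac1{k+1}$ to $\tfrac1{k+2}$, cutting that agent's distance there by $\tfrac{1}{(k+1)(k+2)}\geq\tfrac16$, while it pays only $\costun<\tfrac16$ for its now-empty former item --- again strict. Every arrangement type thus admits a profitable deviation, so no stable arrangement exists.

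The point that needs most care is the exhaustiveness of the arrangement enumeration, and this is precisely why the skewed split $\nplayer_b=1$ is the right choice: after the pure-cluster reduction it collapses all arrangements into the single one-parameter family indexed by $k$. With a more balanced split one would also have to rule out configurations with several mixed items, and some of those --- for instance two items each holding exactly one $A$ and one $B$ --- turn out to be robustly stable over an interval of $\costun$, so they must be excluded from the outset. The second delicate point is pinning the constants so the two opposing pressures genuinely conflict for all $\nplayer$ simultaneously: ``clustered'' arrangements are destabilized only once $\costun$ exceeds a $\Theta(1/\nplayer^{2})$ threshold (removing one agent from a large group barely moves the mean), whereas ``spread'' arrangements are destabilized only when $\costun$ falls below the fixed constant $\tfrac16$, and $\costun=\tfrac1{10}$ lies in the intersection for every $\nplayer\geq4$. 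With these choices fixed, the handful of remaining cost-change inequalities are routine arithmetic.
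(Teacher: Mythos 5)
Your proposal is correct and follows essentially the same route as the paper: you construct the same instance family (a single outlier agent against a homogeneous majority, with $\costun$ in the window $(1/12,1/6)$ of the bias gap --- the paper uses biases $1$ and $-0.5$ with $\costun\in(0.125,0.25)$, which normalizes to the same window containing your $1/10$), and you use the same three deviations (an agent defecting from the one large cluster to an empty item, the lone $B$ hopping onto a solitary $A$, and a solitary $A$ joining the contested item). Your organization via the pure-cluster reduction and the single parameter $k$ is tidier than the paper's explicit enumeration of configurations, but the underlying argument is the same.
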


The mean outcome function case explored in Lemma \ref{lem:meanunstable} is not exactly analogous to the median case in Lemma \ref{lem:medianunstable}: there is a gap at $\nplayer<4$ players. Lemma \ref{lem:meanstablen3m4} completes this gap by showing that the gap in Lemma \ref{lem:meanunstable} is inevitable: any possible set of $\nplayer =3$ agents must have a stable arrangement, given mean outcomes.  

\begin{restatable}{lemma}{meanstablen}
\label{lem:meanstablen3m4}
For $\nplayer=3, \nitem \geq 4$ with mean outcome, there is always a stable arrangement.
\end{restatable}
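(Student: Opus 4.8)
The plan is to reduce the claim to a small, explicit case analysis. Since $\nplayer=3$ and we always take $\nplayer_a\ge\nplayer_b$, the only type profiles are $(\nplayer_a,\nplayer_b)=(3,0)$ and $(2,1)$. The profile $(3,0)$ (if it is permitted at all) is immediate: every nonempty item has mean outcome $\bias_a$, so an agent's cost equals $\costun$ times the number of empty items; placing the three agents on three distinct items (possible since $\nitem\ge 4$) minimizes this, and no unilateral move helps, since moving onto an occupied item strictly raises the empty count while moving to a different empty item leaves it unchanged. So the substance is the profile $(2,1)$. There, since in this regime $\costun\ge 0$ is an unconstrained parameter, an affine renormalization of the biases (with $\costun$ rescaled correspondingly) lets me assume $\bias_a=1$ and $\bias_b=0$; the degenerate subcase $\bias_a=\bias_b$ collapses to the one-type analysis just given.

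For $(\nplayer_a,\nplayer_b)=(2,1)$ with $\bias_a=1$, $\bias_b=0$, up to relabeling items there are only four arrangement shapes: all three agents on one item, $\cb{A,A,B}$; a pair and a singleton, $\cb{A,A}\mid\cb{B}$ or $\cb{A,B}\mid\cb{A}$; and the three agents on distinct items, $\cb{A}\mid\cb{A}\mid\cb{B}$. I would exhibit a stable arrangement as a function of $\costun$, using the two breakpoints $\tfrac{1}{6}$ and $\tfrac{1}{2}$. For $\costun\le\tfrac{1}{6}$ the fully clustered arrangement $\cb{A,A,B}$ (outcome $\tfrac{2}{3}$) is stable: a type-$A$ agent peeling off to an empty item changes its own cost by $\tfrac{1}{6}-\costun\ge 0$, and the type-$B$ agent peeling off changes its cost by $\tfrac{1}{3}-\costun\ge 0$. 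For $\costun\ge\tfrac{1}{2}$ the fully spread arrangement $\cb{A}\mid\cb{A}\mid\cb{B}$ is stable: the only candidate improving move is for an agent to join the opposite-type singleton, and that changes the mover's cost by $\costun-\tfrac{1}{2}\ge 0$. For the remaining range $\tfrac{1}{6}\le\costun\le\tfrac{1}{2}$, the asymmetric arrangement $\cb{A,B}\mid\cb{A}$ (outcomes $\tfrac{1}{2}$ and $1$) is stable. Because $[0,\tfrac{1}{6}]\cup[\tfrac{1}{6},\tfrac{1}{2}]\cup[\tfrac{1}{2},\infty)$ exhausts $[0,\infty)$, this proves the lemma.

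Carrying this out is routine: for each of the three chosen arrangements and each of the at most three agents, there are only two or three available deviations (onto one of the one or two occupied items, or onto an empty item, which exists because $\nitem\ge 4$), and each deviation alters at most two item outcomes, so the induced change in cost is a one-line computation that is affine in $\costun$. The main obstacle is the intermediate regime: neither the clustered nor the spread arrangement survives there (when $\costun$ is small the lone $A$ agent wants to merge, and when $\costun$ is large the clustered $A$ agents want to separate), so one must find the asymmetric arrangement $\cb{A,B}\mid\cb{A}$ and check that its lone $A$ agent neither wants to rejoin the pair (ruled out once $\costun\ge\tfrac{1}{6}$) nor wants to move to an empty item (ruled out once $\costun\le\tfrac{1}{2}$), while the $A$ agent in the pair and the $B$ agent are simultaneously content (also governed by $\costun\le\tfrac{1}{2}$). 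Identifying this arrangement and pinning down the two thresholds is the only nontrivial — and still modest — step; everything else is bookkeeping.
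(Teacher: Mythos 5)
Your proposal is correct and follows essentially the same route as the paper: both split $\costun$ into the three regimes with breakpoints $\tfrac{1}{6}\abs{\bias_a-\bias_b}$ and $\tfrac{1}{2}\abs{\bias_a-\bias_b}$ (which become $\tfrac16$ and $\tfrac12$ under your normalization) and exhibit $\cb{A,A,B}$, then $\cb{A,B}\mid\cb{A}$, then $\cb{A}\mid\cb{A}\mid\cb{B}$ as the respective stable arrangements, with the same deviation calculations. The only differences are cosmetic — your affine normalization of the biases and a slightly more explicit enumeration of deviations in the middle regime.
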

Taken together, these results prove Theorem \ref{thrm:whenstablefeweragents}. At a high level, these results show that for almost all $\nplayer<\nitem$, it is possible that no stable arrangement of players over items exists. However, Lemma \ref{lem:cuNsmallNE} shows that a stable arrangement where every agent is labeling a separate is guaranteed to exist when the cost for leaving an item unlabeled is sufficiently high:

\begin{restatable}{lemma}{cuNsmallNE}
\label{lem:cuNsmallNE}
Given $\nplayer<\nitem$, an arrangement with all agents labeling different items is stable (for both median and mean outcome) so long as the cost for leaving an item unlabeled is sufficiently high: 
$$\costun \geq 0.5 \cd \abs{\bias_a - \beta_{b}} $$ 

\end{restatable}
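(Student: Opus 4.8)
The plan is to exhibit the obvious candidate arrangement and verify Nash stability by checking every single-agent deviation directly. Since $\nplayer < \nitem$, put each of the $\nplayer$ agents on its own item, leaving $\nitem - \nplayer$ items empty. The first thing I would record is a reduction: in this arrangement every item holds at most one agent, and after any one agent moves every item holds at most two agents, so the median and the mean — which agree on multisets of size one and two — can be treated simultaneously. I would then tabulate the relevant outcomes: an item with a single type-$A$ agent has outcome $\bias_a$ and that agent pays $0$ there; an item with one agent of each type has outcome $\tfrac12(\bias_a+\bias_b)$, so a type-$A$ agent pays $\tfrac12\abs{\bias_a-\bias_b}$ there; and an empty item contributes the penalty $\costun$.

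Next I would enumerate the deviations of an agent, say of type $A$ (type $B$ being symmetric), always remembering to account for \emph{both} endpoints of the move. Moving to a currently empty item merely swaps which item is empty: the vacated item now contributes $\costun$ instead of $0$ and the destination contributes $0$ instead of $\costun$, so total cost is unchanged. Moving onto an item already holding another type-$A$ agent keeps that item's outcome at $\bias_a$ (cost $0$, unchanged) but empties the vacated item, a change of $+\costun \ge 0$, hence never profitable. The last case is moving onto an item held by a type-$B$ agent: the vacated item becomes empty ($+\costun$), while the destination's outcome shifts from $\bias_b$ to $\tfrac12(\bias_a+\bias_b)$, dropping the deviating agent's cost there from $\abs{\bias_a-\bias_b}$ to $\tfrac12\abs{\bias_a-\bias_b}$, a change of $-\tfrac12\abs{\bias_a-\bias_b}$. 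The net change is $\costun - \tfrac12\abs{\bias_a-\bias_b}$, which is nonnegative exactly under the hypothesis $\costun \geq 0.5\cd\abs{\bias_a-\bias_b}$.

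Combining the three cases, no agent can strictly decrease its cost, so the arrangement is Nash stable, proving the lemma. The argument is elementary, and I do not expect a real obstacle; the only point requiring care is the bookkeeping of the vacated item becoming empty (and hence incurring $\costun$), together with the observation that the binding constraint comes from the opposite-type move, where an agent can halve its disagreement cost on a contested item at the price of emptying its own. The reduction of median and mean to a common function on at most two agents is what lets the single computation cover both outcome functions.
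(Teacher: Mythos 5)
Your proof is correct and follows essentially the same route as the paper: a case analysis on whether the deviating agent moves to an empty item (cost unchanged) or an occupied one (vacated item becomes empty, with the opposite-type destination giving the binding condition $\costun \geq 0.5\cd\abs{\bias_a-\bias_b}$). The only difference is that you carry out the occupied-item computation directly, whereas the paper delegates it to the calculation already done in Lemma~\ref{lem:nounlabeled}.
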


Note that the condition in Lemma \ref{lem:cuNsmallNE} (when having exactly 1 agent per item is stable, given $\nitem>\nplayer$) is exactly the same as the condition in Lemma \ref{lem:nounlabeled} (when no item will ever be left unlabeled, given $\nitem \leq \nplayer$). This suggests that this level of $\costun$ could be viewed as a critical threshold which governs when certain types of arrangements are stable. 

Taken together, these results have implications for designers of real-life systems that behave like Private Blotto games. Specifically, Theorem \ref{thrm:whenstablefeweragents} suggests that for $\nplayer<\nitem$, the choice of outcome function (median or mean) is relatively unimportant. However, for almost all values of $\nplayer$, it is possible that no stable arrangement exists, unless the cost for leaving an item unlabeled is sufficiently high (Lemma \ref{lem:cuNsmallNE}). A designer of such a system could increase the cost of leaving an item unlabeled by proactively highlighting posts in need of notes (in the social media example) or giving more airtime to political issues that are under-debated (in the political competition example).

\section{Discussion}\label{sec:discussion}
In this paper, we proposed and analyzed the Private Blotto game, a multi-player game involving competition over items with different values. We focused on the impact of the outcome function on whether Nash stable arrangements exist. For the case with more agents than items, we showed that the choice of outcome function is critical. A median outcome function guarantees that the number of settings where no stable arrangement exists is finite and small compared to the number of items $\nitem$, but could involve high degrees of misallocated effort where agents are unevenly spread across items. By contrast, the mean outcome function does not guarantee stable arrangements exist, even for arbitrarily large number of players. However, when a stable arrangement exists, it is guaranteed to have low levels of misallocated effort. In Section \ref{sec:feweragent} we analyzed the case with more items than agents, showing that median and mean outcome functions behave much more similarly, and given sufficiently high cost for leaving an item unlabeled, all agents will spread evenly over items, minimizing misallocated effort. 

Throughout, we used motivating examples related to civic institutions and social welfare, such as detection of online misinformation. Our results have implications for how such tools should be developed, especially in the choice of the outcome function. Specifically, if there are many agents and guaranteeing a stable arrangement is important, median outcome (or a similar function) would likely be best, but if minimizing misallocated effort is more important, mean outcome (or a similar function) is likely better. For cases with fewer agents than items, increasing the cost of leaving items unlabeled is more important than the choice of outcome function. 
While in this work we addressed the primary question of Nash stability, there are multiple interesting extensions into the Private Blotto game. One natural extension would be to consider cases where some items are considered more important than others - for example, we may wish to have more human effort placed on discussing an important bill than on which shoes a celebrity wore to a gala. Another natural extension would be to consider more than two polarized types $A$ and $B$, potentially a continuum of biases $\{\beta_i\}$ reflecting a more nuanced set of opinions. We discuss how some of our results can be extended to both of these settings in Appendix \ref{app:unevenweight}. 

Finally, another natural extension could interpolate between the historic Colonel Blotto game and our novel Private Blotto game by allowing agents to coordinate with up to $\nplayer'>1$ other agents. For example, for $\nplayer'=2$, an arrangement would fail to be Nash stable if two agents, working together, could move and improve both of their utility. This modification would only make Nash stability harder to achieve, but could reflect scenarios with limited degrees of coordination. 

\ifarxiv
\subsubsection*{Acknowledgements}
This work was supported in part by a Simons Investigator Award, a Vannevar Bush Faculty Fellowship, MURI grant W911NF-19-0217, AFOSR grant FA9550-19-1-0183, ARO grant W911NF19-1-0057, a Simons Collaboration grant, a grant from the MacArthur Foundation, and NSF grant DGE-1650441. We are extremely grateful to Maria Antoniak, Sarah Dean, Jason Gaitonde, and the AI, Policy, and Practice working group at Cornell for invaluable discussions. 

\else 
\fi 

\clearpage 
\ifarxiv
\bibliographystyle{plainnat}
\bibliography{aaai25}
\else 
\bibliography{aaai25} 
\fi 

\clearpage 
\ifarxiv

\else 

This paper:

\begin{itemize}
\item Includes a conceptual outline and/or pseudocode description of AI methods introduced:  yes
\item Clearly delineates statements that are opinions, hypothesis, and speculation from objective facts and results: yes
\item Provides well marked pedagogical references for less-familiare readers to gain background necessary to replicate the paper: yes
\item Does this paper make theoretical contributions? yes
\end{itemize}

If yes, please complete the list below.

\begin{itemize}
\item All assumptions and restrictions are stated clearly and formally: yes
\item All novel claims are stated formally (e.g., in theorem statements): yes
\item Proofs of all novel claims are included: yes
\item Proof sketches or intuitions are given for complex and/or novel results: partial
\item Appropriate citations to theoretical tools used are given. yes
\item All theoretical claims are demonstrated empirically to hold: partial. 
\item All experimental code used to eliminate or disprove claims is included: Included in supplementary. 
\item Does this paper rely on one or more datasets? NO
\end{itemize}

If yes [experiments/computation], please complete the list below.

\begin{itemize}
\item Any code required for pre-processing data is included in the appendix: N/A.  
\item All source code required for conducting and analyzing the experiments is included in a code appendix: In supplementary material. 
\item All source code required for conducting and analyzing the experiments will be made publicly available upon publication of the paper with a license that allows free usage for research purposes. Yes
\item All source code implementing new methods have comments detailing the implementation, with references to the paper where each step comes from: yes
\item If an algorithm depends on randomness, then the method used for setting seeds is described in a way sufficient to allow replication of results. N/A
\item This paper specifies the computing infrastructure used for running experiments (hardware and software), including GPU/CPU models; amount of memory; operating system; names and versions of relevant software libraries and frameworks. N/A, code runs in 10 minutes on a laptop. 
\item This paper formally describes evaluation metrics used and explains the motivation for choosing these metrics. Yes
\item This paper states the number of algorithm runs used to compute each reported result. Yes (1). 
\item Analysis of experiments goes beyond single-dimensional summaries of performance (e.g., average; median) to include measures of variation, confidence, or other distributional information. N/A
\item The significance of any improvement or decrease in performance is judged using appropriate statistical tests (e.g., Wilcoxon signed-rank). N/A (no randomness). 
\item This paper lists all final (hyper-)parameters used for each model/algorithm in the paper’s experiments. N/A
\item This paper states the number and range of values tried per (hyper-) parameter during development of the paper, along with the criterion used for selecting the final parameter setting. N/A
\end{itemize}
\fi 
\appendix
\clearpage 

\section{Further related works}\label{app:related}

\subsection{Colonel Blotto}\label{app:blottorelated}
Colonel Blotto games (first proposed in \cite{borel1921theorie}) is a game theoretic model where two different players, $A$ and $B$, compete to allocate effort across multiple \enquote{battlefields} or \enquote{items}, which may vary in how much each player values them. Typically, an player wins an item if they exert more effort there, and one main question of interest is when Nash equilibria of this system exist. The literature on Colonel Blotto games is extremely broad, so will focus on a few of the most relevant papers. Recently, \cite{ahmadinejad2019duels} included a polynomial time algorithm for computing equilibria of the standard Colonel Blotto game, as well as related zero-sum games, a line of work that has been further extended \cite{Behnezhad_Dehghani_Derakhshan_Hajiaghayi_Seddighin_2017, adam2021double}. 

First, we will highlight some of the most commonly-studied variants of Colonel Blotto games. 
\cite{golman2009general} proposes the \enquote{General Blotto} game, which generalizes Colonel Blotto to permit multiple player types which have smooth utility functions over items and over combinations of items. \cite{hart2008discrete} proposes the \enquote{General Lotto} game where each player selects a probabilistic distribution of effort over each items and gets utility given by the probability that a randomly drawn level of their effort beats their opponents' random draw. Separately, \cite{OSORIO2013164} proposes the \enquote{Lottery Blotto} game, where players allocate effort deterministically, but the player that allocates greater effort to an item doesn't win deterministically, but rather probabalistically. This formulation is related to Tullock Contests Success functions (originally proposed in \cite{Tullock2001}, also studied in \cite{OSORIO2013164}, \cite{skaperdas1996contest}) where two players are competing over contests where they win probabalistically related to their effort (similar to our mean outcome function). \cite{attackdefense} similarly studies contest functions where items are connected in a network and an asymmetric \enquote{attacker} and \enquote{defender} are allocating resources across these items. 

Next, we will highlight the papers that are closest to ours. \cite{schwartz2014heterogeneous} gives Nash stability results for the Colonel Blotto game where players vary in their strength (amount of resources) and items vary in their value, so long as there are at least three items with each value. \cite{kovenock2012coalitional} studies a limited form of coalitions where exactly two players $A$ and $B$ may form an alliance before playing a common opponent $C$. \cite{boix2020multiplayer} proposes and studies the \enquote{multi-player Colonel Blotto game}, which extends the classical Colonel Blotto structure to more than 2 players. \cite{anbarci2020proportional} studies a variant of Colonel Blotto with more than two competing forces, but where items are presented sequentially, rather than simultaneously. \cite{mazur2017partial} studies Nash equilibria of Colonel Blotto games with exactly 2 items, but where outcome functions are constrained to be a polynomial function of the difference of each type's allocation across the items. 

In general, Colonel Blotto games and their variants differ from ours in a few ways. First, rather than assuming the player can control multiple agents, we assume each agent acts independently (a private citizen as opposed to a soldier). Because any of these agents could \enquote{win}, this dramatically increases the number of potential outcomes. However, in our second main difference, we assume that agents have some degree of similarity in their goals: agent $A$ may be more closely aligned with $B$ than $C$, for example. Finally, we study a more general class of settings than is typically studied in Colonel Blotto, allowing for arbitrary numbers and valuations of items, as well as more general notions of winning (all or nothing, as well as a more smooth fractional utility). 

\subsection{Crowdsourcing}\label{app:crowdrelated}
The area of crowdsourcing has been studied experimentally and theoretically in a wide range of papers. Again, we will focus on summarizing those that are most closely related to ours. Some papers, such as \cite{hettiachchi2022survey}, \cite{zhang2017consensus} study how to assign different crowd workers to multiple tasks in order to maximize the expected accuracy of labels. 

A more nascent branch of crowdsourcing considers the case where crowd workers may have agency over which items they choose to label. \cite{6195597} studies reputation-based mechanisms to incentivize crowd workers to exert effort on items that they are assigned. Our model is especially relevant to fact-checking on social media sites that allow voluntary labels by (potentially biased) users, such as on Facebook, Wikipedia and Twitter \cite{yasseri2021can}, especially Twitter's BirdWatch tool \cite{wojcik2022birdwatch}. \cite{birdsdontfactcheck}\cite{prollochs2022community} study how partisan affiliation, among other facts, affects how users on Twitter choose which tweets to fact check. \cite{saeed2022crowdsourced} compares the accuracy of labels produced by voluntary, biased crowdsource workers to expert labels. 

Our paper differs from most crowdsourcing papers in how it allows crowdsource workers to act as voluntary, potentially biased agents with some agency over which items they choose to label. Our work also differs stylistically in that we focus primarily on Nash Equilibria of such systems, an area that has typically not been explored previously. 

\section{Arbitrary weights over items, arbitrary numbers of agent types (biases)}\label{app:unevenweight}
In the main body of the text, we proposed and studied Private Blotto, a variant of the celebrated Colonel Blotto game when agents move in a disaggregated fashion, without a central Colonel. We saw that questions of Nash stability were already rich and complex even when all agents belong to one of two types and items have equal weight, just as Colonel Blotto games already exhibit rich structure with just two colonels and when all items (fronts) have equal importance. 

However, there are natural extensions of the Private Blotto setting that are especially relevant in the motivating examples of social media platform annotation and political issue competition. In this section, we propose a few extensions of the Private Blotto game and discuss how our results would generalize. Specifically, we will discuss two changes: 
\begin{enumerate}
\item First, we allow different items can have different levels of importance (\emph{weights}, $\weight_i \in [\nitem]$). This could represent settings when certain items are commonly agreed to have greater importance, and thus winning them becomes more sought-after. Given this, the cost for agents becomes a weighted sum over items: 
$$\sum_{i \in [\nitem], \abs{S_i}>0}\weight_i \cd \abs{f(S_i) - \bias_{\type}} + \sum_{i \in [\nitem], \abs{S_i} =0} \weight_i \cd \costun$$
\item Second, we allow agents to come in arbitrarily many different biases $\{\bias_i\}$, rather than exactly two groups. This could model a continuum of opinions, for example, or alliances among multiple differing subgroups. 
\end{enumerate}

Next, we discuss how some of our results could generalize, given these extensions of the Private Blotto model. We follow the organization of the main paper, and divide our generalizations into two broad cases: when there are more agents than items (so that $\nplayer \geq \nitem$, in Appendix \ref{app:moreagents}) and when there are more items than agents (so that $\nitem > \nplayer$, in Appendix \ref{app:moreitem}). All proofs for this section are given in Appendix \ref{app:proofsunevenweight}.

\subsection{More agents (extension of Section \ref{sec:moreagents})}\label{app:moreagents}

First, Lemma \ref{lem:nounlabeledw} generalizes the condition for which no item will be left unlabeled. 
\begin{restatable}{lemma}{nounlabeledw}
\label{lem:nounlabeledw}
If there are more agents than items ($\nplayer \geq \nitem$) and the cost for leaving a item empty is sufficiently high, then no item will be left empty, regardless of if median or mean outcome function is used. Specifically, this occurs when:  
$$\costun\geq \frac{\max_{i \in \nitem}\weight_i}{\min_{j \in \nitem}\weight_j} \cd \max_{k, \ell \in[\type]} \abs{\bias_k - \bias_{\ell}} \cd \frac{1}{2}$$ 
Moreover, if there are exactly two types of agents with biases $\bias_a, \bias_b$, then for both median and mean outcome functions, agent strategy becomes independent of biases $\bias_a, \bias_b$ and relies solely on the number of agents of each type on each item, $\{a_i, b_i\}, \ i \in [\nitem]$. 
\end{restatable}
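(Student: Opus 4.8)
The plan is to mirror the proof of Lemma~\ref{lem:nounlabeled} while tracking the worst-case distortion introduced by non-uniform weights. First I would suppose, for contradiction, that some arrangement leaves item $i$ empty while item $j$ has $|S_j| \geq 2$ agents on it (such a pair must exist when $\nplayer \geq \nitem$ and at least one item is empty, by pigeonhole). Pick any agent on item $j$, say with bias $\bias_t$, and consider the deviation in which that agent moves to the empty item $i$. I would bound the change in cost contributed by item $i$: before the move the agent pays $\weight_i \cd \costun$ for $i$ being empty; after the move $i$ is a singleton with outcome $\bias_t$ (for both median and mean, a singleton's outcome is the agent's own bias), so the agent pays $\weight_i \cd |\bias_t - \bias_t| = 0$ there. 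Hence the agent strictly saves at least $\weight_i \cd \costun \geq \min_j \weight_j \cd \costun$ on item $i$.

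Next I would upper-bound how much the agent could lose on item $j$ by leaving it. The outcome on $j$ changes from $f(S_j)$ to $f(S_j \setminus \{t\})$. Both of these are values lying in the convex hull of the biases present, i.e.\ in $[\min_k \bias_k, \max_\ell \bias_\ell]$, so they differ by at most $\max_{k,\ell \in [\type]} |\bias_k - \bias_\ell|$; moreover the agent's own bias $\bias_t$ also lies in that interval, so $\big| |f(S_j) - \bias_t| - |f(S_j \setminus \{t\}) - \bias_t| \big| \leq \max_{k,\ell} |\bias_k - \bias_\ell|$ as well (the outcome on $j$ is still defined since $|S_j| \geq 2$ means $S_j \setminus \{t\}$ is nonempty). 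Weighting by $\weight_j \leq \max_i \weight_i$, the agent loses at most $\max_i \weight_i \cd \max_{k,\ell} |\bias_k - \bias_\ell|$ on item $j$. All other items are untouched. Therefore the move is strictly improving whenever
$$\min_j \weight_j \cd \costun \;>\; \max_i \weight_i \cd \max_{k,\ell \in [\type]} |\bias_k - \bias_\ell| \cd \tfrac{1}{2},$$
wait — I should be more careful with the constant: the clean argument actually shows the deviation helps as soon as $\min_j \weight_j \cd \costun$ exceeds the loss on $j$, and a short case analysis (the outcome on $j$ moves toward or away from $\bias_t$, and in the worst case shifts by half the diameter for the median/mean of a two-element symmetric configuration) gives the stated threshold $\costun \geq \frac{\max_i \weight_i}{\min_j \weight_j} \cd \max_{k,\ell}|\bias_k - \bias_\ell| \cd \frac12$. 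This contradicts stability, so no item is left empty.

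For the second claim, I would argue exactly as in Lemma~\ref{lem:nounlabeled}: once it is established that no item is ever empty, the empty-item penalty term $\weight_i \cd \costun$ never appears in any agent's realized cost, so in any candidate stable arrangement and for any single-agent deviation, the cost difference depends only on the multisets $\{a_i, b_i\}$ of type counts and the fixed values $\bias_a, \bias_b$; rescaling/shifting shows the comparison of any two arrangements reduces to the integer profile $\{a_i, b_i\}$, independent of the particular values $\bias_a, \bias_b$.

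The main obstacle I anticipate is pinning down the exact constant in the weighted threshold — in particular verifying that the worst case over the starting configuration on item $j$ (number and types of agents there) really is the symmetric two-agent case that produces the factor $\tfrac12$, for \emph{both} the median and the mean simultaneously. Everything else is a routine adaptation of the unweighted argument; the weights only enter through the ratio $\max_i \weight_i / \min_j \weight_j$, which cleanly factors out of the inequality.
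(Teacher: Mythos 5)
Your overall strategy is the same as the paper's: consider the single-agent deviation from an occupied item to an empty one, note the gain of $\weight_i \cd \costun \geq \min_j \weight_j \cd \costun$ on the empty item, bound the worst-case loss on the item being left, and observe that the weights enter only through the ratio $\max_i \weight_i / \min_j \weight_j$. The structure, the pigeonhole observation, and the treatment of the second claim (once no item is empty, every per-item cost is a count-determined multiple of $\abs{\bias_a - \bias_b}$, so incentives depend only on $\{a_i, b_i\}$) all line up with the paper.

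The genuine gap is exactly the one you flag and then wave away: the factor of $\tfrac{1}{2}$. Your convex-hull argument only shows that $f(S_j)$ and $f(S_j \setminus \{t\})$ both lie in $[\min_k \bias_k, \max_\ell \bias_\ell]$, which bounds the loss on item $j$ by the full diameter, i.e.\ it proves sufficiency of a threshold twice as large as the one in the statement. Asserting that ``a short case analysis gives the stated threshold'' is not a proof, and the worst case is not obvious a priori. The paper closes this by direct computation: for the mean, a type-$A$ agent's per-item cost is $\frac{b}{a+b}\abs{\bias_a - \bias_b}$, so the loss from leaving is $\abs{\bias_a - \bias_b} \cd b \cd \bigl(\frac{1}{a+b-1} - \frac{1}{a+b}\bigr) = \frac{b}{(a+b)(a+b-1)}\abs{\bias_a - \bias_b}$, which is maximized at $a = b = 1$ where it equals half the gap; for the median, the outcome before and after removal each take one of only three values ($\bias_a$, the midpoint, or $\bias_b$), and enumerating the four possible transitions shows the worst loss is again $0.5 \cd \abs{\bias_a - \bias_b}$. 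Without that computation your argument establishes a weaker lemma than the one stated. (A smaller secondary point: both your sketch and the paper's proof really only verify the two-type case, even though the threshold in the statement is written with a max over all pairs of types; if you intend the multi-type generality suggested by the statement, the mean case still works, with the worst case at two agents, but the median case needs its own argument.)
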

Note that if there are more than two types of player biases, then agent strategy is no longer independent of player biases. Thus, extending to the setting with $\ntype>2$ would likely substantially change our results. We defer this to future work and focus on describing how importance weights on items would impact our results. 
\subsubsection{Median outcome function}
In this section, many of our results will directly translate to cases with weighted items, though a few results will change in this more general setting. 

First, we note that Lemmas \ref{lem:medianneverstable} and \ref{lem:median2mstable} already apply equally well to the weighted case: they give conditions for when an arrangement is stable when no player can unilaterally reduce their cost (e.g. go from losing a front to tying, or from tying to winning): doing so is orthogonal to the weights on individual fronts.

Lemma \ref{lem:exmednotstab}, below, demonstrates why the equal weights condition in Lemma \ref{lem:medianstableequal} is necessary. If even the weight of one item is even slightly higher than the weight in another, there exist conditions where no stable arrangement exists. 

\begin{restatable}{lemma}{exmednotstab}
\label{lem:exmednotstab}
Set $\nitem = 3, \nplayer_a = \nplayer_b = 3$, with $w_1 = w_2 + \epsilon$ and $w_2 = w_3 = w_4$, and and cost satisfying Lemma \ref{lem:nounlabeled}.  Then, the arrangement proposed by Lemma \ref{lem:medianstableequal} is not stable, and moreover, there is no possible stable arrangement. 
\end{restatable}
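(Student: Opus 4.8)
Since the statement names four weights $w_1,\dots,w_4$, I take it to concern $\nitem=4$ items (so that ``$\nitem=3$'' should read $\nitem=4$), with $\nplayer_a=\nplayer_b=3$. As $\nplayer=6\geq 4=\nitem$ and $\costun$ meets the hypothesis of Lemma~\ref{lem:nounlabeledw}, that lemma guarantees no item is ever left empty and that every agent's incentives depend only on the occupancy counts $(a_i,b_i)$; so I may assume $\bias_a\neq\bias_b$ and write $d=\abs{\bias_a-\bias_b}>0$. The first step is a \emph{deviation calculus}. When a single agent relocates, the median outcome changes on at most the item it vacates and the item it enters, and on each of those by at most one ``step'' (a win-by-one and a tie being adjacent). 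For a type-$A$ agent moving from item $i$ to item $j$ (symmetrically for type $B$), the change $\Delta$ in its total cost is: $\Delta\geq 0$ if the arrival leaves $j$'s winner unchanged (that is, $b_j-a_j\notin\{0,1\}$); $\Delta=-\tfrac{1}{2}w_j d<0$ if the arrival tips $j$ toward $A$ while the departure leaves $i$'s winner unchanged (that is, $a_i-b_i\notin\{0,1\}$); and $\Delta=\tfrac{1}{2}(w_i-w_j)d$ if both items tip. (Should the departure empty item~$i$, the term $+\tfrac{1}{2}w_i d$ becomes $+w_i\costun$; none of the deviations below ever do this.)

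Second, I would enumerate the finitely many ways to seat three type-$A$ and three type-$B$ agents on four items with no item empty. Using $\sum_i(a_i-b_i)=0$, one checks that, up to permuting the three equal-weight items and up to swapping the roles of $A$ and $B$, every such seating is either (i) the ``two ties'' seating $\{(1,1),(1,1),(1,0),(0,1)\}$ --- which is precisely the output of the construction in Lemma~\ref{lem:medianstableequal} (indeed the unique seating that is stable when all four weights are equal) --- or (ii) one in which some side wins an item by a margin of at least $2$, or there is a lone minority agent on a win-by-one item; examples of (ii) include $\{(2,1),(1,0),(0,1),(0,1)\}$, $\{(2,0),(1,0),(0,2),(0,1)\}$, $\{(2,0),(1,1),(0,1),(0,1)\}$, $\{(1,1),(1,0),(1,0),(0,2)\}$, and $\{(3,0),(0,1),(0,1),(0,1)\}$. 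For every seating of type (ii) the same move is strictly improving: take the distinguished agent off the item its side wins comfortably (so that its departure does not change that item's winner) and move it to a pivotal item that leans toward its opponent; the destination then tips one step in the mover's favor, so by the calculus $\Delta=-\tfrac{1}{2}w_j d<0$ --- which holds for \emph{all} weights, in particular for $w_1=w_2+\epsilon$, and empties no item. Hence every seating except $\{(1,1),(1,1),(1,0),(0,1)\}$ is unstable.

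Third, I would dispatch the surviving seating $\{(1,1),(1,1),(1,0),(0,1)\}$, and here the perturbation $w_1=w_2+\epsilon$ does the work (at equal weights this seating is weakly Nash stable, which is what Lemma~\ref{lem:medianstableequal} asserts). I case on the role played by the strictly heaviest item~$1$. If item~$1$ is one of the two tied items: move a type-$A$ agent from the \emph{other} tied item onto item~$1$; item~$1$ tips to an $A$-win, the vacated tied item tips to a $B$-win, and $\Delta=\tfrac{1}{2}(w_2-w_1)d=-\tfrac{\epsilon d}{2}<0$. If item~$1$ is the $A$-win-by-one item: move a type-$B$ agent from a tied item onto item~$1$; item~$1$ tips to a tie, the vacated item tips to an $A$-win, and again $\Delta=-\tfrac{\epsilon d}{2}<0$; the case where item~$1$ is the $B$-win item is symmetric, using a type-$A$ agent. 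No item is emptied in any subcase. Therefore this seating is unstable as well, which simultaneously proves both conclusions of the lemma: the arrangement of Lemma~\ref{lem:medianstableequal} is not stable, and no stable arrangement exists.

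\textbf{Where the difficulty lies.} The idea is one line --- a strictly heaviest item is always pivotal, so whoever can tip it in their own favor strictly gains --- but the effort is in the finite enumeration: checking that the case list of seatings is complete up to the stated symmetries, and that each exhibited deviation is legal (in particular leaves every item occupied, so that $\costun$ enters only through Lemma~\ref{lem:nounlabeledw}). The genuinely delicate point is recognizing that $\{(1,1),(1,1),(1,0),(0,1)\}$ is the single equal-weights-stable seating and is the only one for which the proof must invoke $\epsilon>0$; for all other seatings a weight-independent deviation already destabilizes the arrangement.
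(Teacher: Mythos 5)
Your proposal is correct and follows essentially the same strategy as the paper's proof: exhaustively enumerate the (finitely many) non-empty seatings of three $A$'s and three $B$'s on four items, exhibit a weight-independent improving deviation for every seating other than $\{(1,1),(1,1),(1,0),(0,1)\}$, and then use the strict inequality $w_1 > w_2$ to destabilize that last seating via a single agent hopping between a tied item and the heaviest item. Your ``deviation calculus'' and occupancy-profile enumeration are a somewhat cleaner packaging of the same case analysis the paper carries out with explicit per-case cost sums (and you correctly repair the statement's $\nitem=3$ typo to $\nitem=4$, consistent with the paper's own proof).
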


Finally, Lemma \ref{lem:medianstablerelaxequal} extends Lemma \ref{lem:medianstableequal} by relaxing the requirement that the item weights be exactly equal. Instead, this proof shows that it is sufficient to have the two items with highest weight be equal, while no other two items differ in weight by more than a factor of 2. Taken together, these lemmas prove Theorem \ref{thrm:medianmoreagents} and characterize the stability of the Private Blotto game with median outcome function and more agents than items. 

\begin{restatable}{lemma}{medianstablerelaxequal}
\label{lem:medianstablerelaxequal}[Extension of Lemma \ref{lem:medianstableequal}]
Any other number of agents ($\nplayer_a, \nplayer_b$) with $\nplayer_a + \nplayer_b \leq 2 \cd \nitem$ (besides those in the median-critical region) always has a stable arrangement, given cost satisfying Lemma \ref{lem:nounlabeled} and weights in descending order satisfying 
$\weight_0 = \weight_1 \text{ and }  \weight_i \leq 2 \cd \weight_j \ \forall i, j \in [\nitem]$. 
\end{restatable}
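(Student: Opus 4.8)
The plan is to reuse the explicit stable arrangement built in the proof of Lemma~\ref{lem:medianstableequal} for the equal-weight case and re-verify it, after a weight-sensitive choice of which physical item carries which agent-profile, under the relaxed weight hypotheses. First I would record the structural fact that makes the median case manageable: with a median outcome the (weighted) cost is additive over items, and on an item $i$ a fixed agent's contribution takes only one of three values, namely $0$, $\tfrac12 \weight_i \abs{\bias_a - \bias_b}$, or $\weight_i \abs{\bias_a - \bias_b}$, according to whether that agent's type wins, ties, or loses on $i$ (and $\weight_i \costun$ if $i$ is empty, which by the hypothesis $\costun \geq \tfrac12 \abs{\bias_a - \bias_b}$ of Lemma~\ref{lem:nounlabeled} is at least the ``loses'' value). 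Hence a single-agent deviation from item $i$ to item $j$ changes only the contributions of $i$ and $j$, each by at most one ``level'', so it is strictly profitable iff the weighted improvement at $j$ strictly exceeds the weighted worsening at $i$. This reduces Nash stability to a finite family of inequalities of the shape ``$\weight_j \cdot [\,j\text{ is flippable toward the deviator}\,] \leq \weight_i \cdot [\,i\text{ is pivotal for the deviator}\,]$'', with the case in which the deviation empties $i$ controlled by $\costun$.

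Second, I would split into exactly the sub-cases used in Lemma~\ref{lem:medianstableequal} --- namely $\nplayer_b = 0$; $\nplayer_a \leq \nitem$; and $\nplayer_a > \nitem$ together with $\nplayer_b \geq \nplayer_a - \nitem$ --- and in each take the multiset of item-profiles $\cb{(a_i, b_i)}$ from that proof. The equal-weight argument already establishes the weight-\emph{independent} half of the inequalities above: in the case $\weight \equiv 1$ each becomes ``if $j$ is flippable then $i$ is pivotal'', so stability there forces exactly the pivotality structure (whenever some item is contestable, no agent of the relevant type sits on a non-pivotal item). What remains is to assign the profiles to physical items in an order consistent with $\weight_1 \geq \weight_2 \geq \cdots \geq \weight_\nitem$ so that the surviving, weight-sensitive inequalities also hold. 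Concretely I would place the at-most-two genuinely contested profiles on the two heaviest items, using $\weight_1 = \weight_2$ so that neither type can profit by trading one of these for the other, and I would invoke the spread bound $\weight_i \leq 2\weight_j$ together with the cost bound of Lemma~\ref{lem:nounlabeled} to rule out the deviations that empty a singleton item and those that move between two lighter items, where the worsening at the source dominates the improvement at the destination once all weights lie within a factor of two.

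The step I expect to be the real obstacle is precisely this last piece of bookkeeping: showing that in every sub-case there is one weight-respecting assignment of the Lemma~\ref{lem:medianstableequal} profiles for which \emph{all} single-agent deviations are simultaneously unprofitable, including the awkward ones that empty an item or move between two middle-weight items. The delicate regimes are the small ones in which contestable items cannot be avoided (sums $\nplayer_a + \nplayer_b$ near $\nitem$, or equal to $2\cd\nitem$ with an odd part); these are exactly where $\weight_1 = \weight_2$ and the factor-$2$ spread are each genuinely consumed. Since Lemma~\ref{lem:exmednotstab} shows that relaxing either hypothesis can destroy stability, the write-up must also pinpoint why these two hypotheses together are sufficient, i.e.\ trace each weight inequality back to either the equal-top-weight pair or the bounded-spread condition.
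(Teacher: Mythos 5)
Your proposal matches the paper's own proof: the paper likewise reuses the tied-items-plus-singletons construction from Lemma~\ref{lem:medianstableequal}, adds the requirement that any tied item carry weight at least that of any singleton item, uses $\weight_1 = \weight_2$ to neutralize trades between the two contested items, and dismisses deviations that empty an item via the cost condition. Your accounting of which hypothesis blocks which deviation (in particular, that the factor-of-two spread is consumed by the item-emptying deviations) is in fact somewhat more explicit than the paper's write-up, which leaves that step implicit.
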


\subsubsection{Mean outcome function}
In Section \ref{sec:moreagentsmean} we included Figure \ref{fig:stable}, which illustrated when no stable arrangement existed for $\nitem=2$ items with equal weight. This figure showed that there persistent (in)stability, even for large numbers of agents $\nplayer_a, \nplayer_b$. Theorem \ref{thrm:meanbigunstable} formalized this intuition by showing that for all $\nplayer_a, \nplayer_b$, there existed a pair $\nplayer_a'\geq \nplayer_a, \nplayer_b' \geq \nplayer_b$ such that no stable arrangement exists for $\nplayer_a', \nplayer_b'$ players. 

In this section, we will explore how this story changes when the items are allowed to have different weights. In particular, we will show that the core result changes: given sufficiently unequal weights, there always exists a stable arrangement. First, Figure \ref{fig:stable_diffw} illustrates this numerically, showing that as the weights on two different items become more unequal, eventually every $(\nplayer_a, \nplayer_b)$ combination has a stable arrangement. Next, Theorem \ref{thrm:meanbigunstablew} formally proves this: for every $(\nplayer_a, \nplayer_b)$ pair, there exists a set of weights $\weight_1,\weight_2$ such that there is always a stable arrangement. 

\begin{figure}
    \centering
    \includegraphics[width=3in]{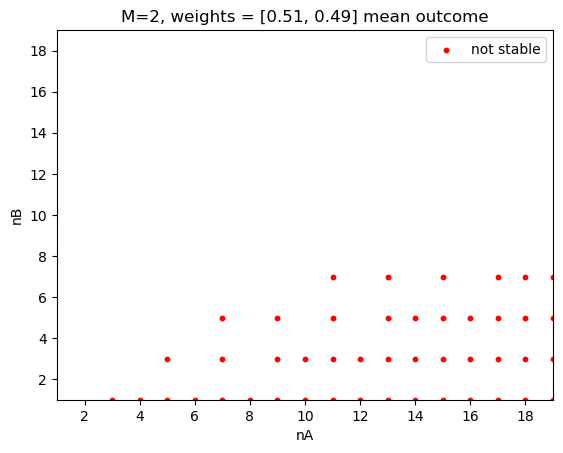}
    \includegraphics[width=3in]{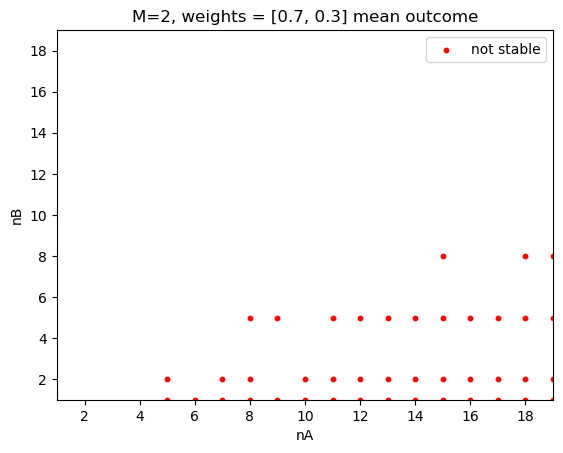}
    \includegraphics[width=3in]{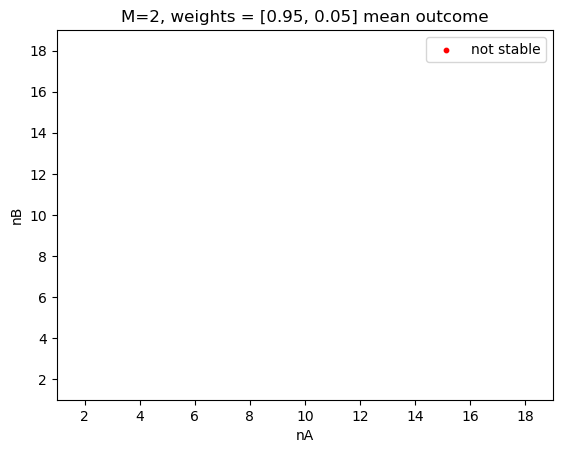}
    \caption{Version of Figure \ref{fig:stable} illustrating when stable arrangements of agents onto items exist for mean outcome ($\nitem=2$ items), but with differing weights on the $\nitem=2$ items. For clarity, only displayed for $\nplayer_a \geq \nplayer_b$. }
    \label{fig:stable_diffw}
\end{figure}

\begin{restatable}{theorem}{meanbigunstablew}\label{thrm:meanbigunstablew}
For all $\nplayer_a\geq \nplayer_b\geq 1$, there always exists some weights $\weight_1, \weight_2$ such that there is always a stable arrangement of players onto items (given cost $\costun$ satisfying the conditions of Lemma \ref{lem:nounlabeledw}).
\end{restatable}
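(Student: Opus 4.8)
The plan is to prove the statement by an explicit construction: for each $(\nplayer_a, \nplayer_b)$ I would choose weights $(\weight_1, \weight_2)$ on the $\nitem = 2$ items together with an arrangement, and then check directly that no single agent can profitably deviate. First some normalization. If $\bias_a = \bias_b$ every mean outcome equals the common bias, so every arrangement with no empty item is trivially stable; hence assume $\bias_a \neq \bias_b$, and (rescaling costs, which does not change which arrangements are stable) assume $\abs{\bias_a - \bias_b} = 1$. By Lemma \ref{lem:nounlabeledw}, once $\costun$ is at least the stated bound, no stable arrangement leaves an item empty, and agents' costs are bias-free: on an item with composition $(a_i, b_i)$ a type-$A$ agent pays $\weight_i \cd b_i/(a_i + b_i)$, a type-$B$ agent pays $\weight_i \cd a_i/(a_i+b_i)$, and an empty item $i$ costs $\weight_i \costun$. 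Write $\nplayer = \nplayer_a + \nplayer_b$.

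The construction: set $\weight_1 = 1$ and $\weight_2 = W$ for a sufficiently large $W$ (the exact threshold comes out of the verification), place one type-$A$ agent alone on the lighter item $1$, and place the remaining $\nplayer_a - 1$ type-$A$ agents together with all $\nplayer_b$ type-$B$ agents on the heavier item $2$; item $2$ is nonempty since $\nplayer_b \geq 1$, and $\costun$ is taken to meet the bound of Lemma \ref{lem:nounlabeledw}, which with these weights reads $\costun \geq W/2$. The degenerate case $\nplayer_a = 1$ forces $\nplayer_b = 1$ and $\nplayer = 2 = \nitem$; there the construction is just one $A$ on item $1$ and one $B$ on item $2$, easily seen to be stable for any weights once $\costun$ meets the Lemma's bound, so from here I assume $\nplayer_a \geq 2$. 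The intuition for why this is stable: the light item's outcome is exactly $\bias_a$, so its occupant is perfectly content and no type-$A$ agent gains by crowding onto it; and because $\weight_2 \gg \weight_1$, the small benefit a type-$B$ agent would get from moving to the light item (pulling its outcome from $\bias_a$ toward $\bias_b$) is dominated by what it loses on the heavy item by thinning the $B$-share there.

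The verification enumerates the at most four single-agent deviations. A type-$B$ agent leaving item $1$ is vacuous, since item $1$ holds only the lone type-$A$ agent. The lone type-$A$ agent leaving item $1$ empties that item, costing $\weight_1 \costun = \costun$, while the cost it can save on item $2$ is $W\nplayer_b/(\nplayer(\nplayer-1))$, which is at most $W/2 \leq \costun$ because $\nplayer(\nplayer-1) \geq 2\nplayer_b$; so this move does not help. A type-$A$ agent moving from item $2$ to item $1$ leaves its item-$1$ cost at $0$ (item $1$ stays all-$A$) while strictly raising its item-$2$ cost (removing an $A$ there pulls the outcome away from $\bias_a$), so it does not help either. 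The only delicate case --- and the one that pins down the threshold --- is a type-$B$ agent moving from item $2$ to item $1$: its item-$1$ cost drops by $\weight_1/2 = 1/2$ while its item-$2$ cost rises by $W(\nplayer_a - 1)/((\nplayer-1)(\nplayer-2))$, so the move fails to improve precisely when $W \geq \frac{(\nplayer-1)(\nplayer-2)}{2(\nplayer_a-1)}$, and choosing $W$ at least this value (which is itself $\geq 1$, consistent with item $2$ being the heavier item) makes the arrangement a pure Nash equilibrium; the $\nplayer_b = 1$ sub-case is covered by the same inequality. I expect the main obstacle to be locating the right configuration rather than the estimates: the first natural guess --- one agent of each type on the light item --- is \emph{not} stable for skewed populations ($\nplayer_a \gg \nplayer_b$), because the lone type-$B$ agent there prefers to flee to the heavy item; the fix is to put a single agent of the \emph{majority} type on the light item so that item is uncontested, after which the deviation checks are routine.
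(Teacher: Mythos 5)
Your proposal is correct and is essentially the paper's own proof: the paper uses the identical construction (a single majority-type agent alone on the low-weight item, all remaining agents on the high-weight item, with the degenerate case $\nplayer_a=\nplayer_b=1$ handled separately), and the same pivotal deviation check — a type-$B$ agent leaving the crowded item — yields the same weight-ratio threshold $\weight_{\text{heavy}}/\weight_{\text{light}} \geq (\nplayer-1)(\nplayer-2)/(2(\nplayer_a-1))$. The only cosmetic differences are the item labeling/normalization of the weights and that the paper dismisses the singleton's deviation directly via Lemma \ref{lem:nounlabeledw} rather than by explicit computation.
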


Theorem \ref{thrm:meanbowlw} generalizes Theorem \ref{thrm:meanbowl}, showing that if players could be allocated fractionally, the proportional arrangement would always be stable. 

\begin{restatable}{theorem}{meanbowlw}
\label{thrm:meanbowlw}
For $\nitem$ items with two types of agents, $A$ and $B$ with mean outcome and $\costun$ satisfying the conditions of Lemma \ref{lem:nounlabeled}, if players are allowed to be allocated fractionally over items, then the stable arrangement is always given by $a_i = \weight_i \cd \nplayer_a, b_i = \weight_i \cd \nplayer_b$. 
\end{restatable}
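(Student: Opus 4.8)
The plan is to mirror the proof of Theorem~\ref{thrm:meanbowl}, carrying the item weights $\weight_i$ through each step; the only genuinely new ingredient is a short first-order computation. First I would reduce to a clean form. Since multiplying all weights by a common constant rescales every agent's cost by that constant and leaves the threshold of Lemma~\ref{lem:nounlabeledw} unchanged (it depends only on $\max_i \weight_i/\min_j \weight_j$), I may assume $\sum_i \weight_i = 1$, so that the claimed arrangement is $a_i = \weight_i \nplayer_a$, $b_i = \weight_i \nplayer_b$. Order the types so that $\bias_a \geq \bias_b$. By Lemma~\ref{lem:nounlabeledw} (the weighted analogue of Lemma~\ref{lem:nounlabeled}) no item is empty at any candidate stable arrangement, so $a_i + b_i > 0$ for every $i$, the mean outcome on item $i$ is $f(S_i) = \frac{a_i \bias_a + b_i \bias_b}{a_i + b_i}$, and $\abs{f(S_i) - \bias_a} = \frac{b_i}{a_i + b_i}\p{\bias_a - \bias_b}$ while $\abs{f(S_i) - \bias_b} = \frac{a_i}{a_i + b_i}\p{\bias_a - \bias_b}$. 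Writing $\Phi(a,b) = \sum_i \weight_i \frac{b_i}{a_i + b_i}$, every type-$A$ agent then has cost $\p{\bias_a - \bias_b}\Phi(a,b)$ and every type-$B$ agent has cost $\p{\bias_a - \bias_b}\p{1 - \Phi(a,b)}$ (using $\sum_i \weight_i = 1$). Thus the fractional game is zero-sum between the $A$-agents, who collectively control $(a_i)$ on the simplex $\sum_i a_i = \nplayer_a$ and want to minimize $\Phi$, and the $B$-agents, who control $(b_i)$ on $\sum_i b_i = \nplayer_b$ and want to maximize $\Phi$.

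Next I would verify that the proportional arrangement is a Nash equilibrium. View a deviating type-$A$ agent's strategy as a point $x$ in the simplex over items, so its cost is (up to the positive factor $\bias_a - \bias_b$) a sum of terms $\weight_i \frac{b_i}{a_i + b_i}$ with $a_i = x_i + (\text{fixed contribution of other } A\text{-agents})$. Each such term is convex in $a_i$, hence in $x_i$, since $\partial^2_{a_i}\!\p{\weight_i \frac{b_i}{a_i+b_i}} = \frac{2\weight_i b_i}{(a_i+b_i)^3} > 0$; so the agent's cost is convex in $x$. At the proportional point $a_i + b_i = \weight_i\p{\nplayer_a + \nplayer_b}$, so $\frac{\weight_i b_i}{(a_i+b_i)^2} = \frac{\nplayer_b}{\p{\nplayer_a+\nplayer_b}^2}$, which is the same for every $i$; hence the gradient of the cost is constant across coordinates, and therefore every feasible direction on the simplex is first-order neutral. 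Convexity then rules out any improvement, so no type-$A$ agent can do better. The identical argument with the roles of $a$ and $b$ interchanged (now using $\frac{\weight_i a_i}{(a_i+b_i)^2} = \frac{\nplayer_a}{\p{\nplayer_a+\nplayer_b}^2}$ and concavity of each summand of $1-\Phi$ in $b_i$) shows no type-$B$ agent can improve. Hence the proportional arrangement is stable.

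Finally, uniqueness. Since $\Phi$ is separably convex in $(a_i)$ and separably concave in $(b_i)$ on the product of the two simplices, any Nash-equilibrium aggregate $(a^*,b^*)$ must be a saddle point: $a^*$ minimizes the convex function $\Phi(\cdot, b^*)$ and $b^*$ maximizes the concave function $\Phi(a^*, \cdot)$. The KKT conditions for these are exactly that $\frac{\weight_i b^*_i}{(a^*_i + b^*_i)^2}$ is constant over $\mathrm{supp}(a^*)$ and $\frac{\weight_i a^*_i}{(a^*_i + b^*_i)^2}$ is constant over $\mathrm{supp}(b^*)$; since no item is empty both supports are all of $[\nitem]$, and combining these with $\sum_i a^*_i = \nplayer_a$ and $\sum_i b^*_i = \nplayer_b$ forces $a^*_i = \weight_i \nplayer_a$, $b^*_i = \weight_i \nplayer_b$, with strict convexity/concavity of the summands giving uniqueness. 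This is the weighted version of the uniqueness argument behind Theorem~\ref{thrm:meanbowl}. The main obstacle is not any single computation but making the fractional-equilibrium model precise and dealing with the boundary cases — in particular an item that one type abandons entirely, where strict convexity in that coordinate degrades — which I would handle exactly as in the unweighted proof of Theorem~\ref{thrm:meanbowl}, now with $\weight_i$ carried through.
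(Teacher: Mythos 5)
Your proposal is correct and follows essentially the same route as the paper's proof: both reduce to the first-order conditions that $\weight_i b_i/(a_i+b_i)^2$ and $\weight_i a_i/(a_i+b_i)^2$ be constant across items, and combine these with the constraints $\sum_i a_i = \nplayer_a$, $\sum_i b_i = \nplayer_b$ to force $a_i = \weight_i \cdot \nplayer_a$, $b_i = \weight_i \cdot \nplayer_b$. Your additions --- making the normalization $\sum_i \weight_i = 1$ explicit, noting the zero-sum structure, and checking convexity so that the first-order conditions actually certify stability rather than just identifying the unique critical point --- are refinements of details the paper leaves implicit, not a different argument.
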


For equal weights, Theorem \ref{thrm:meanclose} proved that in the integer-valued Private Blotto game, every stable arrangement is still \enquote{close} to evenly splitting agents over items. The proof of Theorem \ref{thrm:meanclose} strongly relies on equal weights, and we suspect that generalizing it to unequal weights would require a substantially different proof technique. However, empirically we observe that it holds even for unequal weights: every stable arrangement we have found computationally is \enquote{close} to proportional. Therefore, we hypothesize that this property holds more generally and discuss the implications of this hypothesis. 
\begin{hypothesis}\label{hyp:close}
Any stable arrangement in the integer Private Blotto game with mean outcome function and unequal weights must be \enquote{close} to proportional: $\abs{a_i - \weight_i \cd \nplayer_a} \leq 1, \abs{b_i - \weight_i \cd \nplayer_b} \leq 1$ for $i \in [\nitem]$. 
\end{hypothesis}

\subsubsection{Misallocated effort}

Definition \ref{def:misallocate_w} generalizes the definition of misallocated effort in Definition \ref{def:misallocate} to reflect the normative belief that items with greater weight (importance) should recieve more effort. 
\begin{definition}\label{def:misallocate_w}
Given an arrangement of agents onto items, we say it has \enquote{misallocated effort} given by the amount of agents that is above or below allocation equal to the weights. That is, misallocated effort is given by: 
$$\sum_{i \in [\nitem]}\abs{\weight_i \cd \nplayer_a - a_i} + \abs{\weight_i \cd \nplayer_b - b_i}$$
\end{definition}

For mean outcome function, if Hypothesis \ref{hyp:close} holds, then we can automatically see that misallocated effort is upper bounded by $2 \cd \nitem$ (as Lemma \ref{lem:meanmisallocate} shows for equal weights). 

For median outcome function, Lemma \ref{lem:medianmisallocate} gave a \emph{lower bound} on misallocated effort, which was obtained for the setting with equal weights, but which also gives a lower bound on misallocated effort with arbitrary weights.  

\subsection{More items (extension of Section \ref{sec:feweragent})}\label{app:moreitem}
In this setting, we explore generalizations both in weights over items, as well as number of player types. At a high level, the results in this section are largely very similar to the results in Section \ref{sec:feweragent}.

First, Lemma \ref{lem:twostablew} directly generalizes Lemma \ref{lem:twostable} in the main text, showing that whenever there are exactly 2 agents, a stable arrangement always exists (even in the weighted case). 

\begin{restatable}{lemma}{twostablew}\label{lem:twostablew}
For $\nplayer = 2$, $\nitem \geq 2$ and either median or mean outcome functions, a stable arrangement always exists, regardless of the player biases $\{\bias_i\}$ unlabeled cost $\costun$, and weight $\{\weight_i\}$. 
\end{restatable}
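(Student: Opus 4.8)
The plan is a short case analysis on the relation between the two largest weights, the unlabeled cost, and the bias gap. Since $\nplayer = 2$, every arrangement has one of exactly two shapes: \emph{clustered} (both agents on a single item) or \emph{spread} (the two agents on two distinct items). Note that for an item holding one agent both the mean and the median equal that agent's bias, and for an item holding two agents both equal the average of the two biases; hence the entire argument is identical for the two outcome functions. Write $d := \abs{\bias_1 - \bias_2}$ (so $d \geq 0$, with $d = 0$ if the two agents are of the same type), $W := \sum_i \weight_i$, and relabel the items so that $\weight_1 \geq \weight_2 \geq \cdots \geq \weight_\nitem \geq 0$.

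First I would record the two relevant cost expressions. If both agents sit on item $i$, each pays $\weight_i\, d/2$ on that item and $\weight_k \costun$ on every other (empty) item, for a total of $\weight_i d/2 + (W - \weight_i)\costun$. If the agents are spread on items $i \neq j$, the agent on item $i$ pays $0$ on item $i$, pays $\weight_j d$ on item $j$ (where it faces the other agent's solo outcome), and pays $\weight_k \costun$ on each remaining empty item, for a total of $\weight_j d + (W - \weight_i - \weight_j)\costun$; symmetrically for the agent on item $j$.

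Next I would analyze two candidate arrangements. \emph{(i) Clustered on item $1$.} The only available deviation is for an agent to move to a currently empty item $j \neq 1$, turning the profile into a spread on $\{1,j\}$; from the formulas above the mover's cost changes by $\weight_1 d/2 - \weight_j \costun$, which is nonnegative for every $j \neq 1$ exactly when $\costun \max_{j \neq 1}\weight_j = \weight_2 \costun \leq \weight_1 d/2$. So clustering on item $1$ is stable iff $\weight_2 \costun \leq \weight_1 d/2$. \emph{(ii) Spread on items $1$ and $2$.} An agent on item $1$ can either join the other agent on item $2$ (a clustering deviation), changing its cost by $\weight_1\costun - \weight_2 d/2$, or move to an empty item $k \geq 3$, changing its cost by $(\weight_1 - \weight_k)\costun \geq 0$ and hence never helpfully; the symmetric statement holds for the agent on item $2$, with cost change $\weight_2\costun - \weight_1 d/2$. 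Thus this spread is stable iff $\weight_1\costun \geq \weight_2 d/2$ and $\weight_2\costun \geq \weight_1 d/2$, and since $\weight_1 \geq \weight_2$ the second inequality implies the first; so spreading on the two heaviest items is stable iff $\weight_2 \costun \geq \weight_1 d/2$.

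Finally I would observe that the two stability conditions, $\weight_2\costun \leq \weight_1 d/2$ and $\weight_2\costun \geq \weight_1 d/2$, are exhaustive: for every choice of biases, cost $\costun$, and weights $\{\weight_i\}$, at least one of the two explicit arrangements above is Nash stable (both on the boundary), which proves the lemma. The only things to check are the degenerate cases --- $\nitem = 2$ (where the "move to an empty item $k \geq 3$" deviation is vacuous and only clustering deviations remain, yielding the same two conditions), $d = 0$ (where $\weight_1 d/2 = 0 \leq \weight_2\costun$, so the spread arrangement is always stable), and $\costun = 0$ --- each immediate from the same formulas. I do not expect a substantive obstacle; the one bookkeeping point to get right is that after an agent leaves a clustered item, the vacated item is held by the other agent alone, so the mover then feels the full distance $\weight_i d$ (not $\weight_i d/2$) toward its outcome.
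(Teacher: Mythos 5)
Your proposal is correct and follows essentially the same route as the paper's proof: both identify the threshold comparison between $\weight_2 \costun$ and $0.5\,\weight_1\abs{\bias_a-\bias_b}$, show that clustering on the heaviest item is stable on one side of it and spreading over the two heaviest items is stable on the other, and dismiss moves to lighter empty items using the weight ordering. Your version just makes the total-cost bookkeeping slightly more explicit; no substantive difference.
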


Lemmas \ref{lem:medianunstable} and \ref{lem:meanunstable} in the main text already apply to settings with multiple biases and uneven weights: they show that there can be settings with no stable arrangement with equal weights, which directly implies that there (in the arbitrary weights setting), it is possible that are settings with no stable arrangements. 

Lemma \ref{lem:meanstablen3m4} in the main text shows that with exactly 3 players over at least 4 items, equally weighted (which come in exactly two types, $A$ and $B$), there always exists a stable arrangement with mean outcome function. Lemma \ref{lem:meanstablen3m4_w}, below, generalizes this to settings where the 3 players are allowed to each have distinct biases. This generalization is non-trivial because it involves reasoning over substantially more complex combinations of players and incentives. The most general version of Lemma \ref{lem:meanstablen3m4_w} would involve arbitrary weights over items: however, if weights can be set arbitrarily, it can become substantially easier (and thus less interesting) to ensure a stable arrangement (e.g. by ensuring almost all players wish to be on the same item, as in Theorem \ref{thrm:meanbigunstablew}).  

\begin{restatable}{lemma}{meanstablenw}
\label{lem:meanstablen3m4_w}
For $\nplayer=3, \nitem \geq 4$ with mean outcome, there is always a stable arrangement, even when all three players can have distinct biases $\bias_a, \bias_b, \bias_c$. 
\end{restatable}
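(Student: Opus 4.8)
\emph{Proof idea.} Relabel the three players so that $\bias_a \leq \bias_b \leq \bias_c$. The key simplification is that with only three players and $\nitem \geq 4$ items, an arrangement is — up to an irrelevant relabeling of items — completely described by (i) how the three players are partitioned among the occupied items (all three together; a pair and a singleton; or three singletons) and (ii) the number of empty items, which equals $\nitem$ minus the number of occupied items. Consequently each player has only a handful of qualitatively distinct deviations: relocate to an empty item (becoming a lone labeler there), or join one of the other occupied items; and the resulting change in that player's cost is just the change in one or two outcome-distances plus $\pm\costun$ coming from the occupied-item count changing by one.

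I would split on the value of $\costun$ using the two thresholds $\tau_1 := \tfrac{1}{3}\abs{\tfrac{\bias_a+\bias_c}{2}-\bias_b}$ and $\tau_2 := \tfrac{1}{2}(\bias_c-\bias_a)$, noting $\tau_1 \leq \tau_2$ always (since $\abs{\tfrac{\bias_a+\bias_c}{2}-\bias_b} \leq \tfrac{\bias_c-\bias_a}{2}$). The claim is: (a) if $\costun \leq \tau_1$, the arrangement placing all three players on a single item is stable; (b) if $\tau_1 \leq \costun \leq \tau_2$, the arrangement placing the two \emph{extreme} players $a$ and $c$ together on one item and player $b$ alone on a second item is stable; and (c) if $\costun \geq \tau_2$, the arrangement placing the three players on three separate items is stable (this last case is the natural three-type analogue of Lemma \ref{lem:cuNsmallNE}). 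Since the intervals $[0,\tau_1]$, $[\tau_1,\tau_2]$, $[\tau_2,\infty)$ cover $[0,\infty)$, one of these arrangements is always a pure Nash equilibrium, proving the lemma.

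Carrying out (a)--(c) is a finite, mechanical check of the deviations above. For (a) the only possible move is for one player to peel off to a lone item; this changes that player's cost by $\tfrac13 D - \costun$, where $D$ is the distance from its bias to the midpoint of the other two, and since the middle player $b$ minimizes $D$ (its $D$ equals $3\tau_1$, while both extremes have $D \geq \tfrac{\bias_c-\bias_a}{2} \geq 3\tau_1$), no profitable deviation exists exactly when $\costun \leq \tau_1$. For (b) one checks that player $b$ wants to rejoin the pair only when $\costun < \tau_1$; each extreme wants to peel off to a lone item only when $\costun > \tau_2$; and — the structurally important point — neither extreme ever gains by swapping so as to pair with $b$ instead. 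For (c), in the all-singletons arrangement the only moves are to pair up with another player, which helps iff $\costun$ is below half the distance between the two biases involved, the largest such half-distance being $\tau_2$; so stability holds exactly when $\costun \geq \tau_2$.

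I expect the main obstacle to be case (b), and specifically the structural observation that the two \emph{extreme} players must be the paired ones: pairing an \emph{adjacent} pair (say $a$ and $b$) is generically unstable, because the excluded player $c$ can profitably defect to form a pair with $a$ — doing so pulls the outcome on $a$'s former item toward $\bias_c$ by more than it costs $c$ on the item it abandons. Getting the right $(2,1)$ configuration, and verifying that its stability window $[\tau_1,\tau_2]$ exactly closes the gap between the windows of the other two configurations, is the heart of the argument; the remaining difficulty is purely bookkeeping of how the empty-item count (and hence the $\costun$ terms) shifts under each move, together with checking that degenerate bias configurations (ties among the $\bias$'s, or $\bias_b$ exactly at the midpoint of $\bias_a,\bias_c$) only collapse the windows to points and are absorbed by the non-strict inequalities.
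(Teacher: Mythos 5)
Your proposal is correct, and it rests on the same two pillars as the paper's proof: an exhaustive case analysis over the five possible arrangements of three players, and the structural fact that under the mean outcome a player strictly prefers to pair with the \emph{farther} of the other two (since pairing halves the distance on that item), so the only viable $(2,1)$ configuration puts the two extreme biases together and the middle player alone. Your deviation computations all check out: the threshold for a player to peel off from $\{a,b,c\}$ is $\tfrac{1}{6}$ of that player's $\abs{\bias_i + \bias_j - 2\bias_k}$, minimized by the middle player (giving $\tau_1$), and the threshold for two singletons to want to merge is half their bias gap, maximized by the extremes (giving $\tau_2$). Where you differ from the paper is in how coverage is established. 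The paper organizes its argument by arrangement and handles the residual case (extremes want to merge, the middle player wants to join them, but someone then wants to leave the triple) by deriving a pair of incompatible inequalities via a change of variables $x = \bias_b - \bias_a$, $y = \bias_c - \bias_b$ --- in effect ruling out a best-response cycle by contradiction. You instead organize by the value of $\costun$, exhibit explicit stability windows $[0,\tau_1]$, $[\tau_1,\tau_2]$, $[\tau_2,\infty)$ for the three candidate arrangements, and observe $\tau_1 \leq \tau_2$ directly from $\abs{\tfrac{\bias_a+\bias_c}{2}-\bias_b} \leq \tfrac{\bias_c-\bias_a}{2}$, which makes the covering immediate and replaces the paper's algebraic contradiction with a one-line inequality. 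This mirrors how the paper proves the simpler two-type version (Lemma \ref{lem:meanstablen3m4}) and is arguably the cleaner presentation of the general case; the only thing to be careful about when writing it up fully is the non-strict/strict boundary bookkeeping at $\costun = \tau_1$ and $\costun = \tau_2$, which you already flag and which causes no trouble since Nash stability only requires the absence of \emph{strictly} improving deviations.
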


Finally, Lemma \ref{lem:cuNsmallNEw} directly generalizes Lemma \ref{lem:cuNsmallNE} for arbitrary player types and item weights. 

\begin{restatable}{lemma}{cuNsmallNEw}
\label{lem:cuNsmallNEw}
Given $\nplayer<\nitem$, an arrangement with all agents labeling different items is stable (for both median and mean outcome) so long as the cost for leaving an item unlabeled is sufficiently high: 

$$\abs{\beta_1 - \beta_{i*}} \leq  2\cd \frac{\weight{_\nplayer}}{\weight_1} \cd  \costun$$
\end{restatable}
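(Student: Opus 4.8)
The plan is to prove the lemma by exhibiting one particular arrangement with all agents on distinct items and checking that no agent has a profitable unilateral deviation. First I would sort the items so that $\weight_1 \geq \weight_2 \geq \cdots \geq \weight_\nitem$ and take the arrangement that places each of the $\nplayer$ agents on its own item among the $\nplayer$ \emph{heaviest} items, leaving the $\nitem - \nplayer$ lightest items empty (ties broken arbitrarily). On each occupied item there is exactly one label, so the median and the mean of that singleton both equal the lone agent's bias; that agent therefore pays nothing on its own item, and its total cost is $\sum_{j \text{ occupied}, j \neq i} \weight_j \abs{\beta_{s(j)} - \beta_t} + \sum_{j \text{ empty}} \weight_j \costun$, where $s(j)$ denotes the agent on item $j$. (Specializing to equal weights and two types recovers Lemma~\ref{lem:cuNsmallNE}.)

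Next I would enumerate the deviations available to an arbitrary agent $t$ sitting alone on an occupied item $i$; there are only two kinds. If $t$ moves to a currently empty item $j$, then $i$ becomes empty and $j$ becomes the singleton $\{t\}$, so $t$'s cost changes by $\p{\weight_i - \weight_j}\costun$, which is nonnegative because $i$ is one of the $\nplayer$ heaviest items and $j$ is not, hence $\weight_i \geq \weight_j$. If instead $t$ moves onto an item $j$ already holding an agent of bias $\beta_s$, then $i$ becomes empty (a cost increase of $\weight_i \costun$) and, since median and mean coincide on two agents, the outcome on $j$ becomes $\p{\beta_s + \beta_t}/2$, so $t$'s cost on item $j$ drops from $\weight_j \abs{\beta_s - \beta_t}$ to $\weight_j \abs{\beta_s - \beta_t}/2$; no other item's outcome changes. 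The net change in $t$'s cost is thus $\weight_i \costun - \weight_j \abs{\beta_s - \beta_t}/2$.

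It then remains to show this quantity is nonnegative over all occupied $i \neq j$ and all pairs of biases. Using that the occupied items are the $\nplayer$ heaviest, we have $\weight_i \geq \weight_\nplayer$ and $\weight_j \leq \weight_1$, and $\abs{\beta_s - \beta_t}$ is at most the diameter $\abs{\beta_1 - \beta_{i^*}}$ of the set of biases. Hence the deviation is unprofitable whenever $\weight_\nplayer \costun \geq \weight_1 \abs{\beta_1 - \beta_{i^*}}/2$, i.e. whenever $\abs{\beta_1 - \beta_{i^*}} \leq 2 \cd \frac{\weight_\nplayer}{\weight_1} \cd \costun$, which is exactly the hypothesis. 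Combining the two cases, no agent can strictly reduce its cost, so the arrangement is Nash stable, proving the lemma.

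The argument is a short exhaustive case check, so I do not expect a deep obstacle. The one point to be careful about is that the claim genuinely requires agents to occupy the $\nplayer$ heaviest items — for an arbitrary assignment to distinct items, an agent on a light item would want to jump to an empty heavy one — so one must commit to that arrangement and track precisely which item becomes empty after each deviation. The collapse of median and mean on two-agent items is what lets a single computation dispatch both outcome functions at once.
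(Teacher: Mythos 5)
Your proposal is correct and follows essentially the same route as the paper's proof: place each agent alone on one of the $\nplayer$ heaviest items, observe that a move to an empty item only swaps a lighter unlabeled penalty for a heavier one, and bound the move-to-an-occupied-item deviation by $\weight_i \costun - \tfrac{1}{2}\weight_j \abs{\beta_s - \beta_t} \geq 0$ in the worst case $\weight_i = \weight_\nplayer$, $\weight_j = \weight_1$, with the bias gap at its diameter. If anything, your write-up is cleaner than the paper's (which has some index slippage between $\weight_\nitem$/$\weight_\nplayer$ and an odd division step), so no changes are needed.
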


\section{Proofs for Main Body}\label{app:proofs}

\begin{algorithm}
\caption{Algorithm for stable arrangement, not in median-critical region, given $\nplayer_a + \nplayer_b \geq 2 \cd \nitem+1$ or $\nplayer_a + \nplayer_b = 2 \cd \nitem $ with $\nplayer_a, \nplayer_b$ even. }
\label{alg:allocatemore}
Set $a_i = b_i = 0 \ \forall i \in [\nitem]$\\
\While{$\sum_{i \in [\nitem]}b_i < \nplayer_b$}{
\For{$j \in [\nitem-1]$}{
\lIf(){$j = \nitem-1$}{$b_j = \nplayer_b - \sum_{i  \in [\nitem]} b_i, \ j^* = j$}
\lElseIf(If at least 3 players left, allocate 2 per item){$\nplayer_b - \sum_{i  \in [\nitem]}b_i\geq 3$}{$b_j=2$}
\lElse(If there are exactly 3 left, allocate them all and stop){$b_j=3$, $j^*=j$}
}
}
\While{$\sum_{i \in [\nitem]}a_i < \nplayer_a$}{
\For(Start with the next item after type $B$ is allocated){$j \in [j^*+1, \nitem]$}{
\lIf(If on last item, allocate all remaining players){$j = \nitem$}{$a_i = \nplayer_a - \sum_{i  \in [\nitem]}a_i$}
\lElseIf(If at least 3 players left, allocate 2 per item){$\nplayer_a - \sum_{i  \in [\nitem]}a_i\geq 3$}{$a_j=2$}
\lElse(If there are exactly 3 left, allocate them all and stop){$a_j=3$}

}

}
\end{algorithm}

\begin{algorithm}
\caption{Algorithm for stable arrangement, not in median-critical region, given $\nplayer_a + \nplayer_b \leq 2 \cd \nitem$}
\label{alg:allocatefewer}
Set $a_i = b_i = 0 \ \forall i \in [\nitem]$\\
\If(Put exactly 1 player per item){$\nplayer_a + \nplayer_b = \nitem$}{
$a_i = 1 \ \forall i \in [\nplayer_a], b_i = 1 \ \forall i \in [\nplayer_a + 1, \nitem]$
}\uElseIf{$\nplayer_a+ \nplayer_b - \nitem$ is odd}{
$x = 0.5 \cd (\nplayer_a+ \nplayer_b - \nitem)+1$\\
$a_1 = b_1 = x$ Exact tie on first item\\
$a_i = 1 \ \forall i \in [2, \nplayer_a - x + 1]$ Put remaining $\nplayer_a -x$, $\nplayer_b-x$ players each on a single item. \\
$b_i = 1 \ \forall i \in [\nplayer_a - x + 2, \nitem]$\\

}\uElse{
$x = 0.5 \cd (\nplayer_a+ \nplayer_b - \nitem)$\\
$a_1 = b_1 = x$ Exact tie on first and second item. \\
$a_2 = b_2 = 1$\\
$a_i = 1 \ \forall i \in [3, \nplayer_a - x + 2]$ Put remaining $\nplayer_a -x -1, \nplayer_b -x -1$ agents each on a single item. \\
$b_i = 1 \ \forall i \in [\nplayer_a - x + 3, \nitem]$ \\

}
\end{algorithm}

\nounlabeled*
\begin{proof}
In other words, we want to ensure that for any player $\type$, the cost of competing in any item $i$ (leaving 
any other item $j$ empty) is higher than the cost of leaving item $i$ to competing in item $j$ alone:  

$$ \abs{f(S_i) - \bias_{\type}} + \costun \geq \abs{f(S_i \setminus \bias_{\type})}+ 0$$

First, we will analyze the case with mean outcome function. For an agent of type $A$, the cost it experiences from an item with $a$ agents of type $A$ and $b$ agents of type $B$ is given by: 
$$\abs{\frac{a \cd \bias_a + b \cd \bias_b}{a + b} - \bias_a} = \abs{\frac{a \cd \bias_a + b \cd \bias_b - (a+b) \cd \bias_a}{a+b}}$$
$$= \frac{b}{a+b} \cd \abs{\bias_a - \bias_b} $$
By identical reasoning, the cost to an agent of type $B$ is: 
$$ \frac{a}{a+b} \cd \abs{\bias_a - \bias_b} $$
Note that this construction immediately tells us that agent strategy must be independent of biases. For every item with $a+b>0$, an agent's cost is solely a function of $a$ and $b$, scaled by a constant factor of $\abs{\bias_a - \bias_b}$. 

Next, we will work on determining $\costun$ so that no item will ever be left empty. Again, we wish to show that: 
$$ \abs{f(S_i) - \bias_{\type}} + \costun \geq \abs{f(S_i \setminus \bias_{\type})}+ 0$$
If we consider a reference player of type $A$, with $a$ players of type $A$ on item $i$ and $b$ of type $B$, then this becomes: 
$$\abs{\beta_a - \beta_b} \cd \frac{b}{a+b}+ \costun \geq \abs{\beta_a - \beta_b} \cd \frac{b}{a+b-1} + 0$$
$$\costun \geq  \abs{\beta_a - \beta_b} \cd b \cd \p{ \frac{1}{(a+b-1)} -\frac{1}{(a+b)}} $$

$$\costun \geq \abs{\beta_a - \beta_b} \cd b \cd  \frac{b}{(a+b-1)\cd (a+b)} $$

Next, we'll upper bound the term on the RHS. The RHS shrinks with $a$, so we can lower bound this by setting $a =1$. We know that $a\geq 1$ because we have assumed there is at least one player of type $A$ that wishes to move from the given item. The condition simplifies to: 
$$\costun \geq \abs{\beta_a - \beta_b}  \cd \frac{1}{1+b}$$
We similarly must have $b\geq 1$ (or else we're just modeling a single player of type $A$ move from one item to another). If we set $b=1$, then this goes to 1/2, which gives the desired condition. Intuitively, this tells us that we need that the cost of leaving something unlabeled is greater than half the distance between the two biases. 

Next, we will consider the case where the outcome function is equal to the median. Again, we wish to show that: 
$$ \abs{f(S_i) - \bias_{\type}} + \costun \geq \abs{f(S_i \setminus \bias_{\type})}+ 0$$
We will analyze multiple different cases for the potential outcome functions $f(S_i)$ and $\abs{f(S_i \setminus \bias_{\type})}$. Again, we will look from the perspective of a type $A$ agent on item $i$ considering moving to another item $j$ that is empty: 

\begin{itemize}
    \item $f(S_i) = \bias_a$ and $\abs{f(S_i \setminus \bias_{\type})} = \bias_a$. The inequality becomes: 
    $$0 + \costun \geq  0 + 0$$
    which is satisfied automatically. 
    \item $f(S_i) = \bias_a$ and $\abs{f(S_i \setminus \bias_{\type})} = \frac{1}{2} \cd (\bias_a + \bias_b)$. The inequality becomes: 
    $$0 +  \costun \geq 0.5 \cd \abs{\bias_a - \bias_b} +  0$$
    $$\costun \geq 0.5 \cd  \abs{\bias_a - \bias_b}$$
    \item $f(S_i) = \frac{1}{2} \cd (\bias_a + \bias_b) $ and $\abs{f(S_i \setminus \bias_{\type})} = \bias_b$. The inequality becomes: 
    $$0.5 \cd  \abs{\bias_a - \bias_b}  +  \costun \geq  \abs{\bias_a - \bias_b}+  0$$
    $$\costun \geq 0.5 \cd \abs{\beta_a - \beta_b}$$
    \item $f(S_i) = \frac{1}{2} \cd \bias_b$ and $\abs{f(S_i \setminus \bias_{\type})} = \bias_b$. The inequality becomes: 
    $$\abs{\bias_a - \bias_b} +  \costun \geq \abs{\bias_a - \bias_b} +  0$$
    which is always satisfied. 
\end{itemize}
The only inequality that isn't automatically satisfied is $\costun \geq 0.5 \cd  \abs{\bias_a - \bias_b}$, which is the same inequality as for the mean outcome function, and satisfied by the same reasoning. 

Finally, we will show that agents' incentives are independent of biases $\bias_a, \bias_b$. 

This proof comes almost immediately. \\
For mean outcome function, we can immediately see from the agent cost that agent strategy must be independent of biases. For every item with $a+b>0$, an agent's cost is solely a function of $a$ and $b$, scaled by a constant factor of $\abs{\bias_a - \bias_b}$. 

For median allocation, for any agent with bias $\bias_a$ and any set $S_i$, the outcome function has three possible values: $\bias_a$ (giving cost to agent $a$ of 0), $0.5 \cd (\bias_a + \bias_b)$ in the event of ties (giving cost to agent $a$ of $0.5 \cd \abs{\bias_a - \bias_b}$), or $\bias_b$ (giving cost to agent $a$ of $\abs{\bias_a - \bias_b}$). All of these are simply scaled values of $\abs{\bias_a - \bias_b}$, which means incentives are independent of the values $\bias_a, \bias_b$. 
\end{proof}

\medianneverstable*

\begin{proof}
Note that if we have $\nplayer = \nitem$, then the last criteria gives us $\nplayer_b < 0$, which isn't achievable. Thus, we will assume $\nplayer>\nitem$. \\

To start out with, we will describe a few arrangements where a deviation is always possible. For notation, we will use $\{a_1\geq b_1\}$ to denote an arrangement where there are at least as many type $A$ players as type $B$ on item 1, for example. We will only refer to items 1 and 2 for convenience, but these results hold for any labeled items. \\
\textbf{Case 1: }
\begin{equation*}
\{a_1 = b_1 - 1\}, \{a_2 \geq b_2 + 2\} \text{ or } \{b_1 = a_1-1\}, \{b_2 \geq a_2 + 2\}
\end{equation*}
This gives an arrangement where type $A$ loses by 1 on item 1 and wins by at least 2 on item 2.  This gives a deviation because any $a$ player from item 2 could move to label the item 1 and strictly reduce their cost (they now tie in the first and still win in the second). Similar reasoning holds for the other case: the type $B$ player from item 2 could move to label item 1 and strictly reduce their cost. \\
\textbf{Case 2:} 
\begin{equation*}
\{a_1 = b_1\}, \{a_2 \geq b_2+2\}  \text{ or } \{a_1 = b_1\}, \{b_2 \geq a_2 + 2\}
\end{equation*}
Here, the players tie on item 1 and type $A$ wins by at least 2 on item 2. This gives a deviation because any $a$ player from item 2 could move to label item 1 and strictly reduce their cost (they now win item 1 and still win item 2). Similar reasoning holds for the second case. \\
\textbf{Case 3: }
\begin{equation*}
\{a_1 \geq b_1 + 1\}, \{a_2 = b_2 + 1\} \text{ or } \{b_1 \geq a_1 + 1\}, \{b_2 = a_2 + 1\} 
\end{equation*}
Here, type $A$ wins by at least one on item 1 and wins by exactly one on item 2. This gives a deviation because type $B$ is losing in item 1, and can move to item 2 where it will tie (and still lose item 1). Similar reasoning holds for the second case. 

Next, we will show that if we have any arrangement satisfying the preconditions ($\nplayer_a + \nplayer_b \leq 2\nitem, \nitem < \nplayer_a, \nplayer_b < \nplayer_a- \nitem$), then at least one of these cases must occur, meaning that the arrangement must be unstable.

First, let's suppose that we have at least one tie somewhere: $\{a_1 = b_1\}$. By Case 2, we know that we can't have any player types win by 2 elsewhere (or else one of the agents could move and win at item 1, while still winning elsewhere). That means that we need: 
$$b_i \leq a_i + 1 \text{ and } a_i \leq b_i + 1 \ \forall i \ne 1 $$
We will show that satisfying the second condition is impossible. First, summing over all of the items gives: 
$$\sum_{i \ne 1} a_i \leq \sum_{i \ne 1} b_i + (\nitem-1)$$
$$\nplayer_a - a_1 \leq \nplayer_b - b_1 + (\nitem-1)$$
Recall that $a_1 = b_1$ (there's a tie), so this reduces to: 
$$\nplayer_a \leq \nplayer_b + (\nitem-1)$$
$$\nplayer_a - (\nitem-1) \leq \nplayer_b$$
However, we also have $\nplayer_b < \nplayer_a- \nitem$, so this becomes: 
$$\nplayer_a - (\nitem-1) \leq \nplayer_b < \nplayer_a- \nitem$$
$$1 < 0$$
which is a contradiction. This implies that no arrangement where players tie can be stable (and satisfy the preconditions). 

Next, let's consider the case where we have $a_1 = b_1-1$ (type $A$ loses item 1 by exactly 1). Case 1 tells us that we cannot have $a_i \geq b_i+2$ elsewhere (cannot have that type $A$ wins by at least two elsewhere), so we must have $a_i \leq b_i + 1$. Again, we can sum: 
$$\sum_{i\ne 1} a_i \leq \sum_{i \ne 1} b_i + 1$$
$$\nplayer_a -a_1 \leq \nplayer_b - b_1 + (\nitem-1)$$
Substituting in for $a_1 = b_1-1$ gives us: 
$$\nplayer_a - b_1+1 \leq \nplayer_b - b_1 + \nitem-1$$
$$\nplayer_a \leq \nplayer_b + \nitem-2$$
$$\nplayer_a - \nitem + 2 \leq \nplayer_b $$
which again contradicts $\nplayer_b < \nplayer_a - \nitem$. 

Finally, let's consider where we have $b_1 = a_1-1$ or $b_1 + 1 = a_1$. By Case 1, we must have $b_i \leq a_i + 1$ for all $i \ne 1$. However, we will show that this implies a contradiction with other stability analayses. 

First, let's consider the case where $b_i = a_i + 1$ for all $i \neq \nitem$, so type $B$ wins every item by one except the first one, which it loses by exactly one. We will show that this violates the preconditions of this lemma: 
$$b_i = a_i +1 \forall i \ne 0$$
$$\sum_{i\ne 1} b_i = \sum_{i \ne 1} a_i + 1$$
$$\nplayer_b - b_1 = \nplayer_a - a_1 + \nitem-1$$
$$\nplayer_b - (a_1-1) = \nplayer_a - a_1 + \nitem-1$$
$$\nplayer_b + 1 = \nplayer_a + \nitem-1$$
$$\nplayer_b = \nplayer_a + \nitem-2$$
If we combine this with our  $\nplayer_b < \nplayer_a - \nitem $ condition, we get: 
$$\nplayer_a + \nitem-2 < \nplayer_a - \nitem$$
$$2 \nitem <2$$
$$\nitem<1$$
which cannot be satisfied because we need at least one item. 

Next, we'll consider the case where $b_i < a_i +1$ for at least one item. From our previous analysis of Case 2, we know that we cannot have an exact tie ($a_i = b_i$). 

This implies that we must have at least one item $i$ such that $b_i \leq a_i -1$, along with item 1 which has $b_1 = a_1-1$.  Equivalently, we can write this as $a_1 = b_1 + 1, a_i \geq b_i+1$. However, this exactly implies the condition in Case 3, which is also unstable. 
\end{proof}

\medianmstable*
\begin{proof}
We will prove this result constructively by producing an algorithm that always arrives at a stable arrangement. 

Informally, this algorithm works by putting 2 type $B$ players on every item, stopping once either a) item $\nitem-1$ is reached, or b) all of the type $B$ players have been assigned, or c) there are 3 type $B$ players left (which are then all assigned to the current item). Then, the algorithm places at least 2 type $A$ players on each item, again stopping once a) item $\nitem$ is reached, or b) all of the type $A$ players have been allocated, or c) there are 3 type $A$ players left. This is a stable arrangement because for each item, each type wins by at least 2, so no single agent acting alone can change the outcome. For a formal description, see Algorithm \ref{alg:allocatemore}. 

If $\nplayer_a + \nplayer_b = 2 \cd \nitem$ and both $\nplayer_a, \nplayer_b$ are even, then this algorithm will put exactly 2 agents on each item. If $\nplayer_a + \nplayer_b \geq 2 \cd \nitem +1$ and both $\nplayer_a, \nplayer_b$ are even, then this will put at least 2 agents on each item. If $\nplayer_a + \nplayer_b \geq 2 \cd \nitem +1$ and both $\nplayer_a, \nplayer_b$ are odd, then $\frac{\nplayer_b -1}{2}$ items will have type $B$ agents and $\frac{\nplayer_a - 1}{2}$ items will have type $A$ agents. In total, this covers $\frac{\nplayer_a+ \nplayer_b - 2}{2}  = \frac{\nplayer_a +\nplayer_b}{2} - 1  \geq \frac{2 \cd \nitem + 1}{2} - 1 = \nitem$ items, as desired. 
\end{proof}

\medianmstable*
\begin{proof}
We will prove this result constructively by producing an algorithm that always arrives at a stable arrangement. 

Informally, this algorithm works by putting 2 type $B$ players on every item, stopping after a) item $\nitem-1$ is reached, or b) all of the type $B$ players have been assigned, or c) there are 3 type $B$ players left (which are then all assigned to the current item). Then, the algorithm places at least 2 type $A$ players on each remaining item, stopping after a) item $\nitem$ is reached, or b) all of the type $A$ players have been allocated, or c) there are 3 type $A$ players left. This is a stable arrangement because for each item, each type wins by at least 2, so no single agent acting alone can change the outcome. Because it is impossible for any single agent to change the outcome, this arrangement is stable for all positive weights over items. For a formal description, see Algorithm \ref{alg:allocatemore}. 

If $\nplayer_a + \nplayer_b = 2 \cd \nitem$ and both $\nplayer_a, \nplayer_b$ are even, then this algorithm will put exactly 2 agents on each item. If $\nplayer_a + \nplayer_b \geq 2 \cd \nitem +1$ and both $\nplayer_a, \nplayer_b$ are even, then this will put at least 2 agents on each item. If $\nplayer_a + \nplayer_b \geq 2 \cd \nitem +1$ and both $\nplayer_a, \nplayer_b$ are odd, then $\frac{\nplayer_b -1}{2}$ items will have type $B$ agents and $\frac{\nplayer_a - 1}{2}$ items will have type $A$ agents. In total, this covers $\frac{\nplayer_a+ \nplayer_b - 2}{2}  = \frac{\nplayer_a +\nplayer_b}{2} - 1  \geq \frac{2 \cd \nitem + 1}{2} - 1 = \nitem$ items, as desired. 
\end{proof}

\medianstableequal*

\begin{proof}
First, we note that if $\nplayer_b =0$, then every arrangement is stable because every arrangement has only players of type $A$ on them. Thus, we will require $\nplayer_b\geq 1$. 

We will show constructively that it is possible to create an arrangement satisfying the following criteria: 
\begin{itemize}
    \item For every item where there is more than 1 agent, type $A$ and type $B$ tie exactly. 
    \item Every other item has exactly one agent, which can be either type $A$ or type $B$. 
\end{itemize}

This type of construction is stable by the following reasoning: 
\begin{itemize}
        \item None of the single agents can move (they can't leave an item empty). 
        \item No agent on an item with multiple agents wishes to leave - they would go from winning a single item and losing another, to losing on that item and tying on another, which gives equal costs. 
\end{itemize}

Algorithm \ref{alg:allocatefewer} considers this case. Informally, we will describe how it works. 

If $\nplayer_a + \nplayer_b \leq \nitem$, then we place at most one agent on each item, which satisfies the construction criteria. 

Next, we consider the case where $\nplayer_a + \nplayer_b -\nitem$ is even. We calculate  $\frac{\nplayer_a + \nplayer_b-\nitem}{2} = x$ and place $x$ of type $A$ and type $B$ players each on item 0. By the assumptions of this lemma, we know that $\nplayer_b \geq \nplayer_a-\nitem$, which means that $x = \frac{\nplayer_a + \nplayer_b - \nitem}{2} \leq \frac{2 \nplayer_a- 2 \cd \nitem}{2} = \nplayer_a - \nitem >0$. Then, we place exactly 1 of type $A$ and type $B$ on item 1. For every other item, have $\nplayer_a - x - 1$ with exactly one type $A$ agent, exactly $\nplayer_b - x -1$ with exactly one type $B$ agent. The total number of items: 
$$1 + 1 + \nplayer_a - x -1 + \nplayer_b - x -1  $$
$$=\nplayer_a + \nplayer_b -2x = \nplayer_a + \nplayer_b - \nplayer_a -\nplayer_b + \nitem = \nitem$$
as desired.

Next, if $\nplayer_a + \nplayer_b - \nitem$ is odd, we know that $\nplayer_a+ \nplayer_b -(\nitem-1)$ is even. We then calculate $\frac{\nplayer_a+ \nplayer_b+1-\nitem}{2} = x$. 
Then, set $a_0 = x, b_0 = x$ (we address the case where $\nplayer_b <x$ at the end of this proof). Every other item has exactly one agent, with $\nplayer_a-x$ of them having type $A$, and $\nplayer_b -x$ having type $B$. Note that the total number of agents adds up to the right amount and that (by construction) every item is labeled. 1 item has many players, $\nplayer_a- x$ have type $A$ only, $\nplayer_b-x$ have type $B$ only, and together this sums to: 
$$1 + \nplayer_a-x + \nplayer_b -x  = 1 + \nplayer_a+ \nplayer_b - 2 \cd x $$
$$= 1 + \nplayer_a+ \nplayer_b - (\nplayer_a+ \nplayer_b - (\nitem-1)) = \nitem$$

Finally, we consider the case where $x = \frac{\nplayer_a + \nplayer_b +1 - \nitem}{2}>\nplayer_b$, which means $\nplayer_a + 1 - \nitem > \nplayer_b$. By assumptions of this lemma, we know that $\nplayer_b \geq \nplayer_a - \nitem$. Taken together, this tells us that: 
    $$\nplayer_a - \nitem \leq \nplayer_b < \nplayer_a  - \nitem+1$$
    This can only be satisfied by setting $\nplayer_b = \nplayer_a - \nitem$. However, this means that: 
    $$\nplayer_a + \nplayer_b - \nitem = 2 \cd \nplayer_a - 2 \cd \nitem 
 = 2 \cd (\nplayer_a - \nitem)$$
    However, this conflicts with the assumption that $\nplayer_a + \nplayer_b - \nitem$ is odd, which means this situation can never occur. 

\end{proof}

\meanbowl*
\begin{proof}
In order to show this, we will look at a relaxed (continuous) version of this problem. In this relaxed version of the problem, we will assume that, instead of agents coming in integer units, they can be allocated fractionally across items.  

The cost to player of type $A$ is given by 
$\sum_{i=1}^{\nitem} \frac{b_i}{a_i+b_i}$. 
The partial derivative of this with respect to $a_i$ is: 
$$\frac{\partial}{\partial a_i} \sum_{i=1}^{\nitem} \frac{b_i}{a_i+b_i} = - \frac{b_i}{(a_i + b_i)^2}$$
In order for us to be at a stable point, we need that the derivative wrt $a_i$ must be equal to the derivative wrt $a_j$ for any $i \ne j$ and $\sum_{i=1}^{\nitem} a_i = \nplayer_a$ (identical criteria for $B$). If the first is not satisfied, then type $A$ could strictly reduce its cost by changing its allocations between items, and if the second is not satisfied, then type $A$ could again reduce its cost by allocating more agents onto items. \\

We can exactly achieve this by setting $a_i = {\nplayer_a}{\nitem}, b_i = \frac{\nplayer_b}{\nitem}$ for all players. This automatically gives us $\sum_{i = 1}^{\nitem} a_i = \nplayer_a, \sum_{i = 1}^{\nitem} = \nplayer_b$, and also sets derivatives equal because each item has identical proportions of $a_i, b_i$.  
\end{proof}

\meanbigunstable*

\begin{proof}
We will set parameters as follows: \\
$\nitem=2$ items, and $\nplayer_a', \nplayer_b'$ as any pair of odd numbers separated by exactly 2 that are at least as large as $\nplayer_a, \nplayer_b$, or:
\begin{itemize}
    \item $\nplayer_a' = \nplayer_b'+2$
    \item $\nplayer_a', \nplayer_b'$ both odd
    \item $\nplayer_a' \geq \nplayer_a, \nplayer_b' \geq \nplayer_b$
\end{itemize}
Our goal will be to show that there is no stable arrangement of $\nplayer_a', \nplayer_b'$ players onto $\nitem=2$ items. From Theorem \ref{thrm:meanclose}, we know that any stable arrangement (if it exists) must be close to even, or here, we must have $\abs{a_i - \frac{\nplayer_a'}{2}} \leq 1, \abs{b_i - \frac{\nplayer_a'}{2}} \leq 1$. For odd $\nplayer_a', \nplayer_b'$, this means that we must have that we must have $\abs{a_1-a_2} = \abs{b_1-b_2} = 1$, or that the number of type A players must be $(\nplayer_a'+1)/2, (\nplayer_a'-1)/2$ and the number of type $B$ players must be $(\nplayer_b'+1)/2, (\nplayer_b'-1)/2$. There are exactly two ways that they could be arranged on the two items: 
\begin{enumerate}
    \item Anti-correlated: type $A$ has more of its agents on item 1, and type $B$ has more of its agents on item 2, or: $(a_1, b_1) = ((\nplayer_a'+1)/2, (\nplayer_b'-1)/2)$ and $(a_2, b_2) = ((\nplayer_a'-1)/2, (\nplayer_b'+1)/2)$
    \item Correlated: both type $A$ and type $B$ have more agents on item 1, or: $(a_1, b_1) = ((\nplayer_a'+1)/2, (\nplayer_b'+1)/2)$ and $(a_2, b_2) = ((\nplayer_a'-1)/2, (\nplayer_b'-1)/2)$. 
\end{enumerate}

We will show that both of these arrangements are unstable by showing that at least one player wishes to change which item they are labeling. 

\textbf{Case 1: Anti-correlated}:\\
In this setting, we will show that an agent of type $A$ currently labeling item 1 can reduce its cost by moving to label item 2. \\
Before moving, type $A$'s cost is: 
$$\frac{b_1}{a_1+b_1} + \frac{b_2}{a_2 + b_2} = \frac{\nplayer_b'-1}{\nplayer_a' + \nplayer_b'} + \frac{\nplayer_b'+1}{\nplayer_a' + \nplayer_b'}$$
After moving, type $A$'s cost becomes: 
$$\frac{b_1}{a_1+b_1-1} + \frac{b_2}{a_2 + b_2+1} = \frac{\nplayer_b'-1}{\nplayer_a' + \nplayer_b'-1} + \frac{\nplayer_b'+1}{\nplayer_a' + \nplayer_b'+1}$$
Our goal is to show that: 
$$\frac{\nplayer_b-1}{\nplayer_a' + \nplayer_b'} + \frac{\nplayer_b'+1}{\nplayer_a' + \nplayer_b'} > \frac{\nplayer_b'-1}{\nplayer_a' + \nplayer_b'-1} + \frac{\nplayer_b'+1}{\nplayer_a' + \nplayer_b'+1}$$
Rearranging: 
$$(\nplayer_b'+1) \cd \p{\frac{1}{\nplayer_a' + \nplayer_b'+1} - \frac{1}{\nplayer_a' + \nplayer_b'}} $$
$$> (\nplayer_b'-1) \cd \p{\frac{1}{\nplayer_a' + \nplayer_b'-1} - \frac{1}{\nplayer_a' + \nplayer_b'}}$$
Simplifying: 
$$(\nplayer_b'+1) \cd \frac{1}{(\nplayer_a' + \nplayer_b'+1) \cd (\nplayer_a' + \nplayer_b')} $$
$$> (\nplayer_b'-1)\cd \frac{1}{(\nplayer_a' + \nplayer_b'-1) \cd (\nplayer_a' + \nplayer_b')}$$
Or: 
$$\frac{\nplayer_b'+1}{\nplayer_a' + \nplayer_b'+1} > \frac{\nplayer_b'-1}{\nplayer_a' + \nplayer_b'-1}$$
Note that we can write this as: 
$$\frac{x}{a+x} > \frac{y}{a+y}$$
$$a \cd x + x \cd y > a \cd y + x \cd y$$
$$ \nplayer_b' + 1= x > y = \nplayer_b'-1$$
which is satisfied. \\

\textbf{Case 2: Correlated}: \\
In this setting, we will show that an agent of type $B$ can reduce their cost by moving from item 1 to item 2. \\
Before moving, type $B$'s cost is: 
$$\frac{\nplayer_a' + 1}{\nplayer_a' + \nplayer_b'+2} + \frac{\nplayer_a'-1}{\nplayer_a' + \nplayer_b'-2}$$
After moving, type $B$'s cost is: 
$$\frac{\nplayer_a' + 1}{\nplayer_a' + \nplayer_b'+1} + \frac{\nplayer_a'-1}{\nplayer_a' + \nplayer_b'-1}$$
Our goal is to show that: 
$$ \frac{\nplayer_a' + 1}{\nplayer_a' + \nplayer_b'+2} + \frac{\nplayer_a'-1}{\nplayer_a' + \nplayer_b'-2} > \frac{\nplayer_a' + 1}{\nplayer_a' + \nplayer_b'+1} + \frac{\nplayer_a'-1}{\nplayer_a' + \nplayer_b'-1}$$
Rearranging: 
$$ (\nplayer_a'-1) \cd \p{\frac{1}{\nplayer_a' + \nplayer_b'-2} - \frac{1}{\nplayer_a' + \nplayer_b'-1}} $$
$$> (\nplayer_a'+1) \cd \p{\frac{1}{\nplayer_a' + \nplayer_b'+1} - \frac{1}{\nplayer_a'+ \nplayer_b'+2}}$$
$$\frac{\nplayer_a'-1}{(\nplayer_a' + \nplayer_b'-2) \cd (\nplayer_a' + \nplayer_b'-1)} $$
$$> \frac{\nplayer_a'+1}{(\nplayer_a' + \nplayer_b'+1) \cd (\nplayer_a' + \nplayer_b'+2)}$$
For simplicity, let's rewrite this as $x = \nplayer_a', y = \nplayer_a' + \nplayer_b'$. Then, the quantity we're trying to prove becomes: 
$$\frac{x-1}{(y-2) \cd (y-1)} > \frac{x+1}{(y+1) \cd (y+2)}$$
$$(x-1) \cd (y+1) \cd (y+2) > (x+1) \cd (y-2) \cd (y-1)$$
Expanding: 
$$(x-1) \cd (y^2 + 3 \cd y + 2) > (x+1) \cd (y^2 -3\cd y + 2)$$
$$6 \cd x \cd y > 4 + 2 \cd y^2$$
$$3 \cd x \cd y > 2 + y^2$$
$$y \cd (3 \cd x - y) > 2 $$
Next, substituting back in for $x, y$ gives us: 
$$(\nplayer_a' + \nplayer_b') \cd (3 \cd \nplayer_a - \nplayer_a' - \nplayer_b')>2$$
$$(\nplayer_a' + \nplayer_b') \cd (2 \cd \nplayer_a - \nplayer_b')>2$$
Finally, we recall that we set $\nplayer_a' = \nplayer_b'+2$. Substituting in for this gives: 
$$(2\cd \nplayer_b'+2) \cd (2 \cd \nplayer_b' + 4-\nplayer_b)>2$$
$$(2\cd \nplayer_b'+2) \cd (\nplayer_b' + 4)>2$$
which is satisfied because $\nplayer_b'\geq 1$. 
\end{proof}

\meanclose*
\begin{proof}
In order to prove this result, we will show that any arrangement that is not \enquote{close} to proportional must have at least one agent that wishes to move. An arrangement is \enquote{close} to proportion if: 
$$\abs{a_i - \nplayer_a/\nitem} \leq 1, \abs{b_i - \nplayer_b/\nitem} \leq 1 \ \forall i \in [\nitem]$$
For any arrangement falling outside of these parameters, we will give a move that reduces cost for at least one player type. At a high level, this will involve finding a \enquote{large} item (with more agents) and a \enquote{small} item (with fewer agents). We will then pick whichever agent type is more represented in the larger item, and move exactly one agent to the smallest item. We will show that, almost always, this strictly reduces cost for the agent type that moves. In the cases where such a move would not strictly reduce cost, we show that all items must be \enquote{close} to proportional. 

First, we will consider all pairs of items $j, k$ such that $a_j + b_j \geq a_k + b_k$. One feature of this pair that we will consider is the \emph{gap in type prevalence}, which is given by: 

$$\text{ type A: } \frac{a_j}{a_j + b_j} - \frac{a_k}{a_k + b_k} \quad \text{ type B: }  \frac{b_j}{a_j + b_j} - \frac{b_k}{a_k + b_k}$$
Note that if one type has positive gap in type prevalence, then the other has negative gap in type prevalence, because: 
$$\frac{a_j}{a_j + b_j} - \frac{a_k}{a_k + b_k} $$
$$ = -1 + \frac{a_j}{a_j + b_j} + 1- \frac{a_k}{a_k + b_k} = \frac{-b_j}{a_j + b_j} + \frac{b_k}{a_k + b_k}$$
WLOG, we will assume that type $A$ has positive or 0 gap in type prevalence. If the gap in type prevalence is 0 (both items have exactly equal proportions of player types), then we will again WLOG assume that type $A$ makes up a larger share of players, or $\frac{a_j}{a_j + b_j} = \frac{a_k}{a_k + b_k} \geq \frac{b_j}{a_j + b_j} = \frac{b_k}{a_k + b_k}$. Given this assumption, we will show that players of type $A$ could always reduce its cost by moving a single agent from item $j$ to item $k$ (unless all items are \emph{close} to proportional).

\textbf{Costs:} 
Type $A$'s cost is: 
$$\frac{b_j}{a_j + b_j} + \frac{b_k}{a_k + b_k}$$
Its cost after moving a single agent from $a_j$ is given by: 
$$\frac{b_j}{a_j + b_j-1} + \frac{b_k}{a_k + b_k+1}$$
So, its cost decreases whenever: 
$$\frac{b_j}{a_j + b_j} + \frac{b_k}{a_k + b_k} -\frac{b_j}{a_j + b_j-1} - \frac{b_k}{a_k + b_k+1} >0$$
Or: 
\begin{equation}\label{eq:stableeq}
b_k \cd (a_j+ b_j) \cd (a_j + b_j -1) > b_j \cd (a_k + b_k) \cd (a_k + b_k +1)
\end{equation}
Equation \ref{eq:stableeq} is the central condition we will be studying in this proof. First, we will present several sufficient conditions for when  Equation \ref{eq:stableeq} is satisfied, so a player wishes to move. Then, we will show that whenever none of those sufficient conditions are satisfied, all items must be \enquote{close} to proportional. 

\textbf{Sufficient condition 1: items differ by at least 2, positive gap in type prevalence}
There are a few conditions where we can immediately see that Equation \ref{eq:stableeq} is satisfied. First, we divide this equation into two separate components. The first is given by: 
$$b_k \cd (a_j+ b_j) \geq b_j \cd (a_k + b_k)$$
\begin{equation}\label{eq:overprop}
\frac{b_k}{a_k + b_k} \geq \frac{b_j}{a_j + b_j}
\end{equation}
By prior reasoning, this is satisfied by the assumption that $\frac{a_j}{a_j + b_j} \geq \frac{a_k}{a_k + b_k}$ as given by type $A$ having positive gap in type prevalence. \\ 
Next, we will consider the second component of Equation \ref{eq:stableeq}: 
$$a_j + b_j -1 \geq a_k + b_k +1$$
which is satisfied exactly whenever: 
\begin{equation}\label{eq:more1}
a_j + b_j \geq a_k + b_k +2
\end{equation}
Note that this is \emph{not} required by how we selected items (all we require is $a_j + b_j \geq a_k + b_k$). However, in the event that both Equation \ref{eq:more1} holds, and either Equation \ref{eq:overprop} or Equation \ref{eq:more1} is satisfied strictly, then Equation \ref{eq:stableeq} hold strictly and type $A$ players have an incentive to move from item 1 to item 2. \\
Next, we will consider other cases where Equation \ref{eq:more1} and \ref{eq:overprop} are not both strictly satisfied, and yet a type $A$ player still wishes to move from item $j$ to $k$. 

\textbf{Sufficient condition 2: Larger and smaller item have same number of players, $a_j \geq a_k + 2$} First, we will consider the case where $a_j + b_j = a_k + b_k$ (the two items have equal numbers of players).  Equation \ref{eq:stableeq} simplifies to 
$$b_k \cd (a_j + b_j -1) > b_j \cd (a_k + b_k +1)$$
We will show that so long as $a_j \geq a_k + 2$, then this inequality must hold. Note that this condition implies $b_k \geq b_j + 2$. Substituting in for $a_j + b_j = a_k + b_k$ and $b_k \geq b_j + 2$ tells us that we wish to show: 
$$(b_j +2) \cd (a_j + b_j -1) > b_j \cd (a_j + b_j + 1)$$
$$b_j \cd (a_j + b_j) - b_j + 2 \cd (a_j + b_j) - 2> b_j \cd (a_j + b_j) + b_j$$
$$ - b_j + 2 \cd (a_j + b_j) - 2> b_j$$
$$2 \cd a_j + 2 \cd b_j > 2 \cd b_j + 2$$
$$a_j \geq 1$$
Because $a_j \geq a_k + 2$, we know that $a_j \geq 1$, as desired. \\
When this sufficient condition is not met, we have that $a_j \leq a_k + 1$. Because we assume that type $A$ has positive gap in type prevalence, this means that $a_j = a_k + 1$ and $b_j = b_k-1$ in order to ensure $a_j + b_j = a_k + b_k$. 

\textbf{Sufficient condition 3: Larger item has exactly 1 more agent than smaller,  $a_j \geq a_k + 2$} 
Next, we will consider the case where the larger item has exactly 1 more player than the smaller item, or $a_j + b_j = a_k + b_k + 1$. Then, the condition for a player of type A wanting to move (Equation \ref{eq:stableeq}) becomes: 
$$b_k \cd (a_k + b_k + 1) \cd (a_k + b_k) > b_j \cd (a_k + b_k) \cd (a_k + b_k + 1)$$
which simplifies down to: 
$$b_k > b_j$$
If we have $a_j \geq a_k + 2$, then we must have: 
$$a_k + b_k = a_j + b_j -1 > a_k + 2 + b_j -1 = a_k + b_j + 1$$
which implies that: 
$$b_k \geq b_j + 1$$
as desired.\\ 
When this sufficient condition is not met, we have that $a_j \leq a_k + 1$. Because we assume that type $A$ has positive gap in type prevalence, this means that $a_j = a_k + 1$ and $b_j = b_k$ in order to ensure $a_j + b_j = a_k + b_k+1$. \\
\textbf{In absence of sufficient conditions, all items are \enquote{close} to proportional}\\
Finally, we will consider the case where none of the sufficient conditions hold. By taking the negative of previous cases, we know that: 
\begin{enumerate}
    \item By sufficient condition 1, there is no pair of items $j, k$ with $a_j + b_j > a_k + b_k + 2$. 
    \item By sufficient condition 1 for any pair of items with $a_j + b_j = a_k + b_k + 2$, we must have zero gap in type prevalence. 
    \item By sufficient condition 2, for any pair of items with $a_j + b_j = a_k + b_k$, we must have $a_j = a_k + 1, b_j = b_k -1$. 
    \item By sufficient condition 3, for any pair of items with $a_j + b_j = a_k + b_k+1$, we must have $a_j = a_k + 1, b_j = b_k$. 
\end{enumerate}
Each of these cases (except for item 2) results in an arrangement where for any pair of items $j, k$, we have
$$\abs{a_j - a_k} \leq 1 \quad \abs{b_j - b_k} \leq 1$$
If we are also able to show that this holds for item 2 (when $a_j + b_j = a_k + b_k +2$), then we will know that \emph{any} pair of items differs in $a, b$ by at most 1. This would mean that every item has either $a_i = a^*$ or $a^*+1$ and $b_i = b^*$ or $b^*+1$. The average number of players per items must lie somewhere between $[a^*, a^*+1], [b^*, b^*+1]$, which means that every item is \enquote{close} to proportional.  

We will conclude our proof by showing that examining item 2 and showing that we must also have $\abs{a_j - a_k} \leq 1,  \abs{b_j - b_k} \leq 1$. 

In this case, we assume that $a_j + b_j = a_k + b_k + 2$. Note that by our analysis of Case 1, a type $A$ player wishes to move automatically if Equation \ref{eq:overprop} holds strictly. In this case, we will  assume that it does \emph{not} hold strictly and derive additional conditions on $a_j, b_j, a_k, b_k$. Equation \ref{eq:overprop} becomes an equality: 
$$\frac{b_k}{a_k + b_k} = \frac{b_j}{a_j + b_j}$$
This implies that items $j, k$ have exactly the same proportion of type $A$ and type $B$ players (a zero gap in type prevalence). This means that we can write $a_j = c \cd b_j, a_k = c \cd b_k$. In how we selected items, we assumed WLOG that in the case that the gap in type prevalence is 0, players of type $A$ make up a greater share of players on both items. This implies that  $c\geq 1$. 
We can then write: 
$$a_j + b_j = a_k + b_k +2$$
Substituting in for $a_j = c\cd b_j$ gives: 
$$b_j \cd (c+1) = b_k \cd (c+1) + 2$$
$$b_j = b_k + \frac{2}{c+1}$$
The only value $c$ for which this results in an integer $b_j, b_k$ is $c=1$, which implies $b_j = b_k + 1$, $a_k = b_k$ and $a_j = b_j = a_k + 1$, or $\abs{a_j - a_k} \leq 1,  \abs{b_j - b_k} \leq 1$ as desired. 
\end{proof}

\meanmisallocate*
\begin{proof}
This is a direct consequence of Theorem \ref{thrm:meanclose}:
\begin{equation*} \sum_{i \in [\nitem]}\abs{\nplayer_a/\nitem - a_i} + \abs{\nplayer_b/\nitem - b_i} \leq \sum_{i \in [\nitem]} 2 = 2 \cd \nitem \end{equation*}
\end{proof}

\medianmisallocate*
\begin{proof} 
We rely on the stable arrangements found in the proof of Lemma \ref{lem:medianstableequal}. For $\nplayer_a + \nplayer_b \geq 2 \cd \nitem$, the arrangement starts by placing 2 of each of type $B$ players on each item, up until either we run out of type $B$ players or reach the $\nitem-1$st item. We then place the type $A$ on the remaining items. In the case that $\frac{\nplayer_b}{2} \geq \nitem-1$, this implies that a single item has $\nplayer_a$ type $A$ agents. As compared to an even split of players over items, this means that misallocated effort is lower bounded by 
$\nplayer_a - \frac{\nplayer_a}{\nitem} = \nplayer_a \cd \p{1- \frac{1}{\nitem}}$. 
Because we have required $\nplayer_a \geq \nplayer_b$ and $\nitem \geq 2$, this bound is at least: $0.5 \cd \nplayer \cd \p{1 - 0.5} = 0.25 \cd \nplayer$.
\end{proof}

\twostable*
\begin{proof} 
Consider two players with bias $\beta_a, \beta_b$. The player of type $A$ get lower cost when competing in separate items whenever: 
\begin{align*}
 \abs{f(\{\beta_a, \beta_b\}) - \beta_a} +\costun & \leq 0 + \abs{f(\beta_b) - \beta_a}\\
 \costun & \leq 0.5 \cd \abs{\beta_b - \beta_a}
\end{align*}
On the other hand, the player of type $B$ gets lower cost when competing in separate items whenever: 
\begin{align*}
\abs{f(\{\beta_a, \beta_b\}) - \beta_b} + \costun  & \leq \abs{f(\beta_a) - \beta_b} + 0\\
0.5 \cd \abs{\beta_b - \beta_a} + \costun & \leq \abs{\beta_b - \beta_a}
\end{align*}
These terms are exactly equivalent, proving the result. 
\end{proof}

\medianunstable*

\begin{proof}
Set parameters as follows: 

\begin{itemize}
    \item $\nitem$ items. 
    \item $N$ players, with given biases: 1 with bias $1$, and $N-1$ with bias $-0.5$. 
    \item Cost of 0.3 for leaving an item empty. 
\end{itemize}

First, we will note that any arrangement where 3 or more players are labeling a single item (item $i$) must fail to be stable. Because $N<\nitem$, we know that every arrangement must leave at least one item empty, for a cost of 0.3. 

\begin{itemize}
    \item First, suppose that the item in question has only players with bias $-0.5$ labeling it. Then, any player can leave item $i$ and instead label an empty item $j$. The median of item $i$ remains the same and the median of $j$ becomes $-0.5$ (eliminating empty cost), which reduces the cost for players with bias $-0.5$. 
    \item Next, consider the other case where the item in question is labeled by the player with bias 1. We know that there are at least 2 players with bias $-0.5$, since there are at least 3 players total. If the player with bias 1 leaves $i$ and instead labels $j$, then the median remains -0.5. However, again, this now eliminates the penalty for leaving $j$ empty. 
\end{itemize}
Now, we consider all arrangements where there are no more than 2 players labeling an item. We similarly know that if there are 2 players of bias $-0.5$ labeling an item, then they can reduce their cost by having one leave to label an empty item (maintaining the same median and eliminating empty cost). Therefore, we can focus on how the player with bias $1$ is arranged. We will find that considering only the first 3 items will be sufficient. For conciseness, we will use the notation $\{1\}$ $\{-0.5\}, \{-0.5\}$ to mean that three players are all labeling a single item each (one player with bias 1, two players with bias $-0.5$). 

\begin{itemize}
    \item $\{1\}$ $\{-0.5\}, \{-0.5\}$ goes to $\{1, -0.5\}, \{\}, \{-0.5\}$. Originally, the -0.5 players have cost $1.5 + 0 + 0 = 1.5$. The $-0.5$ player wishes to move, which produces medians $0.25, -0.5, -0.5$, which gives it cost become $0.75 + 0.3 + 0 = 0.75 <1.5$. 
    \item $\{1, -0.5\}, \{\}, \{-0.5\}$ goes to $\{1, -0.5, -0.5\}, \{\}, \{\}$. The $-0.5$ player wishes to move again. Before, the medians are $0.25, \text{n/a}, -0.5$, giving it a cost of $0.75 + 0.3 + 0 = 1.05$. After it moves, the medians are $-0.5, \text{n/a}, \text{n/a}$, which gives it a cost of $0 + 0.3\cd 2 = 0.6$.
    \item $\{1, -0.5, -0.5\}, \{\}, \{\}$ goes to $\{-0.5, -0.5\}, \{1\}, \{\}$. The 1 player wishes to move. Before, the medians are $-0.5, \text{n/a}, \text{n/a}$, which gives it a cost of $1.5 + 0.3 \cd 2 = 2.1$. After it moves, the medians are $-0.5, 1, \text{n/a}$, which gives it a cost of $1.5 + 0 + 0.3 = 1.8 < 2.1$. 
    \item $\{-0.5, -0.5\}, \{1\}, \{\}$ goes to $\{-0.5,\}, \{1\}, \{-0.5\}$. The $-0.5$ player wishes to move. Before, the medians are $-0.5, 1, \text{n/a}$, which gives it cost $0 + 1.5 + 0.3 = 1.8$. After it moves, the medians are $-0.5, 1, -0.5$, which gives it cost $0 + 1.5 + 0 = 1.5 < 1.8$.
\end{itemize}
\end{proof}

\meanunstable*

\begin{proof}
Set parameters as follows: 

\begin{itemize}
    \item $\nitem$ items.  
    \item $N$ players, with given biases: 1 with bias $1$, and $N-1$ with bias $-0.5$. 
    \item Cost of $\costun \in (0.125, 0.25)$ for leaving an item empty. 
\end{itemize}

Because $N<\nitem$, we know that every arrangement must leave at least one item empty, for a cost of $\costun$. First, we will consider the case where at least one item $j$ has 3 or more agents on it. 

\begin{itemize}
    \item First, suppose that the item with 3 or more players labeling it has only players with bias $-0.5$ labeling it. Then, any player can leave item $i$ and instead label an empty item $j$. The mean of item $i$ remains the same and the mean of $j$ becomes $-0.5$ (eliminating empty cost), which reduces the cost for players with bias $-0.5$. 
    \item Next, consider the other case where the item in question is labeled by the player with bias 1. Say that there are $a_i$ players with bias -0.5. We know that there $a_i \geq 2$ since there are at least 3 players total. 
    \begin{itemize}
        \item Currently, the cost to players of type $A$ is: 
    $$\abs{\frac{-0.5 \cd a_i + 1}{a_i + 1} + 0.5} + \costun $$
    $$= \abs{\frac{-0.5 \cd a_i + 1 + 0.5 \cd a_i + 0.5}{a_i + 1}} + \costun =  \frac{1.5}{a_i+1} + \costun$$
    If one player of type $A$ goes to label the empty item, the cost becomes: 
    $$\abs{\frac{-0.5 \cd (a_i-1) + 1}{a_i} + 0.5} $$
    $$= \abs{\frac{-0.5 \cd a_i + 0.5 + 1 + 0.5 \cd a_i}{a_i}} = \frac{1.5}{a_i}$$
    The type $A$ player wishes to move whenever: 
    $$ \frac{1.5}{a_i+1} + \costun > \frac{1.5}{a_i} $$
    $$ \costun > 1.5 \cd \frac{1}{a_i \cd (a_i+1)} \geq 1.5 \cd \frac{1}{2 \cd 3} = 0.25 $$
    Note that the lefthand side is decreasing in $a_i$, which is where the lower bound comes from. Note that if $a_i \geq 3$, the lower bound becomes $\costun \geq \frac{3}{2} \cd \frac{1}{3 \cd 4} = \frac{1}{8} = 0.125$. 
    \item Currently, the cost to players of type $B$ is: 
    $$\abs{\frac{-0.5 \cd a_i + 1}{a_i + 1} -1} + \costun $$
    $$= \abs{\frac{-0.5 \cd a_i + 1 - \cd a_i -1}{a_i + 1}} + \costun =  \frac{1.5 \cd a_i}{a_i+1} + \costun$$
    If one player of type $B$ goes to label the empty item, the cost becomes: 
    $$\abs{-0.5-1}  = 1.5$$
    The type $B$ player wishes to move whenever: 
    $$ \frac{1.5 \cd a_i}{a_i+1} + \costun > 1.5 $$
    $$ \costun > \frac{1.5}{a_i+1} \geq \frac{1.5}{3} = 0.5$$
    \item We are in the scenario where $j$ with $a_i \geq 2, b_i = 1$ (the sole type $B$ player is on item $i$). Suppose that there exists another item $j$ with $a_j = 1$. Then, the type $B$ player can reduce its cost by moving from item $i$ to item $j$ whenever: 
    $$\abs{\frac{-0.5 \cd a_i +1}{a_i+1} - 1} + \abs{-0.5 - 1} $$
    $$> \abs{-0.5 - 1} + \abs{\frac{-0.5 + 1}{2} - 1}$$
    Or: 
    $$\abs{\frac{-0.5 \cd a_i +1}{a_i+1} - 1}  > \abs{\frac{-0.5 + 1}{2} - 1}$$
    For $a_i\geq 2$, the term inside the absolute value is given by: 
    $$\abs{\frac{-0.5 \cd a_i +1}{a_i+1} - 1} = \frac{0.5 \cd a_i -1}{a_i +1}$$
    which is increasing in $a_i$, meaning that the inequality is always satisfied. 
    \end{itemize}
\end{itemize}

Now, we consider all arrangements where there are no more than 2 players labeling an item. We similarly know that if there are 2 players of bias $-0.5$ labeling an item, then they can reduce their cost by having one leave to label an empty item (maintaining the same mean and eliminating empty cost). Therefore, we can focus on how the player with bias $1$ is arranged. We will find that considering only the first 3 items will be sufficient. For conciseness, we will use the notation $\{1\}$ $\{-0.5\}, \{-0.5\}$ to mean that three players are all labeling a single item each (one player with bias 1, two players with bias $-0.5$). 

\begin{itemize}
    \item $\{1\}$ $\{-0.5\}, \{-0.5\}$ goes to $\{1, -0.5\}, \{\}, \{-0.5\}$. Originally, the -0.5 players have cost $1.5 + 0 + 0 = 1.5$. The $-0.5$ player wishes to move, which produces means $0.25, \emptyset, -0.5$. (The second item has undefined mean, given that it has no labels). This gives the $-0.5$ players cost $0.75 + \costun + 0$. This is lower if $0.75 + \costun < 1.5$, which is satisfied for $\costun < 0.75$. 
    \item $\{1, -0.5\}, \{\}, \{-0.5\}$ goes to $\{1, -0.5, -0.5\}, \{\}, \{\}$. The $-0.5$ player wishes to move again. Before, the means are $0.25, \emptyset, -0.5$, giving it a cost of $0.75 + \costun + 0$. After it moves, the means are $0, \emptyset, \emptyset$, which gives it a cost of $0.5+ 2 \cd \costun$. This is lower if $0.5 + 2 \cd \costun < 0.75 + \costun$, or  $\costun < 0.25$. 
    \item $\{1, -0.5, -0.5\}, \{\}, \{\}$ goes to $\{-0.5, -0.5\}, \{1\}, \{\}$. The 1 player wishes to move. Before, the means are $-0.5, \emptyset, \emptyset$, which gives it a cost of $1.5 + \costun \cd 2$. After it moves, the means are $-0.5, 1, \text{n/a}$, which gives it a cost of $1.5 + 0 + \costun < 1.5 + 2 \cd \costun$, which is always satisfied.  
    \item $\{-0.5, -0.5\}, \{1\}, \{\}$ goes to $\{-0.5,\}, \{1\}, \{-0.5\}$. The $-0.5$ player wishes to move. Before, the means are $-0.5, 1,\emptyset$, which gives it cost $0 + 1.5 + \costun$. After it moves, the medians are $-0.5, 1, -0.5$, which gives it cost $0 + 1.5 + 0 = 1.5 < 1.5 + \costun$, which is always satisfied. 
\end{itemize}
These cases cover all possible arrangements, showing that there always exists a agent that wishes to move. 
\end{proof}

\meanstablen*
\begin{proof}
Because we have assumed $\nplayer_a \geq \nplayer_b$, the only possible cases are $\nplayer_a =3, \nplayer_b=0$ (where there is always a stable arrangement with each player on a separate item) and $\nplayer_a=2, \nplayer_b=1$, which the remainder of this proof will analyze.

There are four possible arrangements of 3 agents where two of them have the same bias:
$$\{a\}, \{ a\}, \{b\} \quad \{a, a\}, \{b\} \quad \{a, b\}, \{a\} \quad \{a, a, c\}$$
We will use the notation 
$$\{a, b\}, \{\}\rightarrow_a\{a\}, \{b\}$$
to mean that a player of type $A$ gets strictly lower cost when it leaves an item it is competing in with type $B$ to compete in an empty item. We will say $\{a, b\}, \{\}\not \rightarrow_a\{a\}, \{b\}$
when the player of type $A$ does not get strictly lower cost in $\{a\}, \{b\}$ and say $\{a\}, \{b\} \rightarrow_a  \{a, b\}, \{\}$ when the player of type $A$ gets strictly lower cost in $\{a, b\}$. 
Often, to be concise, we will drop $\{\}$ terms and simply write $\{a, b\} \rightarrow_a \{a\}, \{b\}$
as the second item is left empty on the lefthand side. 

First, we claim that: 
\begin{equation}\label{eq:equivalent}
\{a, b\} \rightarrow_a \{a\}, \{b\} \quad \Leftrightarrow\quad \{a, b\} \rightarrow_b \{a\}, \{b\} 
\end{equation}
This is proved in the process of proving Lemma \ref{lem:twostable}, which showed that two players simultaneously either wish to be labeling the same item or labeling separate items, and the condition for when they wish to label separate items is whenever $\costun > 0.5 \cd \abs{\beta_b - \beta_a}$. 
Next, we know immediately that: 
$$\{a, a\} \{b\} \rightarrow_{a} \{a\}, \{a\}, \{b\}$$
and 
$$\{a, a\} \{b\} \rightarrow \{a, b\}, \{a\}$$
Because both players of type $a$ have exactly the same type, they can always reduce their cost by either leaving to label another item, or moving to the same item as type $b$. 

Next, we derive conditions for when different players would prefer to leave different arrangements. Specifically, 
$$\{a, a, b\} \rightarrow_a \{a, b\}, \{a\} $$
occurs when: 
$$\frac{1}{3} \cd \{\beta_b - \beta_a\} + \costun > 0.5 \cd \abs{\beta_a - \beta_b} + 0 $$
$$c > \frac{1}{6} \cd \abs{\bias_a - \bias_b} $$
Additionally, we can show that: 
$$\{a, a, b\} \rightarrow_b \{a, a\}, \{b\}$$
occurs whenever: 
$$\frac{2}{3} \cd \abs{\bias_a - \bias_b} + \costun > \abs{\bias_a - \bias_b} + 0$$
$$\costun > \frac{1}{3} \cd \abs{\bias_a - \bias_b}$$

Next, we can analyze some cases: \\
\textbf{Case 1:} $\costun \in \left(0, \frac{1}{6} \cd \abs{\bias_a - \bias_b}\right]$: \\
If this is the case, then $\{a, a, b\}$ is stable: from our prior analysis, we know that neither players of type $a$ or $b$ can reduce their cost by leaving, so having all players be together is stable. 

\noindent \textbf{Case 2:} $\costun \in \left(\frac{1}{6} \cd \abs{\bias_a - \bias_b}, 0.5 \cd \abs{\bias_a - \bias_b}\right]$:\\
If this is the case, then $\{a, b\}, \{a\}$ is stable. From our prior analysis, we know $\costun$ is high enough that $\{a, a, b\} \rightarrow_{a} \{a, b\}, \{a\}$. However, it is also low enough that $\{a\}, \{a\}, \{b\} \rightarrow_{a, b} \{a, b\} \{a\}$. Therefore, $\{a, b\}, \{a\}$ is stable. 

\noindent  \textbf{Case 3:} $\costun  > 0.5 \cd \abs{\bias_a - \bias_b}$\\
In this case, $\costun$ is high enough that $\{a\} \{a\} \{b\}$ is stable: both players of type $a$ and $b$ prefer it to being together. 

Taken together, these three cases cover all possible settings, and show that a stable arrangement always exists. 

\end{proof}

\cuNsmallNE*
\begin{proof}
First, we can immediately see that no player $i$ would wish to label an item $j$ that is currently empty. In doing so, they would simply be moving from item $i$ to item $j$ and keeping their cost exactly the same. 

The other case is showing that no player labeling item $i$ would wish to label an item $j$ that already has a label on it. In doing so, that player would be leaving item $i$ empty: from Lemma \ref{lem:nounlabeled}, we know that this would increase player costs exactly whenever $\costun \geq 0.5 \cd \abs{\bias_a - \beta_{b}}$. 
\end{proof}

\section{Proofs for Appendix \ref{app:unevenweight}}\label{app:proofsunevenweight}

\nounlabeledw*
\begin{proof}
In other words, we want to ensure that for any player $\type$, the cost of competing in any item $i$ (leaving 
any other item $j$ empty) is higher than the cost of leaving item $i$ to competing in item $j$ alone:  

$$\weight_i \cd \abs{f(S_i) - \bias_{\type}} + \weight_j \cd \costun \geq \weight_i \cd \abs{f(S_i \setminus \bias_{\type})}+ \weight_j \cd 0$$

First, we will analyze the case with mean outcome function. For an agent of type $A$, the cost it experiences from an item with $a$ agents of type $A$ and $b$ agents of type $B$ is given by: 
$$\abs{\frac{a \cd \bias_a + b \cd \bias_b}{a + b} - \bias_a} = \abs{\frac{a \cd \bias_a + b \cd \bias_b - (a+b) \cd \bias_a}{a+b}}$$
$$= \frac{b}{a+b} \cd \abs{\bias_a - \bias_b} $$
By identical reasoning, the cost to an agent of type $B$ is: 
$$ \frac{a}{a+b} \cd \abs{\bias_a - \bias_b} $$
Note that this construction immediately tells us that agent strategy must be independent of biases. For every item with $a+b>0$, an agent's cost is solely a function of $a$ and $b$, scaled by a constant factor of $\abs{\bias_a - \bias_b}$. 

Next, we will work on determining $\costun$ so that no item will ever be left empty. Again, we wish to show that: 
$$\weight_i \cd \abs{f(S_i) - \bias_{\type}} + \weight_j \cd \costun \geq \weight_i \cd \abs{f(S_i \setminus \bias_{\type})} + \weight_j \cd 0$$
The worst-case scenario (where this inequality is hardest to satisfy) occurs when the empty item has very low weight, relative to the item it is competing in ($\weight_i >> \weight_j$). 
If we consider a reference player of type $A$, with $a$ players of type $A$ on item $i$ and $b$ of type $B$, then this becomes: 
$$\weight_i \cd \abs{\beta_a - \beta_b} \cd \frac{b}{a+b}+ \weight_j \cd \costun \geq \weight_i \cd \abs{\beta_a - \beta_b} \cd \frac{b}{a+b-1} + 0$$
$$\weight_j \cd \costun \geq \weight_i \cd \abs{\beta_a - \beta_b} \cd b \cd \p{ \frac{1}{(a+b-1)} -\frac{1}{(a+b)}} $$

$$\weight_j \cd \costun \geq \weight_i \cd \abs{\beta_a - \beta_b} \cd b \cd  \frac{b}{(a+b-1)\cd (a+b)} $$

Next, we'll upper bound the term on the RHS. The RHS shrinks with $a$, so we can lower bound this by setting $a =1$. We know that $a\geq 1$ because we have assumed there is at least one player of type $A$ that wishes to move from the given item. The condition simplifies to: 
$$\costun \cd \weight_j \geq \weight_i \cd  \abs{\beta_a - \beta_b}  \cd \frac{1}{1+b}$$
We similarly must have $b\geq 1$ (or else we're just modeling a single player of type $A$ move from one item to another). If we set $b=1$, then this goes to 1/2, which gives the desired condition. Intuitively, this tells us that we need that the cost of leaving something unlabeled is greater than half the distance between the two biases. 

Next, we will consider the case where the outcome function is equal to the median. Again, we wish to show that: 
$$\weight_i \cd \abs{f(S_i) - \bias_{\type}} + \weight_j \cd \costun \geq \weight_i \cd \abs{f(S_i \setminus \bias_{\type})} + \weight_j \cd 0$$
We will analyze multiple different cases for the potential outcome functions $f(S_i)$ and $\abs{f(S_i \setminus \bias_{\type})}$. Again, we will look from the perspective of a type $A$ agent on item $i$ considering moving to another item $j$ that is empty: 

\begin{itemize}
    \item $f(S_i) = \bias_a$ and $\abs{f(S_i \setminus \bias_{\type})} = \bias_a$. The inequality becomes: 
    $$\weight_i \cd 0 + \weight_j \cd \costun \geq \weight_i \cd 0 + \weight_j \cd 0$$
    which is satisfied automatically. 
    \item $f(S_i) = \bias_a$ and $\abs{f(S_i \setminus \bias_{\type})} = \frac{1}{2} \cd (\bias_a + \bias_b)$. The inequality becomes: 
    $$ \weight_i \cd 0 + \weight_j \cd \costun \geq 0.5 \cd \weight_i \cd \abs{\bias_a - \bias_b} + \weight_j \cd 0$$
    $$\costun \geq 0.5 \cd \frac{\weight_i}{\weight_j} \cd  \abs{\bias_a - \bias_b}$$
    \item $f(S_i) = \frac{1}{2} \cd (\bias_a + \bias_b) $ and $\abs{f(S_i \setminus \bias_{\type})} = \bias_b$. The inequality becomes: 
    $$0.5 \cd \weight_i \cd  \abs{\bias_a - \bias_b}  + \weight_j \cd \costun \geq \weight_i \cd  \abs{\bias_a - \bias_b}+ \weight_j \cd 0$$
    $$\costun \geq 0.5 \cd \frac{\weight_i}{\weight_j} \cd \abs{\beta_a - \beta_b}$$
    \item $f(S_i) = \frac{1}{2} \cd \bias_b$ and $\abs{f(S_i \setminus \bias_{\type})} = \bias_b$. The inequality becomes: 
    $$\weight_i \cd \abs{\bias_a - \bias_b} + \weight_j \cd \costun \geq \weight_i \cd \abs{\bias_a - \bias_b} + \weight_j \cd 0$$
    which is always satisfied. 
\end{itemize}
The only inequality that isn't automatically satisfied is $\costun \geq 0.5 \cd \frac{\weight_i}{\weight_j} \cd  \abs{\bias_a - \bias_b}$, which is the same inequality as for the mean outcome function, and satisfied by the same reasoning. 

Finally, we will show that agents' incentives are independent of biases $\bias_a, \bias_b$.

This proof comes almost immediately. \\
For mean outcome function, we can immediately see from the agent cost that agent strategy must be independent of biases. For every item with $a+b>0$, an agent's cost is solely a function of $a$ and $b$, scaled by a constant factor of $\abs{\bias_a - \bias_b}$. 

For median allocation, for any agent with bias $\bias_a$ and any set $S_i$, the outcome function has three possible values: $\bias_a$ (giving cost to agent $a$ of 0), $0.5 \cd (\bias_a + \bias_b)$ in the event of ties (giving cost to agent $a$ of $0.5 \cd \abs{\bias_a - \bias_b}$), or $\bias_b$ (giving cost to agent $a$ of $\abs{\bias_a - \bias_b}$). All of these are simply scaled values of $\abs{\bias_a - \bias_b}$, which means incentives are independent of the values $\bias_a, \bias_b$. 
\end{proof}

\exmednotstab*

\begin{proof}
In this proof, we will use the notation
$$\{a, b\}, \{a, b\}, \{a\},\{b\}$$
 to illustrate that there are 4 items in total, two of which with exactly 2 players on it (one of each type), and two items with exactly one player (one item with a single type $A$ player, and one item with a single type $B$ player). 
Lemma \ref{lem:medianstableequal} would suggest that the arrangement 
$$\{a, b\}, \{a, b\}, \{a\},\{b\}$$
would be stable. If $w_1 = w_2$, then this would be satisfied: none of the singleton players could move, and none of the $\{a, b\}$ players wish to move. If a player of type $A$from item 2 moved to label item 1, they would go from experiencing cost 
$$0.5 \cd \weight_1  + 0.5 \cd \weight_2 + \weight_3$$
to experiencing cost: 
$$\weight_2 + \weight_3$$
which is identical when $\weight_1=\weight_2$. However, when $\weight_1 > \weight_2$, then this move does reduce cost, meaning that the original arrangement was unstable.

We will further show that no possible arrangement is stable. Note that items $2, 3, 4$ have identical weight and are therefore interchangable. Because $\nplayer_a + \nplayer_b = 6$ and $\nitem = 4$, if no item can be left empty, then no item can have more than 3 players. We will consider each case based on the maximum number of agents on an item.

\begin{itemize}
    \item[Case 1] No more than 3 agents on one item:  The arrangement $\{a\}, \{b,b, b\}, \{a\}, \{a\}$ goes to $\{a\}, \{b, b\}, \{a, b\}, \{a\}$ because type $B$'s cost goes from $\weight_1 + \weight_3 + \weight_4$ to $\weight_1 + 0.5 \cd \weight_3 + \weight_4$. (Type $B$ continues to win on item 2 and now also ties on item 3). Identical reasoning holds for any symmetrical case with 3 players of the same type on any item. 
    \item[Case 2] No more than 3 agents on one item: The arrangement $\{a, a, b\}, \{b\}, \{a\},\{b\}$ moves to $\{a, a\}, \{b\}, \{a, b\},\{b\}$: the type $B$ player on item 1 can move to compete in item 3, which takes its cost from $\weight_1 + \weight_3$ to $\weight_1 + 0.5 \cd \weight_3$. (Type $B$ continues to lose item 1 and now ties on item 3). Identical reasoning holds for the symmetric case with type $A$, as well as for arrangements where player $i>1$ has 3 players: for example, the arrangement $\{a\}, \{a, a, b\}, \{a\},\{b\}$ goes to $\{a\}, \{a, a\}, \{a, b\},\{b\}$ because the type $B$ players cost goes from $\weight_1 + \weight_2 + \weight_3$ to $\weight_1 + \weight_2 + 0.5 \cd \weight_3$. 
    \item[Case 3] No more than 2 agents on an item: Given the arrangement $\{a, b\}, \{a, b\}, \{a\},\{b\}$, as described above, either type $A$ or type $B$ players from item 1 would wish to move to compete in item 1. This is the case analyzed at the beginning of the proof, where we showed this is unstable for $\weight_1 > \weight_2$. 
    \item[Case 4] No more than 2 agents on an item: The arrangement $\{a, a\}, \{b\}, \{a, b\},\{b\}$ goes to $\{a\}, \{a, b\}, \{a, b\},\{b\}$ because player type $A$'s cost goes from $\weight_2 + 0.5 \cd \weight_3 + \weight_4$ to $0.5 \cd (\weight_2 + \weight_3) + \weight_4$. Similarly, any arrangement with exactly 2 agent of the same type on an item must leave at least one other item with exactly 1 agent of the opposite type, and will be unstable for the same reasons. 
\end{itemize}

This description is exhaustive: If there is a maximum of 3 agents on an item, then they must all be of the same type (Case 1) or two of the same type, and one of another type (Case 2). If there is a maximum of 2 agents on an item, then if one item has exactly one agent of each type, then another item must have exactly one agent of each type (Case 3) or must have exactly 2 agents of one type (Case 4). 
\end{proof}

\medianstablerelaxequal*

\begin{proof}
This proof is very similar to that of Lemma \ref{lem:medianstableequal}, so we will simply note key differences in the analysis. Throughout, we will assume that $\weight_i \geq \weight_j$ for $i < j$ (the items are organized in descending order of weight). 

First, we will suppose that either: 
$$\nplayer_a + \nplayer_b \geq 2 \cd \nitem +1 \text{ or } \nplayer_a + \nplayer_b = 2 \cd \nitem \text{ with } \nplayer_a, \nplayer_b \text{ even } $$

Again, we will allocate agents according to Algorithm \ref{alg:allocatemore}. This arrangement is stable by identical reasoning: for each item, each type wins by at least 2, so no single agent acting alone can change the outcome, regardless of weights. 

\begin{itemize}
    \item For every item where there is more than 1 labeler, type $A$ and type $B$ tie exactly. 
    \item Every other item has exactly one labeler, which can be either type $A$ or type $B$. 
    \item Any item where players tie has higher or equal weight to any item where a single player wins.
\end{itemize}

Next, we consider the case where $\nplayer_a + \nplayer_b \leq 2 \cd \nitem$. We will show constructively that it is possible to create an arrangement satisfying the following criteria: 
\begin{itemize}
    \item For every item where there is more than 1 agent, type $A$ and type $B$ tie exactly. 
    \item Every other item has exactly one agent, which can be either type $A$ or type $B$.
    \item Any item where players tie has higher or equal weight to any item where a single player wins.
\end{itemize}

This type of construction is stable by the following reasoning: 
\begin{itemize}
        \item None of the single agents can move (they can't leave an item empty). 
        \item No agent on an item with multiple agents wishes to leave - they would go from winning a single item and losing another, to losing on that item and tying on another, which gives equal costs when weights are equal. 
        \item No agent on an item with multiple agents wishes to leave: they would go from tying on item $i$ and losing on item $j$ to losing on item $i$ and tying on item $j$. This gives higher or equal cost when $w_i \geq w_j$, which is satisfied by construction. 
\end{itemize}

Note that we require $\weight_1 = \weight_2$ by the example given in Lemma \ref{lem:exmednotstab}. 
\end{proof}

\twostablew*
\begin{proof} 
WLOG, assume that the items have weights in descending order, so $\weight_1 \geq \weight_2 \geq \ldots \weight_{\nitem}$. We will consider two players with bias $\beta_a, \beta_b$ and show that a stable arrangement always involves both players on item 1 or them split over item 1 and item 2. 

We will first characterize players' incentives to label different items. 

Player with bias $\bias_a$ prefers labeling item 1 with the other player of $\bias_b$ (as opposed to leaving to label item 2) whenever: 
\begin{align*}
\weight_1 \cd \abs{f(\{\beta_a, \beta_b\}) - \beta_a} +\weight_2 \cd \costun & \leq \weight_1 \cd \abs{f(\beta_b) - \beta_a} + \weight_2 \cd 0 \\
\weight_1 \cd \abs{0.5 \cd \p{\beta_a + \beta_b} - \beta_a} +\weight_2 \cd \costun & \leq  \weight_1 \cd \abs{\beta_b - \beta_a}\\
\weight_1 \cd 0.5 \cd \abs{\beta_a- \beta_b} +\weight_2 \cd \costun & \leq  \weight_1 \cd \abs{\beta_b - \beta_a}\\
  \weight_2 \cd \costun & \leq 0.5 \cd \weight_1 \cd  \abs{\beta_b - \beta_a}
\end{align*}
By identical reasoning, the player of bias $\bias_b$ prefers labeling item 1 with the player of bias $\bias_a$ (as opposed to leaving to label item 2) exactly whenever: 
$$\weight_2 \cd \costun \leq 0.5 \cd \weight_1 \cd \abs{\bias_b - \bias_a}$$
These terms are exactly equivalent, so when $\weight_2 \cd \costun \leq 0.5 \cd \weight_1 \cd \abs{\bias_b - \bias_a}$, having both players be labeling item 1 is stable: no player wishes to leave to label item 2, and because $\weight_2 \geq \weight_i \ \forall i \ne 1$, they also don't want to wish to leave to label any other item. 

Conversely, when $\weight_2 \cd \costun > 0.5 \cd \weight_1 \cd \abs{\bias_b - \bias_a}$, then it is stable to have 1 player on item 1 and 1 player on item 2. The player on item 2 doesn't wish to leave it to label item 1 (by our setting of the parameters). Similarly, we know that the player on item 1 doesn't wish to go label item 2, because it would prefer being separate to labeling item 1 together, and $\weight_2 \leq \weight_1$ so its preference for labeling item 2 is lower than its preference for labeling item 1. Finally, we know that neither player would prefer to go label another item, because $\weight_i\leq \weight_1, \weight_2 \ \forall i>2$. 
\end{proof}

\meanbigunstablew*
\begin{proof}
Given $\nplayer_a = 1, \nplayer_b=1$ and cost $\costun$ satisfying Lemma \ref{lem:nounlabeledw}, we can always find a stable arrangement by putting exactly one item on each of the fronts (no agent will wish to change which item they are labeling because doing so would require leaving an item unlabeled). Therefore, the rest of the proof will handle the case where $\nplayer_a + \nplayer_b >2$. 

Specifically, we will set weights $\weight_1, \weight_2$ such that the following arrangement will be stable: having the first item have $(\nplayer_a-1, \nplayer_b)$ agents of type $A$ and $B$ respectively, and the second item have $(1, 0)$ agents of type $A$ and $B$ respectively. We will next show that this arrangement is stable. 

First, we know that the single player of type $A$ on item 2 never wishes to leave and label item 1: doing so would involve leaving item 2 unlabeled, which by Lemma \ref{lem:nounlabeledw} no agent wishes to do. Additionally, we know that no agent of type $A$ on item 1 wishes to leave to label item 2: doing so wouldn't change the outcome on item 2 (which is already solely made up of agents of type $A$) and would strictly increase the cost on item 1 (because fewer agents of type $A$ would be on item 1). Therefore, the only non-trivial case is considering whether agents of type $B$ would prefer to leave item 1 and label item 2 instead. 

An agent of type $B$ would \emph{not} prefer to leave item 1 when doing so would increase their cost: 
$$\frac{\nplayer_a-1}{\nplayer_a-1 + \nplayer_b} \cd \weight_1 + 1 \cd \weight_2 < \frac{\nplayer_a-1}{\nplayer_a + \nplayer_b-2}\cd \weight_1 + 0.5 \cd \weight_2$$
$$0.5 \cd \weight_2 < \weight_1 \cd (\nplayer_a-1) \cd \p{\frac{1}{\nplayer_a + \nplayer_b-2} - \frac{1}{\nplayer_a + \nplayer_b-1}}$$
$$\frac{\weight_2}{\weight_1} < 2 \cd \frac{\nplayer_a-1}{(\nplayer_a + \nplayer_b-2) \cd (\nplayer_a + \nplayer_b-1)}$$
Because $\nplayer_a + \nplayer_b >2$ in this case, the denominator is strictly greater than 0. Thus, setting the weights $\weight_1, \weight_2$ such that $\weight_1 + \weight_2 = 1$ and $\frac{\weight_2}{\weight_1}$ satisfies the above ratio ensures that no agent can strictly reduce their cost by changing which item they label. 
\end{proof}

\meanbowlw*

\begin{proof}
In order to show this, we will look at a relaxed (continuous) version of this problem. In this relaxed version of the problem, we will assume that, instead of agents coming in integer units, they can be allocated fractionally across items.  

The cost to player of type $A$ is given by 
$\sum_{i=1}^{\nitem}\weight_i \cd \frac{b_i}{a_i+b_i}$. 
The partial derivative of this with respect to $a_i$ is: 
$$\frac{\partial}{\partial a_i} \sum_{i=1}^{\nitem}\weight_i \cd \frac{b_i}{a_i+b_i} = -\weight_i \cd \frac{b_i}{(a_i + b_i)^2}$$
In order for us to be at a stable point, we need that the derivative wrt $a_i$ must be equal to the derivative wrt $a_j$ for any $i \ne j$ and $\sum_{i=1}^{\nitem} a_i = \nplayer_a$ (identical criteria for $B$). If the first is not satisfied, then type $A$ could strictly reduce its cost by changing its allocations between items, and if the second is not satisfied, then type $A$ could again reduce its cost by allocating more agents onto items. We can enforce both of these by setting: 
$$\weight_i \cd \frac{b_i}{(a_i + b_i)^2}= \weight_{\nitem} \cd \frac{\nplayer_b - \sum_{i \ne \nitem}b_i}{(\nplayer_a - \sum_{i \ne \nitem}a_i + \nplayer_b - \sum_{i\ne \nitem} b_i)^2}$$
$$\frac{b_i}{\nplayer_b - \sum_{i \ne \nitem}b_i}= \frac{\weight_{\nitem}}{\weight_i} \cd \frac{(a_i + b_i)^2}{(\nplayer_a - \sum_{i \ne \nitem }a_i + \nplayer_b - \sum_{i\ne \nitem} b_i)^2}$$
By identical reasoning, this means that for player $B$, the derivative of its cost wrt $b_i$ is:  
$$\frac{a_i}{\nplayer_a - \sum_{i \ne \nitem}a_i}= \frac{\weight_{\nitem}}{\weight_i} \cd \frac{(a_i + b_i)^2}{(\nplayer_a - \sum_{i \ne \nitem }a_i + \nplayer_b - \sum_{i\ne \nitem} b_i)^2}$$
Setting these equal to each other gives: 
$$\frac{a_i}{\nplayer_a - \sum_{i \ne \nitem}a_i} = \frac{b_i}{\nplayer_b - \sum_{i \ne \nitem}b_i}$$
$$\nplayer_b - \sum_{i \ne \nitem}b_i = \frac{b_i}{a_i} \cd \p{\nplayer_a - \sum_{i \ne \nitem}a_i}$$
Substituting in for the overall derivative gives us that: 
\begin{align*}
 & \ \frac{a_i}{\nplayer_a - \sum_{i \ne \nitem}a_i}\\  & = \frac{\weight_{\nitem}}{\weight_i} \cd \frac{(a_i + b_i)^2}{(\nplayer_a - \sum_{i \ne \nitem }a_i + \nplayer_b - \sum_{i\ne \nitem} b_i)^2}\\
& = \frac{\weight_{\nitem}}{\weight_i} \cd \frac{(a_i + b_i)^2}{\p{\nplayer_a - \sum_{i \ne \nitem }a_i + \frac{b_i}{a_i} \cd \p{\nplayer_a - \sum_{i \ne \nitem}a_i}}^2}\\
& = \frac{\weight_{\nitem}}{\weight_i} \cd \frac{(a_i + b_i)^2}{\p{\nplayer_a - \sum_{i \ne \nitem}a_i}^2 \cd \p{1 + \frac{b_i}{a_i}}^2}\\
& = \frac{\weight_{\nitem}}{\weight_i} \cd \frac{a_i^2 \cd (a_i + b_i)^2}{\p{\nplayer_a - \sum_{i \ne \nitem}a_i}^2 \cd \p{a_i + b_i}^2}\\
& = \frac{\weight_{\nitem}}{\weight_i} \cd \frac{a_i^2 }{\p{\nplayer_a - \sum_{i \ne \nitem}a_i}^2}\\
\end{align*}
Cancelling common factors on each side of the equality gives: 
$$\frac{\nplayer_a - \sum_{i \ne \nitem}a_i}{\weight_{\nitem}} = \frac{a_i}{\weight_i}$$
Or $\frac{a_{\nitem}}{\weight_{\nitem}} \cd \weight_i = a_i$. 
We can apply the equality $\sum_{i=0}^{\nitem} a_i = \nplayer_a$ to obtain: 
$$\sum_{i=0}^{\nitem}\frac{a_{\nitem}}{\weight_{\nitem}} \cd \weight_i = \nplayer_a \quad \Rightarrow \quad a_{\nitem} = \weight_{\nitem} \cd \nplayer_a$$
which implies $a_i = \weight_i \cd \frac{\weight_{\nitem} \cd \nplayer_a}{\weight_{\nitem}} = \weight_i \cd \nplayer_a$
and $b_i  = \nplayer_b \cd \weight_i$ symmetrically. This tells us that the only possible equilibrium is when both $A$ and $B$ players have set $a_i, b_i$ exactly proportional to the weight each item has.
\end{proof}

\meanstablenw*
\begin{proof}
There are five possible arrangements of 3 agents:
$$\{a\}, \{ b\}, \{c\} \quad \{a, b\}, \{c\} \quad \{a, c\}, \{b\} \quad \{b, c\}, \{a\} \quad \{a, b, c\}$$
We will use the notation 
$$\{a, b\}, \{\}\rightarrow_a\{a\}, \{b\}$$
to mean that a player of type $A$ gets strictly lower cost when it leaves an item it is competing in with type $B$ to compete in an empty item. We will say $\{a, b\}, \{\}\not \rightarrow_a\{a\}, \{b\}$
when the player of type $A$ does not get strictly lower cost in $\{a\}, \{b\}$ and say $\{a\}, \{b\} \rightarrow_a  \{a, b\}, \{\}$ when the player of type $A$ gets strictly lower cost in $\{a, b\}$. 
Often, to be concise, we will drop $\{\}$ terms and simply write $\{a, b\} \rightarrow_a \{a\}, \{b\}$
as the second item is left empty on the lefthand side. 

First, we claim that: 
\begin{equation}\label{eq:equivalent}
\{a, b\} \rightarrow_a \{a\}, \{b\} \quad \Leftrightarrow\quad \{a, b\} \rightarrow_b \{a\}, \{b\} 
\end{equation}
Note that the lefthand term of Equation \ref{eq:equivalent} is satisfied whenever: 
$$\abs{f(\{\beta_a, \beta_b\}) - \beta_a} + \costun \leq 0 + \abs{f(\beta_b) - \beta_a}$$
$$0.5 \cd \abs{\beta_b - \beta_a} + \costun \leq \abs{\beta_b - \beta_a}$$
The righthand term of Equation \ref{eq:equivalent} is satisfied whenever: 
$$\abs{f(\{\beta_a, \beta_b\}) - \beta_b} + \costun \leq \abs{f(\beta_a) - \beta_b} + 0$$
$$0.5 \cd \abs{\beta_b - \beta_a} + \costun \leq \abs{\beta_b - \beta_a}$$
These terms are exactly equivalent, showing Equation \ref{eq:equivalent} is satisfied. We will use this result in the following analysis. 

Next, we will show that if any two pairs of agents prefer competing in the same item (as opposed to competing in different items), then it is the two agents with most dissimilar biases (e.g. $i, j$ given $\max_{i, j \in \{a, b, c\}} \abs{\beta_i - \beta_j}$). In order to show this, we will write out the cost that an agent of type $A$ gets for every possible arrangement. Because agents are interchangable, this gives costs for every other agent in different arrangements, up to relabeling. 

\begin{align*}\label{eq:costs3}
 & \{a\}, \{b\}, \{c\}    \quad   & \abs{\beta_b - \beta_a} + \abs{\beta_c - \beta_a}\\
& \{a, b\}, \{c\}   \quad  & 0.5 \cd \abs{\beta_b - \beta_a} + \abs{\beta_c - \beta_a} + \costun\\
 & \{a, c\}, \{b\}   \quad &  0.5 \cd \abs{\beta_a - \beta_c} + \abs{\beta_b - \beta_a} + \costun\\
& \{b, c\}, \{a\}  \quad  & 0.5 \cd \abs{\beta_b+ \beta_c - 2 \cd \beta_a} + \costun\\
& \{a, b, c\}  \quad  & \frac{1}{3} \cd \abs{\beta_b+ \beta_c - 2 \cd \beta_a} + 2 \cd \costun
\end{align*}
The second and third lines in the equations above show us that 
$$\{a, b\}, \{c\}  \rightarrow_a \{a, c\}, \{b\}$$
exactly whenever: 
$$0.5 \cd \abs{\beta_b - \beta_a} + \abs{\beta_c - \beta_a} + \costun >  0.5 \cd \abs{\beta_a - \beta_c} + \abs{\beta_c - \beta_a} + \costun$$
$$ \abs{\beta_c - \beta_a}  > \abs{\beta_b - \beta_a} $$
which is whenever $\beta_c$ is further from $\beta_a$ than $\beta_b$ is. In this analysis, WLOG we will say that $\beta_b, \beta_c$ are the most dissimilar (because agents are interchangable, this is true up to relabeling). Note that this implies that $\{a, c\}, \{b\}$ and $\{a, b\}, \{c\}$ can never be stable, because both $b, c$ would prefer being together to being with $a$. 

Next, we can analyze some cases: \\
\textbf{Case 1:}\\
If the $\{b, c\}, \{a\} \rightarrow_{b, c} \{a\}, \{b\}, \{c\}$, then $\{a\}, \{b\}, \{c\}$ is stable. \\
We know from our prior reasoning that if the two most dissimlar players do not wish to compete in the same item, then no other pair of players do. This means that the arrangement with one agent per item is stable.

\textbf{Case 2:}\\
If: 
$$\{a\}, \{b\}, \{c\} \rightarrow_{b, c} \{b, c\}, \{a\} \text{ and } \{b, c\}, \{a\} \not \rightarrow_{a} \{a, b, c\}$$
then $\{b, c\}, \{a\}$ is stable. This is because the $\{b, c\}$ pair doesn't wish to split up and the type $A$ player doesn't wish to join. 

\textbf{Case 3:}\\
If no player wishes to leave $\{a, b, c\}$ to compete in an empty item, then $\{a, b, c\}$ is stable. \\
This is true simply by the statement: if no player wishes to move, then this arrangement must be stable. 

\textbf{Case 4:}\\
In this case, we will assume that: 
$$\{a\}, \{b\}, \{c\} \rightarrow_{b, c} \{b, c\}, \{a\} \text{ and } \{b, c\}, \{a\} \rightarrow_{a} \{a, b, c\}$$
We will also assume that at least one of $b, c$ wishes to leave $\{a, b, c\}$. WLOG, we will assume it is $c$ (up to relabeling). 
Note that this implies none of these arrangements can be stable: 
$$\{a\}, \{b\}, \{c\} \quad \{b, c\}, \{a\} \quad \{a, b, c\}$$
Additionally, by our prior reasoning, if players $b, c$ prefer competing together to separate, then we know that no other pair can be stable: 
$$\{a, b\}, \{c\} \quad \{a, c\}, \{b\}$$
The cycle that is given by this set of patterns is given by: 
$$\{b, c\}, \{a\} \rightarrow_a \{a, b, c\} \rightarrow_c \{a, b\}, \{c\} \rightarrow_{b, c} \{a\}, \{b, c\}$$
We will show that this type of cycle cannot exist. 

Note that the first move occurs when: 
$$\abs{\frac{\beta_b + \beta_c}{2} - \beta_a} > \abs{\frac{\beta_b + \beta_c + \beta_a}{3} - \beta_a} + \costun$$
$$\abs{\frac{\beta_b + \beta_c - 2\beta_a}{2}} > \abs{\frac{\beta_b + \beta_c + \beta_a - 3 \beta_a}{3}} + \costun$$
$$\frac{1}{6} \cd \abs{\beta_b + \beta_c - 2 \beta_a} > \costun$$
The second move occurs when:
$$\abs{\frac{\beta_a + \beta_b + \beta_c}{3} - \beta_c} + \costun > \abs{\frac{\beta_a + \beta_b}{2} - \beta_c}$$
$$\abs{\frac{\beta_a + \beta_b -2 \beta_c}{3}} + \costun > \abs{\frac{\beta_a + \beta_b - 2 \beta_c}{2}}$$
$$\costun > \frac{1}{6} \cd \abs{\beta_a + \beta_b - 2 \beta_c}$$
And the third move occurs when: 
$$0.5 \cd \abs{\beta_a - \beta_b} + \abs{\beta_c - \beta_b} > \abs{\beta_a - \beta_b} + 0.5 \cd \abs{\beta_c - \beta_b}$$
$$0.5 \cd \abs{\beta_b - \beta_c}> 0.5 \cd \abs{\beta_a - \beta_b}$$
Putting these together, this implies: 
\begin{equation}\label{eq:firsteqn4}
\abs{\beta_a + \beta_b - 2 \beta_c} < \abs{\beta_b + \beta_c - 2 \beta_a}
\end{equation}
\begin{equation}\label{eq:secondeqn4}
\abs{\beta_a - \beta_b}< \abs{\beta_b - \beta_c}
\end{equation}
We will show that these pairs of inequalities cannot be simultaneously satisfied. Set: 
$$x = \beta_b - \beta_a \quad y = \beta_c - \beta_b$$
Then, Equation \ref{eq:secondeqn4} becomes
$$\abs{-x}< \abs{-y}$$
$$\abs{x} < \abs{y}$$
For Equation \ref{eq:firsteqn4} note that the LHS is given by: 
$$\abs{\beta_a - \beta_c + \beta_b - \beta_c} = \abs{-x-y-y} = \abs{x + 2y}$$
where the equality is because: 
$$\abs{-\beta_b + \beta_a - 2\beta_c + 2 \beta_b} = \abs{\beta_a + \beta-b - 2 \beta_c}$$
as desired. The RHS is given by: 
$$\abs{2 \cd x + y}$$
Which is given by: 
$$\abs{2 \cd (\beta_b - \beta_a) + \beta_c - \beta_b} = \abs{\beta_b + \beta_c - 2 \cd \beta_a} $$
So together, Equations \ref{eq:firsteqn4} and \ref{eq:secondeqn4} become:

\begin{equation}\label{eq:firsteqn4p}
\abs{x + 2 \cd y} < \abs{2 \cd x + y}
\end{equation}
\begin{equation}\label{eq:secondeqn4p}
\abs{x} < \abs{y}
\end{equation}

Note that if both $x, y$ are positive, Eq. \ref{eq:firsteqn4p} becomes: 
$$x + 2 \cd y < 2 \cd x + y$$
$$y < x$$
which directly contradicts Eq. \ref{eq:secondeqn4p} ($\abs{x} = x < \abs{y} = y$). Similarly, if both $x, y$ are negative, then Eq. \ref{eq:firsteqn4p} becomes: 
$$-x - 2 \cd y < -2 \cd x - y$$
$$-y< -x$$
which contradicts Eq. \ref{eq:secondeqn4p} which has become
$$-x < -y$$
Finally, let's consider the case where one of $x, y$ is positive and one is negative. WLOG, say $x < 0 < y$. Then, we know that the LHS of Eq. \ref{eq:firsteqn4p} is positive (because $ \abs{x}< \abs{y}$ so $\abs{x} < 2\abs{y}$, which makes the overall $x + 2 \cd y$ positive). This means that the RHS of Eq. \ref{eq:firsteqn4p} must be positive. If $2 \cd x + y>0$, then this becomes: 
$$x + 2 \cd y < 2 \cd x + y$$
$$ y < x$$
which is violated because $x < 0 < y$. If $2 \cd x + y < 0$, then this becomes: 
$$x + 2 \cd y < -2 \cd x -y$$
$$3 \cd x + 3 \cd y < 0$$
which again violates $x < 0 < y$, with $\abs{x} < \abs{y}$ given by Eq. \ref{eq:secondeqn4p}. 

Taken together, these cases cover all possible combination of stable relations, showing that there can never be a cycle (and must always have a stable point). 
\end{proof}

\cuNsmallNEw*
\begin{proof}
For simplicity, we will say that item $i$ is labeled by player with bias $\beta_i$. In order for this arrangement to be stable, we need to know that no player $i$ wishes to label a different item $j$. 

First, we can immediately see that no player $i$ would wish to label an item $j$ that is currently empty. By the way we have arranged players, we know that $\weight_i \leq \weight_j$ for $i \geq j$, so labeling that item would have less impact on overall cost. 

Next, we need to show that no player $i$ wishes to label a different item $j$ that already has a label on it. 

No player $j \in [\nplayer]$ wishes to move to another item $k$ that already has a label so long as so long as:
 $$\sum_{i =1}^{\nplayer} \weight_i \cd \abs{\beta_i - \beta_j} + \costun \cd \sum_{i =\nplayer+1}^{\nitem} \weight_i $$
 $$\leq \sum_{i \in [\nplayer], i \ne k, j} \weight_i\cd \abs{\beta_i- \beta_j} + 0.5 \cd \weight_k \cd \abs{\beta_k- \beta_j} + \costun \cd \sum_{i=\nplayer+1}^{\nitem} \weight_i + \weight_j \cd \costun$$
 which implies: 
 $$\weight_k \cd \abs{\beta_k - \beta_j} \leq 0.5 \cd \weight_k \cd \abs{\beta_j- \beta_k}+ \weight_j \cd \costun$$
 $$0.5 \cd \weight_k \cd \abs{\beta_k - \beta_j} \leq  \weight_j \cd \costun $$
 $$0.5 \cd \weight_k \cd \abs{\beta_k - \beta_j}\leq \frac{\weight_j}{\weight_k} \cd \costun $$
 The RHS is lower bounded by setting $\weight_j = \weight_{\nitem}, \weight_k = \weight_0$. The LHS is upper bounded by plugging in for $\beta_0$ (the largest magnitude) and $\beta_{i^*}$ (the next largest magnitude with an opposite sign), meaning that the LHS will be equal to $\abs{\beta_0} + \abs{\beta_{i}}$ rather than $\abs{\beta_0} - \abs{\beta_{i}}$). This gives us: 
 $$0.5 \cd \abs{\beta_0 - \beta_{i*}}\leq \frac{\weight_\nitem}{\weight_0} \cd \costun$$
\end{proof}

\end{document}